\renewcommand\footnotetextcopyrightpermission[1]{} 
\date{}
\title{Handling Higher-Order Effects}
\author{Cas van der Rest}
\affiliation{
  \institution{Delft University of Technology}
  \country{The Netherlands}
}
\email{c.r.vanderrest@tudelft.nl}
\author{Jaro Reinders}
\affiliation{
  \institution{Delft University of Technology}
  \country{The Netherlands}
}
\email{j.s.reinders@tudelft.nl}
\author{Casper Bach Poulsen}
\affiliation{
  \institution{Delft University of Technology}
  \country{The Netherlands}
}
\email{c.b.poulsen@tudelft.nl}
   \newcommand\SkipToFmtEnd{}%
   \newcommand\EndFmtInput{}%
   \long\def\SkipToFmtEnd#1\EndFmtInput{}%
\newcommand\ReadOnlyOnce[1]{\@ifundefined{#1}{\@namedef{#1}{}}\SkipToFmtEnd}
\DeclareFontFamily{OT1}{cmtex}{}
\DeclareFontShape{OT1}{cmtex}{m}{n}
  {<5><6><7><8>cmtex8
   <9>cmtex9
   <10><10.95><12><14.4><17.28><20.74><24.88>cmtex10}{}
\DeclareFontShape{OT1}{cmtex}{m}{it}
  {<-> ssub * cmtt/m/it}{}
\DeclareFontShape{OT1}{cmtt}{bx}{n}
  {<5><6><7><8>cmtt8
   <9>cmbtt9
   <10><10.95><12><14.4><17.28><20.74><24.88>cmbtt10}{}
\DeclareFontShape{OT1}{cmtex}{bx}{n}
  {<-> ssub * cmtt/bx/n}{}
\newcommand{\Conid}[1]{\mathit{#1}}
\newcommand{\Varid}[1]{\mathit{#1}}
\newcommand{\anonymous}{\kern0.06em \vbox{\hrule\@width.5em}}
\newcommand{\plus}{\mathbin{+\!\!\!+}}
\renewcommand{\leq}{\leqslant}
\newdimen\mathindent\mathindent\leftmargini}%
\def\resethooks{%
  \global\let\SaveRestoreHook\empty
  \global\let\ColumnHook\empty}
\newcommand*{\savecolumns}[1][default]%
  {\g@addto@macro\SaveRestoreHook{\savecolumns[#1]}}
\newcommand*{\restorecolumns}[1][default]%
  {\g@addto@macro\SaveRestoreHook{\restorecolumns[#1]}}
\newcommand*{\aligncolumn}[2]%
  {\g@addto@macro\ColumnHook{\column{#1}{#2}}}
\newcommand{\onelinecommentchars}{\quad-{}- }
\newcommand{\commentbeginchars}{\enskip\{-}
\newcommand{\commentendchars}{-\}\enskip}
\newcommand{\visiblecomments}{%
  \let\onelinecomment=\onelinecommentchars
  \let\commentbegin=\commentbeginchars
  \let\commentend=\commentendchars}
\newcommand{\invisiblecomments}{%
  \let\onelinecomment=\empty
  \let\commentbegin=\empty
  \let\commentend=\empty}
\newlength{\blanklineskip}
\newcommand{\hsindent}[1]{\quad}
\let\hspre\empty
\let\hspost\empty
\newcommand{\hsnewpar}[1]%
  {{\parskip=0pt\parindent=0pt\par\vskip #1\noindent}}
\newcommand{\hscodestyle}{}
\newcommand{\sethscode}[1]%
  {\expandafter\let\expandafter\hscode\csname #1\endcsname
   \expandafter\let\expandafter\endhscode\csname end#1\endcsname}
   \let\hspre\(\let\hspost\)%
   \let\hspre\(\let\hspost\)%
\newcommand{\plainhs}{\sethscode{plainhscode}}
\def\codeframewidth{\arrayrulewidth}
   \let\endoflinesave=\\
   \framedhslinecorrect\endoflinesave{.5ex}\hline
\newcommand{\framedhslinecorrect}[2]%
  {#1[#2]}
\def\column##1##2{}%
   \newcommand\>[1][]{}\newcommand\<[1][]{}\newcommand\\[1][]{}%
   \def\fromto##1##2##3{##3}%
\let\orighscode=\hscode
   \let\origendhscode=\endhscode
   \def\endhscode{\def\hscode{\endgroup\def\@currenvir{hscode}\\}\begingroup}
\def\hscode{\endgroup\def\@currenvir{hscode}}}%
   \global\let\hscode=\orighscode
   \global\let\endhscode=\origendhscode}%
\DeclareRobustCommand\longtwoheadmapsto {\vdash\joinrel\!\relbar\joinrel\twoheadrightarrow}
\definecolor{opcolor}{RGB}{0,60,120}
\definecolor{anncolor}{RGB}{75,75,125}
\definecolor{hlcolor}{RGB}{230,230,230}
\newcommand{\susp}[1]{\ensuremath{\{#1\}}}
\newcommand{\ann}[2]{\ensuremath{#2^{{\color{anncolor}#1}}}}
\newcommand{\hl}[1]{\colorbox{hlcolor}{\ensuremath{#1}}}
\newcommand{\inE}[1]{\ensuremath{E^{ℓ}[#1]}}
\newcommand{\bsubst}[2]{\ensuremath{\overline{#1/#2}}}
\definecolor{opcolor}{RGB}{0,60,120}
\definecolor{kwcolor}{RGB}{0,120,90}
\newcommand{\op}[1]{{\color{opcolor}\textsf{#1}}}
\newcommand{\kw}[1]{{\color{opcolor}\textbf{\textsf{#1}}}}
\newcommand\superequiv{\mathrel{\rlap{\raisebox{\fontdimen22\textfont2}{$=$}}\raisebox{-0.5\fontdimen22\textfont2}{$ = $}}}
\renewcommand\hscodestyle{%
   \setlength\leftskip{0.5cm}%
}
\begin{document}

\begin{abstract}
  Algebraic effect handlers is a programming paradigm where programmers can declare their own syntactic operations, and modularly define the semantics of these using effect handlers.
  However, we cannot directly define algebraic effect handlers for many higher-order operations (or higher-order effects)---i.e., operations that have computations as parameters.
  Examples of such higher-order effects include common programming features, such as try-catch exception handlers, function abstraction, and more.
  In this paper we present a new kind of effect handler that addresses this shortcoming.
  Our effect handler approach is closely related to previous work on scoped effect handlers, which also supports higher-order effects.
  A key difference is that our effect handlers make it easy to understand separate (higher-order) effects as separate concerns, since effects do not interact.
  In contrast, effect interaction is the default with scoped effect handlers.
  While separate concerns is the default with our handlers, it is also possible to define handlers where effects interact.
\end{abstract}

\maketitle

\section{Introduction}


Algebraic effects and handlers~\cite{DBLP:conf/esop/PlotkinP09} is an approach to separately define and modularly compose the syntax and semantics of effectful operations.
Using the approach, programmers can declare their own syntax of operations and use these to write concise and effectful programs.
For example, by declaring a \ensuremath{\op{rand}} operation that represents generating a random number, and \ensuremath{\op{push}} and \ensuremath{\op{pop}} operations that represent pushing and popping elements from a stack, we can write a small program that represents mutating the state of an underlying stack:
\begin{hscode}\SaveRestoreHook
\column{B}{@{}>{\hspre}l<{\hspost}@{}}%
\column{3}{@{}>{\hspre}l<{\hspost}@{}}%
\column{E}{@{}>{\hspre}l<{\hspost}@{}}%
\>[3]{}\Varid{prog}_{0}\mathrel{=}\mathbf{let}\;\Varid{x}\mathrel{=}\op{rand}\;\mathbf{in}\;\op{push}\;\Varid{x};\op{pop};\op{pop}{}\<[E]%
\ColumnHook
\end{hscode}\resethooks
We can give meaning to this program by writing two separate algebraic effect handlers; e.g., a handler \ensuremath{\Varid{hRand}} that defines the effects of the \ensuremath{\op{rand}} operation, and a handler \ensuremath{\Varid{hStack}} that defines the effects of the \ensuremath{\op{push}} and \ensuremath{\op{pop}} operations.
By applying these handlers one after the other to our small program, we get as output the value that is on top of the the initial stack of the program:
\begin{hscode}\SaveRestoreHook
\column{B}{@{}>{\hspre}l<{\hspost}@{}}%
\column{3}{@{}>{\hspre}l<{\hspost}@{}}%
\column{E}{@{}>{\hspre}l<{\hspost}@{}}%
\>[3]{}\Varid{hStack}\;[\mskip1.5mu \mathrm{1}\mskip1.5mu]\;(\Varid{hRand}\;\Varid{prog}_{0}) \longtwoheadmapsto \mathrm{1}{}\<[E]%
\ColumnHook
\end{hscode}\resethooks
Many algebraic effect handlers, such as \ensuremath{\Varid{hStack}} and \ensuremath{\Varid{hRand}}, can be defined in a way that the order we apply handlers in is irrelevant.
For example, we can safely swap the order of \ensuremath{\Varid{hStack}} and \ensuremath{\Varid{hRand}} in above to give the same final result, since \ensuremath{\Varid{hStack}} and \ensuremath{\Varid{hRand}} have separate concerns.
This support for separation of concerns makes algebraic effect handlers attractive as a programming paradigm, and as a paradigm for building (domain-specific) languages from modular language components whose syntax and semantics is modularly defined in terms of effect handlers.
However, using algebraic effect handlers alone, it is challenging to define language components for some common effects found in programming languages, such as exception handling.

Consider again our program above: what happens if we \ensuremath{\op{pop}} from an empty stack?
One reasonable expectation is that \ensuremath{\op{pop}} should raise an exception that we can catch in our program.
To this end, we wish to introduce a new \ensuremath{\op{catch}} operation that we use as follows to catch a potential exception raised by \ensuremath{\op{pop};\op{pop}}:
\begin{hscode}\SaveRestoreHook
\column{B}{@{}>{\hspre}l<{\hspost}@{}}%
\column{3}{@{}>{\hspre}l<{\hspost}@{}}%
\column{E}{@{}>{\hspre}l<{\hspost}@{}}%
\>[3]{}\Varid{prog}_{1}\mathrel{=}\mathbf{let}\;\Varid{x}\mathrel{=}\op{rand}\;\mathbf{in}\;\op{push}\;\Varid{x};\op{catch}\;\susp{\op{pop};\op{pop}}\;\susp{\mathrm{0}}{}\<[E]%
\ColumnHook
\end{hscode}\resethooks
However, \ensuremath{\op{catch}} is a \emph{higher-order operation}; i.e., an operation that has computations as parameters.
Many higher-order operations, such as \ensuremath{\op{catch}}, are not \emph{algebraic}, so we cannot directly declare such operations and define them in terms of an algebraic effect handlers.

An alternative way to declare and define \ensuremath{\op{catch}} is to use \emph{scoped effect handlers}~\cite{DBLP:conf/haskell/WuSH14}.
With scoped effects, we can define a handler \ensuremath{\Varid{hCatch}} for \ensuremath{\op{catch}}.
Using that effect handler, what should be the result of running the following program if the initial stack is empty?
\begin{hscode}\SaveRestoreHook
\column{B}{@{}>{\hspre}l<{\hspost}@{}}%
\column{3}{@{}>{\hspre}l<{\hspost}@{}}%
\column{E}{@{}>{\hspre}l<{\hspost}@{}}%
\>[3]{}\Varid{prog}_{2}\mathrel{=}\Varid{prog}_{1};\op{pop}{}\<[E]%
\ColumnHook
\end{hscode}\resethooks
With scoped effects, the answer depends on what order you run handlers in.
Assuming \ensuremath{\Varid{hCatch}} returns a \ensuremath{\Conid{Just}} wrapped value if no exception propagates to the top-level and \ensuremath{\Conid{Nothing}} otherwise, and assuming that \ensuremath{\Varid{hRand}} generates the number \ensuremath{\mathrm{1337}}, these are the results of nesting the \ensuremath{\Varid{hCatch}} and \ensuremath{\Varid{hStack}} handlers in different orders:
\begin{hscode}\SaveRestoreHook
\column{B}{@{}>{\hspre}l<{\hspost}@{}}%
\column{3}{@{}>{\hspre}l<{\hspost}@{}}%
\column{E}{@{}>{\hspre}l<{\hspost}@{}}%
\>[3]{}\Varid{hCatch}\;(\Varid{hStack}\;[\mskip1.5mu \mskip1.5mu]\;(\Varid{hRand}\;\Varid{prog}_{2})) \longtwoheadmapsto \Conid{Just}\;\mathrm{1337}{}\<[E]%
\\
\>[3]{}\Varid{hStack}\;[\mskip1.5mu \mskip1.5mu]\;(\Varid{hCatch}\;(\Varid{hRand}\;\Varid{prog}_{2})) \longtwoheadmapsto \Conid{Nothing}{}\<[E]%
\ColumnHook
\end{hscode}\resethooks
The reason for this difference is that there is \emph{effect interaction} between the \ensuremath{\Varid{hCatch}} and \ensuremath{\Varid{hStack}} handlers.
By running \ensuremath{\Varid{hStack}} before \ensuremath{\Varid{hCatch}}, causes an exception being raised during evaluation of the first sub-computation in a \ensuremath{\op{catch}} operation to ``snap back'' to the state it was in before the \ensuremath{\op{catch}}.

The example above illustrates how scoped effects provide integrated support for \emph{effect interaction}, such as the interaction between the stateful stack and the exception catching effects.
This integrated support for effect interaction makes it possible to concisely define different semantics.
But it also precludes defining and understanding exception handling and other non-algebraic higher-order effects as separate concerns.

In this paper, we introduce \ensuremath{\Varid{λ}^{\Varid{hop}}}, a core calculus for \emph{handlers of higher-order effects}, that makes it possible to declare and handle higher-order operations, such as \ensuremath{\op{catch}}, in a way that we can understand their semantics as a concern that is separate from other effects.
That is, in a way that operations, such as \ensuremath{\op{catch}}, does not interact with other effects.

The key difference between scoped effect handlers and the handlers of higher-order effects in \ensuremath{\Varid{λ}^{\Varid{hop}}}, is that our handlers treat sub-computations, such as \ensuremath{\Varid{m}_{1}} and \ensuremath{\Varid{m}_{2}} in \ensuremath{\op{catch}\;\Varid{m}_{1}\;\Varid{m}_{2}} as ``delayed'' computations whose operations are only handled once the handler of \ensuremath{\op{catch}} forces the sub-computations to run.
In contrast, scoped effects treat \ensuremath{\Varid{m}_{1}} and \ensuremath{\Varid{m}_{2}} as ``scoped'' computations that we may pre-apply handlers to, before the handler of \ensuremath{\op{catch}} forces the sub-computations to run.
Thus, with our handlers, effects generally have a ``global'' interpretation, in contrast to the ``scoped'' interpretation of scoped effects.
However, as we illustrate in \cref{sec:approach}, handlers in \ensuremath{\Varid{λ}^{\Varid{hop}}} can also be used to encode a ``scoped'' effect semantics.

We make the following technical contributions:

\begin{itemize}
\item We illustrate in \cref{sec:approach} how common algebraic and higher-order effects can be defined using handlers of higher-order effects in \ensuremath{\Varid{λ}^{\Varid{hop}}}, and how to encode handlers with a scoped effect interpretation using handlers of higher-order effects.
\item We present (in \cref{sec:calculus}) \ensuremath{\Varid{λ}^{\Varid{hop}}}---a core calculus and type discipline for handlers of higher-order effects.  The paper is also accompanied by a prototype interpreter (attached as anonymous supplemental material) that encodes the reduction rules of the core calculus.
\item We formalize in \cref{sec:meta-theory} what it means for handlers to have separate concerns, and provide sufficient criteria for proving that a given handler provides separation of concerns.
\item We prove in \cref{sec:soc-handlers} that handlers for common algebraic and higher-order effects provides separation of concerns, and illustrate how certain handlers inherently interact with other effects.
\end{itemize}

\section{Programming with Handlers of (Higher-Order) Effects}
\label{sec:approach}

This section gives an informal overview of our approach and shows how it compares with closely related existing solutions.
We postpone a formal discussion of dynamic semantics and types to Section~\ref{sec:calculus}.


%
%

\subsection{Handling State}



As a first example we consider how to handle the state effect.
The example serves two purposes: (1) it acts as an introduction to effect handlers for the unfamiliar reader; and (2) it acts as an introduction to the \ensuremath{\Varid{λ}^{\Varid{hop}}} language whose core calculus we introduce in the next section.
The state effect has two operations, \ensuremath{\op{get}} and \ensuremath{\op{put}}, which retrieve and store a value respectively.
We can use these operations to define a computation which increments the current state by one:
\begin{hscode}\SaveRestoreHook
\column{B}{@{}>{\hspre}l<{\hspost}@{}}%
\column{E}{@{}>{\hspre}l<{\hspost}@{}}%
\>[B]{}\Varid{incr}\mathrel{=}\{\mskip1.5mu \mathbf{let}\;\Varid{x}\mathrel{=}\op{get}\kw{!}\;\mathbf{in}\;(\op{put}\;(\Varid{x}\mathbin{+}\mathrm{1}))\kw{!}\mskip1.5mu\}{}\<[E]%
\ColumnHook
\end{hscode}\resethooks
Here, \ensuremath{\{\mskip1.5mu } and \ensuremath{\mskip1.5mu\}} delimit a \emph{suspension}; i.e., a block of code that is treated as a value, akin to a 0-argument function value.\footnote{Our syntax and semantics of suspensions is inspired by the corresponding feature and syntax used in Frank~\cite{DBLP:journals/jfp/ConventLMM20}}
We use \ensuremath{\kw{!}} to \emph{enact} a suspension, akin to calling a 0-argument function value.
In \ensuremath{\Varid{λ}^{\Varid{hop}}}, computations are pure by default, and only suspensions are allowed to invoke effectful operations.
(This purity restriction is enforced by the type discipline we introduce in \cref{sec:calculus}.)
Since the \ensuremath{\Varid{incr}} program calls the effectful operations \ensuremath{\op{put}} and \ensuremath{\op{get}}, it is wrapped in a suspension.
Each of the calls to \ensuremath{\op{get}} and \ensuremath{\op{put}} operations are postfixed by \ensuremath{\kw{!}} since operations in \ensuremath{\Varid{λ}^{\Varid{hop}}} are represented as suspensions.

In order to run the \ensuremath{\Varid{incr}} program, we need to define a handler for the \ensuremath{\op{get}} and \ensuremath{\op{put}} operations.
Below we define a function \ensuremath{\Varid{hState}} which takes as input an initial state \ensuremath{\Varid{s}_{0}} and a suspension \ensuremath{\Varid{prog}} which may call \ensuremath{\op{put}} and \ensuremath{\op{get}}:
\begin{hscode}\SaveRestoreHook
\column{B}{@{}>{\hspre}l<{\hspost}@{}}%
\column{3}{@{}>{\hspre}l<{\hspost}@{}}%
\column{5}{@{}>{\hspre}l<{\hspost}@{}}%
\column{13}{@{}>{\hspre}l<{\hspost}@{}}%
\column{18}{@{}>{\hspre}l<{\hspost}@{}}%
\column{E}{@{}>{\hspre}l<{\hspost}@{}}%
\>[B]{}\Varid{hState}\;\Varid{s}_{0}\;\Varid{prog}\mathrel{=}{}\<[E]%
\\
\>[B]{}\hsindent{3}{}\<[3]%
\>[3]{}\kw{handle}\;\{\mskip1.5mu {}\<[E]%
\\
\>[3]{}\hsindent{2}{}\<[5]%
\>[5]{}\op{get}\;{}\<[13]%
\>[13]{}\Varid{s}\;\Varid{k}{}\<[18]%
\>[18]{}\mathbin{↦}\Varid{k}\;\Varid{s}\;\{\mskip1.5mu \Varid{s}\mskip1.5mu\},{}\<[E]%
\\
\>[3]{}\hsindent{2}{}\<[5]%
\>[5]{}\op{put}\;\Varid{s'}\;{}\<[13]%
\>[13]{}\Varid{s}\;\Varid{k}{}\<[18]%
\>[18]{}\mathbin{↦}\Varid{k}\;\Varid{s'}\;\{\mskip1.5mu ()\mskip1.5mu\},{}\<[E]%
\\
\>[3]{}\hsindent{2}{}\<[5]%
\>[5]{}\kw{return}\;\Varid{x}\;\Varid{s}{}\<[18]%
\>[18]{}\mathbin{↦}\{\mskip1.5mu (\Varid{x},\Varid{s})\mskip1.5mu\}{}\<[E]%
\\
\>[B]{}\hsindent{3}{}\<[3]%
\>[3]{}\mskip1.5mu\}\;\Varid{s}_{0}\;\Varid{prog}\kw{!}{}\<[E]%
\ColumnHook
\end{hscode}\resethooks
When \ensuremath{\Varid{hState}} is called, it enacts the suspended \ensuremath{\Varid{prog}} argument in the context of a \ensuremath{\kw{handle}} expression, which installs a handler of the \ensuremath{\op{get}} and \ensuremath{\op{put}} operations (the \ensuremath{\kw{return}} clause does not represent an operation; we describe its meaning shortly).
The installed handler is \emph{parameterized} by a value \ensuremath{\Varid{s}_{0}}, which represents the ``current'' state at the point where we start to evaluate \ensuremath{\Varid{prog}}.

When we encounter a \ensuremath{\op{get}} or \ensuremath{\op{put}} operation during evaluation of \ensuremath{\Varid{prog}}, we ``jump'' to the handler installed by \ensuremath{\Varid{hState}}, and evaluate the clause corresponding to the performed operation.
Each clause binds variables that represent the following: (1) the parameters that the operation being performed was applied to (e.g., the \ensuremath{\Varid{s'}} parameter of the \ensuremath{\op{put}} clause); (2) the ``current'' parameter of the handler (e.g., the \ensuremath{\Varid{s}} bound in the \ensuremath{\op{get}} and \ensuremath{\op{put}} clauses); and (3) a \emph{continuation} (the \ensuremath{\Varid{k}} bound in the \ensuremath{\op{get}} and \ensuremath{\op{put}} clauses).
The continuation represents the program context in which an operation is performed, and expects two arguments: an argument to be used instead of the ``current'' effect handler parameter, and a computation to be run in the ``resumed'' context.

We can think of an effect handler as behaving similarly to an exception handler: performing an operation is similar to throwing an exception, and handling an operation is similar to handling an exception.
The important difference between exception handlers and effect handlers is that effect handlers provide access to the continuation representing the program context where the operation is being performed.
In contrast, when we raise an exception, the program context that the exception was raised in is discarded, and can never be resumed.
By calling a continuation from an effect handler clause, effect handlers can ``resume'' the program in the context of a handler with an updated parameter.

For example, during evaluation of \ensuremath{(\Varid{hState}\;\mathrm{0}\;\Varid{incr})\kw{!}}, when we encounter the \ensuremath{\op{get}\kw{!}} operation in \ensuremath{\Varid{incr}}, we evaluate the right hand side expression of the \ensuremath{\op{get}} clause of the installed \ensuremath{\kw{handle}} expression.
In this right hand side expression, \ensuremath{\Varid{s}} is bound to \ensuremath{\mathrm{0}}, and \ensuremath{\Varid{k}} is bound to a continuation \ensuremath{\Varid{λ}\;\Varid{m}\;\!.\!\;\kw{handle}\;\{\mskip1.5mu \mathbin{…}\mskip1.5mu\}\;\mathrm{0}\;\{\mskip1.5mu \mathbf{let}\;\Varid{x}\mathrel{=}\Varid{m}\kw{!}\;\mathbf{in}\;(\op{put}\;(\Varid{x}\mathbin{+}\mathrm{1}))\kw{!}\mskip1.5mu\}}.

A key difference between \ensuremath{\Varid{λ}^{\Varid{hop}}} and existing effect handlers that we are aware of, is that the continuation bound by handler clauses in existing effect handlers expect a \emph{value}.
Effect handlers in \ensuremath{\Varid{λ}^{\Varid{hop}}} differ: continuations bound by handler clauses in \ensuremath{\Varid{λ}^{\Varid{hop}}} expect a \emph{suspension}.
This suspension will be enacted in the context where the handled operation occurred.
The \ensuremath{\Varid{hState}} handler does not make essential use of this added generality.
In \cref{sec:ho-effects} we consider a different example where we do make use of it.



The \ensuremath{\kw{return}} clause of \ensuremath{\Varid{hState}} defines what should happen when a computation returns a pure value.
In this case we pack the return value \ensuremath{\Varid{x}} and the final state \ensuremath{\Varid{s}} together in a tuple.
This tuple is wrapped in a suspension, because in general all handler clauses return suspended computations (in the case of the \ensuremath{\op{get}} and \ensuremath{\op{put}} clauses, the continuation \ensuremath{\Varid{k}} returns a suspended computation).




Any algebraic effect handler can be expressed in our system as a handler that passes a suspended value to the continuation. 
In the next section we discuss a handler for a more interesting effect where we pass a computation to a continuation instead.

%
%


\subsection{Handling Exception Catching}
\label{sec:ho-effects}


A common higher-order effect is exception catching. The effect consists of two operations, \ensuremath{\op{throw}} and \ensuremath{\op{catch}}. The \ensuremath{\op{throw}} operation has no arguments and just raises an exception which ``jumps'' to the nearest exception handler, or aborts if there is none.
We say that the exception catching effect is a higher-order effect because the \ensuremath{\op{catch}} operation has two higher-order arguments.
In Haskell, we might type this higher-order effect using the following type signature:\footnote{Haskell's popular \emph{monad transformer library} defines similar type classes for an exception operation that can carry a value: \url{https://hackage.haskell.org/package/mtl/docs/Control-Monad-Except.html}}
\begin{hscode}\SaveRestoreHook
\column{B}{@{}>{\hspre}l<{\hspost}@{}}%
\column{3}{@{}>{\hspre}l<{\hspost}@{}}%
\column{11}{@{}>{\hspre}l<{\hspost}@{}}%
\column{E}{@{}>{\hspre}l<{\hspost}@{}}%
\>[B]{}\mathbf{class}\;\Conid{MonadCatch}\;\Varid{m}\;\mathbf{where}{}\<[E]%
\\
\>[B]{}\hsindent{3}{}\<[3]%
\>[3]{}\Varid{throwM}{}\<[11]%
\>[11]{}\mathbin{::}\Varid{m}\;\Varid{a}{}\<[E]%
\\
\>[B]{}\hsindent{3}{}\<[3]%
\>[3]{}\Varid{catchM}{}\<[11]%
\>[11]{}\mathbin{::}\Varid{m}\;\Varid{a}\to \Varid{m}\;\Varid{a}\to \Varid{m}\;\Varid{a}{}\<[E]%
\ColumnHook
\end{hscode}\resethooks
The type of \ensuremath{\Conid{MonadCatch}} tells us that the \ensuremath{\Varid{catchM}} operation is parameterized by computations that have the same type as the \ensuremath{\Varid{catchM}} operation itself, which is why we call it a ``higher-order'' operation.
Our \ensuremath{\op{catch}} operation is higher-order in the same sense.
However, we save the discussion of how operations are typed in \ensuremath{\Varid{λ}^{\Varid{hop}}} to \cref{sec:calculus}.
Here, we illustrate how we can define a handler for the exception catching operation.
Consider the following:
\begin{hscode}\SaveRestoreHook
\column{B}{@{}>{\hspre}l<{\hspost}@{}}%
\column{E}{@{}>{\hspre}l<{\hspost}@{}}%
\>[B]{}\op{catch}\;\{\mskip1.5mu \mathbf{let}\;\Varid{x}\mathrel{=}\op{throw}\kw{!}\;\mathbf{in}\;\Varid{x}\mathbin{+}\mathrm{1}\mskip1.5mu\}\;\{\mskip1.5mu \mathrm{1}\mskip1.5mu\}{}\<[E]%
\ColumnHook
\end{hscode}\resethooks
Because the \ensuremath{\op{throw}} operation is inside the \ensuremath{\op{catch}} block, the program should produce the value 1 yielded by the second suspension argument of \ensuremath{\op{catch}} (which we will call the fallback computation of \ensuremath{\op{catch}}).
To match this behaviour, the \ensuremath{\op{catch}} operation should be handled by running the computation in the first argument; and if that computation raises an exception, the fallback computation is run.
The following handler implements this behavior:
\begin{hscode}\SaveRestoreHook
\column{B}{@{}>{\hspre}l<{\hspost}@{}}%
\column{3}{@{}>{\hspre}l<{\hspost}@{}}%
\column{5}{@{}>{\hspre}l<{\hspost}@{}}%
\column{7}{@{}>{\hspre}l<{\hspost}@{}}%
\column{18}{@{}>{\hspre}l<{\hspost}@{}}%
\column{E}{@{}>{\hspre}l<{\hspost}@{}}%
\>[B]{}\Varid{hCatch}\;\Varid{prog}\mathrel{=}{}\<[E]%
\\
\>[B]{}\hsindent{3}{}\<[3]%
\>[3]{}\kw{handle}\;\{\mskip1.5mu {}\<[E]%
\\
\>[3]{}\hsindent{2}{}\<[5]%
\>[5]{}\op{throw}\;{}\<[18]%
\>[18]{}\Varid{k}\mathbin{↦}\{\mskip1.5mu \Conid{Nothing}\mskip1.5mu\},{}\<[E]%
\\
\>[3]{}\hsindent{2}{}\<[5]%
\>[5]{}\op{catch}\;\Varid{m}_{1}\;\Varid{m}_{2}\;{}\<[18]%
\>[18]{}\Varid{k}\mathbin{↦}\Varid{k}\;\{\mskip1.5mu {}\<[E]%
\\
\>[5]{}\hsindent{2}{}\<[7]%
\>[7]{}\kw{match}\;(\Varid{hCatch}\;\Varid{m}_{1})\kw{!}{}\<[E]%
\\
\>[5]{}\hsindent{2}{}\<[7]%
\>[7]{}\mid \Conid{Nothing}\to \Varid{m}_{2}\kw{!}{}\<[E]%
\\
\>[5]{}\hsindent{2}{}\<[7]%
\>[7]{}\mid \Conid{Just}\;\Varid{x}\to \Varid{x}{}\<[E]%
\\
\>[3]{}\hsindent{2}{}\<[5]%
\>[5]{}\mskip1.5mu\},{}\<[E]%
\\
\>[3]{}\hsindent{2}{}\<[5]%
\>[5]{}\kw{return}\;\Varid{x}\mathbin{↦}\{\mskip1.5mu \Conid{Just}\;\Varid{x}\mskip1.5mu\}{}\<[E]%
\\
\>[B]{}\hsindent{3}{}\<[3]%
\>[3]{}\mskip1.5mu\}\;\Varid{prog}\kw{!}{}\<[E]%
\ColumnHook
\end{hscode}\resethooks
The first clause of the handler discards the continuation \ensuremath{\Varid{k}} and simply returns a suspended \ensuremath{\Conid{Nothing}}.
As with the \ensuremath{\Varid{hState}} handler, the \ensuremath{\Varid{hCatch}} handler may not handle all the effects in \ensuremath{\Varid{prog}}, and since \ensuremath{\Varid{prog}} may have more effects than the exception catching effect, the \ensuremath{\Varid{hCatch}} handler yields a suspension that represents the computation resulting from handling all exception catching effects.

The second clause of \ensuremath{\Varid{hCatch}} passes a suspension to \ensuremath{\Varid{k}} which matches on the result of calling \ensuremath{\Varid{hCatch}} recursively on \ensuremath{\Varid{m}_{1}}.
If \ensuremath{\Varid{m}_{1}} throws an exception, then the result of \ensuremath{(\Varid{hCatch}\;\Varid{m}_{1})\kw{!}} will be \ensuremath{\Conid{Nothing}}.  In this case, we catch the exception and run \ensuremath{\Varid{m}_{2}} in the context given by the continuation \ensuremath{\Varid{k}}.
Note that the continuation \ensuremath{\Varid{k}} may be a program context that itself contains handlers.
By passing the suspension to \ensuremath{\Varid{k}} which runs \ensuremath{\Varid{m}_{1}} and \ensuremath{\Varid{m}_{2}}, these computations will be run in the context of the continuation \ensuremath{\Varid{k}}.
This ensures that those operations performed during evaluation of \ensuremath{\Varid{m}_{1}} and \ensuremath{\Varid{m}_{2}} which \ensuremath{\Varid{k}} has handlers installed for, will be handled by the installed handlers in \ensuremath{\Varid{k}}.

To see why it is necessary to run \ensuremath{\Varid{m}_{1}} and \ensuremath{\Varid{m}_{2}} in the context of \ensuremath{\Varid{k}}, consider the following program:
\begin{hscode}\SaveRestoreHook
\column{B}{@{}>{\hspre}l<{\hspost}@{}}%
\column{E}{@{}>{\hspre}l<{\hspost}@{}}%
\>[B]{}(\Varid{hCatch}\;(\Varid{hState}\;\mathrm{0}\;(\op{catch}\;(\op{put}\;\mathrm{1})\;\{\mskip1.5mu ()\mskip1.5mu\})))\kw{!}{}\<[E]%
\ColumnHook
\end{hscode}\resethooks
After installing the \ensuremath{\Varid{hCatch}} and \ensuremath{\Varid{hState}} handlers, the \ensuremath{\op{catch}} operation will be handled.
This causes the \ensuremath{\op{catch}} clause of \ensuremath{\Varid{hCatch}} to be run where \ensuremath{\Varid{m}_{1}} is bound to \ensuremath{\op{put}\;\mathrm{1}}, \ensuremath{\Varid{m}_{2}} is bound to \ensuremath{\{\mskip1.5mu ()\mskip1.5mu\}}, and \ensuremath{\Varid{k}} is bound to \ensuremath{\Varid{λ}\;\Varid{m}\;\!.\!\;\Varid{hCatch}\;(\Varid{hState}\;\mathrm{0}\;\Varid{m})}.
Now, if we were to run \ensuremath{\Varid{m}_{1}} \emph{directly}, rather than pass it in a suspension to \ensuremath{\Varid{k}}, the \ensuremath{\op{put}\;\mathrm{1}} operation would escape the \ensuremath{\Varid{hState}} handler!
The static type discipline that we introduce in \cref{sec:calculus} ensures that it is not possible for effects to escape in this way, and that the \ensuremath{\Varid{hCatch}} handler above is well-typed and guaranteed to not go wrong.

If no exceptions are raised while evaluating the \ensuremath{\Varid{prog}} that \ensuremath{\Varid{hCatch}} accepts as argument, we wrap the return value of \ensuremath{\Varid{prog}} in a \ensuremath{\Conid{Just}}.

%
%



\subsection{Effect Interaction and Separation of Concerns}
\label{sec:effect-interaction}

The \ensuremath{\Varid{hCatch}} handler in the previous section treats exception catching as a separate concern, in the sense that the order in which we apply handlers is irrelevant for the \ensuremath{\Varid{hCatch}} effect and other handlers that treat their effects as separate concerns.
In \cref{sec:meta-theory} we define formally what it means for a handler to treat its effects as a separate concern.
Here, we illustrate it by example.
Consider the following program:
%
\begin{hscode}\SaveRestoreHook
\column{B}{@{}>{\hspre}l<{\hspost}@{}}%
\column{3}{@{}>{\hspre}l<{\hspost}@{}}%
\column{E}{@{}>{\hspre}l<{\hspost}@{}}%
\>[B]{}\Varid{transact}\mathrel{=}\{\mskip1.5mu {}\<[E]%
\\
\>[B]{}\hsindent{3}{}\<[3]%
\>[3]{}\mathbf{let}\;()\mathrel{=}(\op{put}\;\mathrm{1})\kw{!}\;\mathbf{in}{}\<[E]%
\\
\>[B]{}\hsindent{3}{}\<[3]%
\>[3]{}\mathbf{let}\;()\mathrel{=}(\op{catch}\;\{\mskip1.5mu \mathbf{let}\;()\mathrel{=}(\op{put}\;\mathrm{2})\kw{!}\;\mathbf{in}\;\op{throw}\kw{!}\mskip1.5mu\}\;\{\mskip1.5mu ()\mskip1.5mu\})\kw{!}\;\mathbf{in}{}\<[E]%
\\
\>[B]{}\hsindent{3}{}\<[3]%
\>[3]{}\op{get}\kw{!}{}\<[E]%
\\
\>[B]{}\mskip1.5mu\}{}\<[E]%
\ColumnHook
\end{hscode}\resethooks
%
The program produces similar results, independently of how we order \ensuremath{\Varid{hCatch}} and \ensuremath{\Varid{hState}}:
\begin{hscode}\SaveRestoreHook
\column{B}{@{}>{\hspre}l<{\hspost}@{}}%
\column{E}{@{}>{\hspre}l<{\hspost}@{}}%
\>[B]{}(\Varid{hCatch}\;(\Varid{hState}\;\mathrm{0}\;\Varid{transact}))\kw{!} \longtwoheadmapsto \Conid{Just}\;(\mathrm{2},\mathrm{2}){}\<[E]%
\\
\>[B]{}(\Varid{hState}\;\mathrm{0}\;(\Varid{hCatch}\;\Varid{transact}))\kw{!} \longtwoheadmapsto (\Conid{Just}\;\mathrm{2},\mathrm{2}){}\<[E]%
\ColumnHook
\end{hscode}\resethooks
While changing the order of the handlers changes the type of the result, it does not change the essential final return value of the computation.
To put it differently, the \ensuremath{\Varid{hCatch}} handler has no \emph{effect interaction}.


The lack of effect interaction is both a blessing and a curse.
On one hand we do not have to worry about how we order our handlers in order to understand the semantics of \ensuremath{\Varid{hCatch}}.
On the other hand we seem to lose some expressiveness.
Concretely, for some applications, we may wish to treat exceptions in a ``transactional'' manner, where raising an exception represents aborting a transaction and snapping back the state to what it was before entering a \ensuremath{\op{catch}}.
With scoped effects and with Haskell's monad transformers~\cite{DBLP:conf/popl/LiangHJ95}, we can get this transactional semantics by reordering handlers or monad transformers.
Thus, with scoped effect handlers, reordering handlers would give us two different semantics:
\begin{hscode}\SaveRestoreHook
\column{B}{@{}>{\hspre}l<{\hspost}@{}}%
\column{E}{@{}>{\hspre}l<{\hspost}@{}}%
\>[B]{}(\Varid{hCatch}\;(\Varid{hState}\;\mathrm{0}\;\Varid{transact}))\kw{!} \longtwoheadmapsto \Conid{Just}\;(\mathrm{1},\mathrm{1}){}\<[E]%
\\
\>[B]{}(\Varid{hState}\;\mathrm{0}\;(\Varid{hCatch}\;\Varid{transact}))\kw{!} \longtwoheadmapsto (\Conid{Just}\;\mathrm{2},\mathrm{2}){}\<[E]%
\ColumnHook
\end{hscode}\resethooks
%
In this subsection we show that we can define handlers in a way that we get a similar effect interaction by default, mimicking the semantics of scoped effect handlers and monad transformers.

A naive way to achieve interaction between the state effect and the exception catching effect, is to define an exception handler that explicitly interacts with state:
\begin{hscode}\SaveRestoreHook
\column{B}{@{}>{\hspre}l<{\hspost}@{}}%
\column{3}{@{}>{\hspre}l<{\hspost}@{}}%
\column{5}{@{}>{\hspre}l<{\hspost}@{}}%
\column{7}{@{}>{\hspre}l<{\hspost}@{}}%
\column{18}{@{}>{\hspre}l<{\hspost}@{}}%
\column{E}{@{}>{\hspre}l<{\hspost}@{}}%
\>[B]{}\Varid{hCatch}_{1}\;\Varid{prog}\mathrel{=}{}\<[E]%
\\
\>[B]{}\hsindent{3}{}\<[3]%
\>[3]{}\kw{handle}\;\{\mskip1.5mu {}\<[E]%
\\
\>[3]{}\hsindent{2}{}\<[5]%
\>[5]{}\op{throw}\;{}\<[18]%
\>[18]{}\Varid{k}\mathbin{↦}\{\mskip1.5mu \Conid{Nothing}\mskip1.5mu\},{}\<[E]%
\\
\>[3]{}\hsindent{2}{}\<[5]%
\>[5]{}\op{catch}\;\Varid{m}_{1}\;\Varid{m}_{2}\;{}\<[18]%
\>[18]{}\Varid{k}\mathbin{↦}\Varid{k}\;\{\mskip1.5mu {}\<[E]%
\\
\>[5]{}\hsindent{2}{}\<[7]%
\>[7]{}\hl{\;\mathbf{let}\;\Varid{s}\mathrel{=}\op{get}\kw{!}\;\mathbf{in}\;}{}\<[E]%
\\
\>[5]{}\hsindent{2}{}\<[7]%
\>[7]{}\kw{match}\;(\Varid{hCatch}_{1}\;\Varid{m}_{1})\kw{!}{}\<[E]%
\\
\>[5]{}\hsindent{2}{}\<[7]%
\>[7]{}\mid \Conid{Nothing}\to \hl{\;\mathbf{let}\;()\mathrel{=}(\op{put}\;\Varid{s})\kw{!}\;\mathbf{in}\;}\;\Varid{m}_{2}\kw{!}{}\<[E]%
\\
\>[5]{}\hsindent{2}{}\<[7]%
\>[7]{}\mid \Conid{Just}\;\Varid{x}\to \Varid{x}{}\<[E]%
\\
\>[3]{}\hsindent{2}{}\<[5]%
\>[5]{}\mskip1.5mu\},{}\<[E]%
\\
\>[3]{}\hsindent{2}{}\<[5]%
\>[5]{}\kw{return}\;\Varid{x}\mathbin{↦}\{\mskip1.5mu \Conid{Just}\;\Varid{x}\mskip1.5mu\}{}\<[E]%
\\
\>[B]{}\hsindent{3}{}\<[3]%
\>[3]{}\mskip1.5mu\}\;\Varid{prog}\kw{!}{}\<[E]%
\ColumnHook
\end{hscode}\resethooks
This explicit interaction between exception and state has two drawbacks. Firstly, It forces us to always run \ensuremath{\Varid{hState}} after \ensuremath{\Varid{hCatch}}. Secondly, it only provides interaction between these two effects. 
We can, however, define a handler that provides a effect interaction in general.

The idea is to define the handler clause of a \ensuremath{\op{catch}\;\Varid{m}_{1}\;\Varid{m}_{2}} operation in a way that it explicitly ``restarts'' the continuation in case \ensuremath{\Varid{m}_{1}} throws an exception.
To do so we add an additional \ensuremath{\op{abort}} operation to our exception catching effect, which mediates this restarting.
\begin{hscode}\SaveRestoreHook
\column{B}{@{}>{\hspre}l<{\hspost}@{}}%
\column{3}{@{}>{\hspre}l<{\hspost}@{}}%
\column{5}{@{}>{\hspre}l<{\hspost}@{}}%
\column{7}{@{}>{\hspre}l<{\hspost}@{}}%
\column{18}{@{}>{\hspre}l<{\hspost}@{}}%
\column{E}{@{}>{\hspre}l<{\hspost}@{}}%
\>[B]{}\Varid{hCatch}_{2}\;\Varid{prog}\mathrel{=}{}\<[E]%
\\
\>[B]{}\hsindent{3}{}\<[3]%
\>[3]{}\kw{handle}\;\{\mskip1.5mu {}\<[E]%
\\
\>[3]{}\hsindent{2}{}\<[5]%
\>[5]{}\op{throw}\;{}\<[18]%
\>[18]{}\Varid{k}\mathbin{↦}\{\mskip1.5mu \Conid{Nothing}\mskip1.5mu\},{}\<[E]%
\\
\>[3]{}\hsindent{2}{}\<[5]%
\>[5]{}\op{catch}\;\Varid{m}_{1}\;\Varid{m}_{2}\;{}\<[18]%
\>[18]{}\Varid{k}\mathbin{↦}\Varid{k}\;\{\mskip1.5mu {}\<[E]%
\\
\>[5]{}\hsindent{2}{}\<[7]%
\>[7]{}\kw{match}\;(\Varid{hCatch}_{2}\;\Varid{m}_{1})\kw{!}{}\<[E]%
\\
\>[5]{}\hsindent{2}{}\<[7]%
\>[7]{}\mid \Conid{Nothing}\to \hl{(\op{abort}\;(\Varid{k}\;\Varid{m}_{2}))\kw{!}}{}\<[E]%
\\
\>[5]{}\hsindent{2}{}\<[7]%
\>[7]{}\mid \Conid{Just}\;\Varid{x}\to \Varid{x}{}\<[E]%
\\
\>[3]{}\hsindent{2}{}\<[5]%
\>[5]{}\mskip1.5mu\},{}\<[E]%
\\
\>[3]{}\hsindent{2}{}\<[5]%
\>[5]{}\hl{\op{abort}\;\Varid{m}\;\Varid{k}\mathbin{↦}\Varid{m},}{}\<[E]%
\\
\>[3]{}\hsindent{2}{}\<[5]%
\>[5]{}\kw{return}\;\Varid{x}\mathbin{↦}\{\mskip1.5mu \Conid{Just}\;\Varid{x}\mskip1.5mu\}{}\<[E]%
\\
\>[B]{}\hsindent{3}{}\<[3]%
\>[3]{}\mskip1.5mu\}\;\Varid{prog}\kw{!}{}\<[E]%
\ColumnHook
\end{hscode}\resethooks
Using this handler, evaluation of our \ensuremath{\Varid{transact}} program first handles the \ensuremath{\op{put}\;\mathrm{1}} effect, to yield the following program state:
\begin{hscode}\SaveRestoreHook
\column{B}{@{}>{\hspre}l<{\hspost}@{}}%
\column{5}{@{}>{\hspre}l<{\hspost}@{}}%
\column{E}{@{}>{\hspre}l<{\hspost}@{}}%
\>[B]{}\Varid{hCatch}_{2}\;\{\mskip1.5mu \Varid{hState}\;\mathrm{1}\;\{\mskip1.5mu {}\<[E]%
\\
\>[B]{}\hsindent{5}{}\<[5]%
\>[5]{}\mathbf{let}\;()\mathrel{=}(\op{catch}\;\{\mskip1.5mu \mathbf{let}\;()\mathrel{=}(\op{put}\;\mathrm{2})\kw{!}\;\mathbf{in}\;\op{throw}\kw{!}\mskip1.5mu\}\;\{\mskip1.5mu ()\mskip1.5mu\})\kw{!}\;\mathbf{in}{}\<[E]%
\\
\>[B]{}\hsindent{5}{}\<[5]%
\>[5]{}\op{get}\kw{!}\mskip1.5mu\}\mskip1.5mu\}{}\<[E]%
\ColumnHook
\end{hscode}\resethooks
Next, the \ensuremath{\Varid{hCatch}_{2}} handler starts handling the \ensuremath{\op{catch}}, to give this sequence of steps, where the highlights indicate which operation synchronizes with which handler in each step.
\begin{hscode}\SaveRestoreHook
\column{B}{@{}>{\hspre}l<{\hspost}@{}}%
\column{5}{@{}>{\hspre}l<{\hspost}@{}}%
\column{7}{@{}>{\hspre}l<{\hspost}@{}}%
\column{E}{@{}>{\hspre}l<{\hspost}@{}}%
\>[B]{}\Varid{hCatch}_{2}\;\{\mskip1.5mu \hl{\Varid{hState}\;\mathrm{1}}\;\{\mskip1.5mu {}\<[E]%
\\
\>[B]{}\hsindent{5}{}\<[5]%
\>[5]{}\mathbf{let}\;()\mathrel{=}\kw{match}\;(\Varid{hCatch}_{2}\;\{\mskip1.5mu \mathbf{let}\;()\mathrel{=}(\hl{\op{put}}\;\mathrm{2})\kw{!}\;\mathbf{in}\;\op{throw}\kw{!}\mskip1.5mu\})\kw{!}{}\<[E]%
\\
\>[5]{}\hsindent{2}{}\<[7]%
\>[7]{}\mid \Conid{Nothing}\to (\op{abort}\;(\Varid{hCatch}_{2}\;\{\mskip1.5mu \Varid{hState}\;\mathrm{1}\;\{\mskip1.5mu \mathbf{let}\;()\mathrel{=}\{\mskip1.5mu ()\mskip1.5mu\}\kw{!}\;\mathbf{in}\;\op{get}\kw{!}\mskip1.5mu\}\mskip1.5mu\}))\kw{!}{}\<[E]%
\\
\>[5]{}\hsindent{2}{}\<[7]%
\>[7]{}\mid \Conid{Just}\;\Varid{x}\to \Varid{x}{}\<[E]%
\\
\>[B]{}\hsindent{5}{}\<[5]%
\>[5]{}\mathbf{in}\;\op{get}\kw{!}\mskip1.5mu\}\mskip1.5mu\}{}\<[E]%
\\
\>[B]{} \longmapsto {}\<[E]%
\\
\>[B]{}\Varid{hCatch}_{2}\;\{\mskip1.5mu \Varid{hState}\;\mathrm{2}\;\{\mskip1.5mu {}\<[E]%
\\
\>[B]{}\hsindent{5}{}\<[5]%
\>[5]{}\mathbf{let}\;()\mathrel{=}\kw{match}\;\hl{\Varid{hCatch}_{2}}\;\{\mskip1.5mu \hl{\op{throw}}\mathbin{!}\mskip1.5mu\}\kw{!}{}\<[E]%
\\
\>[5]{}\hsindent{2}{}\<[7]%
\>[7]{}\mid \Conid{Nothing}\to (\op{abort}\;(\Varid{hCatch}_{2}\;\{\mskip1.5mu \Varid{hState}\;\mathrm{1}\;\{\mskip1.5mu \mathbf{let}\;()\mathrel{=}\{\mskip1.5mu ()\mskip1.5mu\}\kw{!}\;\mathbf{in}\;\op{get}\kw{!}\mskip1.5mu\}\mskip1.5mu\}))\kw{!}{}\<[E]%
\\
\>[5]{}\hsindent{2}{}\<[7]%
\>[7]{}\mid \Conid{Just}\;\Varid{x}\to \Varid{x}{}\<[E]%
\\
\>[B]{}\hsindent{5}{}\<[5]%
\>[5]{}\mathbf{in}\;\op{get}\kw{!}\mskip1.5mu\}\mskip1.5mu\}{}\<[E]%
\\
\>[B]{} \longmapsto {}\<[E]%
\\
\>[B]{}\Varid{hCatch}_{2}\;\{\mskip1.5mu \Varid{hState}\;\mathrm{2}\;\{\mskip1.5mu {}\<[E]%
\\
\>[B]{}\hsindent{5}{}\<[5]%
\>[5]{}\mathbf{let}\;()\mathrel{=}\kw{match}\;\Conid{Nothing}{}\<[E]%
\\
\>[5]{}\hsindent{2}{}\<[7]%
\>[7]{}\mid \Conid{Nothing}\to (\op{abort}\;(\Varid{hCatch}_{2}\;\{\mskip1.5mu \Varid{hState}\;\mathrm{1}\;\{\mskip1.5mu \mathbf{let}\;()\mathrel{=}\{\mskip1.5mu ()\mskip1.5mu\}\kw{!}\;\mathbf{in}\;\op{get}\kw{!}\mskip1.5mu\}\mskip1.5mu\}))\kw{!}{}\<[E]%
\\
\>[5]{}\hsindent{2}{}\<[7]%
\>[7]{}\mid \Conid{Just}\;\Varid{x}\to \Varid{x}{}\<[E]%
\\
\>[B]{}\hsindent{5}{}\<[5]%
\>[5]{}\mathbf{in}\;\op{get}\kw{!}\mskip1.5mu\}\mskip1.5mu\}{}\<[E]%
\\
\>[B]{} \longmapsto {}\<[E]%
\\
\>[B]{}\hl{\Varid{hCatch}_{2}}\;\{\mskip1.5mu \Varid{hState}\;\mathrm{2}\;\{\mskip1.5mu \mathbf{let}\;()\mathrel{=}(\hl{\op{abort}}\;(\Varid{hCatch}_{2}\;\{\mskip1.5mu \Varid{hState}\;\mathrm{1}\;\{\mskip1.5mu \mathbf{let}\;()\mathrel{=}\{\mskip1.5mu ()\mskip1.5mu\}\kw{!}\;\mathbf{in}\;\op{get}\kw{!}\mskip1.5mu\}\mskip1.5mu\}))\kw{!}\;\mathbf{in}\;\op{get}\kw{!}\mskip1.5mu\}\mskip1.5mu\}{}\<[E]%
\\
\>[B]{} \longmapsto {}\<[E]%
\\
\>[B]{}(\Varid{hCatch}_{2}\;\{\mskip1.5mu \hl{\Varid{hState}\;\mathrm{1}}\;\{\mskip1.5mu \mathbf{let}\;()\mathrel{=}\{\mskip1.5mu ()\mskip1.5mu\}\kw{!}\;\mathbf{in}\;\hl{\op{get}}\kw{!}\mskip1.5mu\}\mskip1.5mu\})\kw{!}{}\<[E]%
\\
\>[B]{} \longmapsto {}\<[E]%
\\
\>[B]{}(\Varid{hCatch}_{2}\;\{\mskip1.5mu \Varid{hState}\;\mathrm{1}\;\{\mskip1.5mu \mathrm{1}\mskip1.5mu\}\mskip1.5mu\})\kw{!} \longmapsto (\Varid{hCatch}_{2}\;\{\mskip1.5mu (\mathrm{1},\mathrm{1})\mskip1.5mu\})\kw{!} \longmapsto \Conid{Just}\;(\mathrm{1},\mathrm{1}){}\<[E]%
\ColumnHook
\end{hscode}\resethooks
%
%
This demonstrates that \ensuremath{\Varid{λ}^{\Varid{hop}}} supports effect handling both with and without effect interaction.

\subsection{Non-Determinism is not a Separate Concern}
\label{sec:nondet-nonsep}



Not all effect handlers have separate concerns.
For example, the \emph{non-determinism} effect, which has a single operation \ensuremath{\op{flip}} which represents flipping a coin and returning a \ensuremath{\Conid{Boolean}}, non-deterministically.
For example:
\begin{hscode}\SaveRestoreHook
\column{B}{@{}>{\hspre}l<{\hspost}@{}}%
\column{13}{@{}>{\hspre}l<{\hspost}@{}}%
\column{E}{@{}>{\hspre}l<{\hspost}@{}}%
\>[B]{}\Varid{flippy}\mathrel{=}\{\mskip1.5mu {}\<[13]%
\>[13]{}\kw{match}\;\op{flip}\kw{!}{}\<[E]%
\\
\>[13]{}\mid \Conid{True}\to \op{throw}\kw{!}{}\<[E]%
\\
\>[13]{}\mid \Conid{False}\to \mathrm{1}\mskip1.5mu\}{}\<[E]%
\ColumnHook
\end{hscode}\resethooks
Following \citet{DBLP:conf/icfp/KammarLO13}, the \ensuremath{\op{flip}} effect is handled by running \emph{both} the program where the \ensuremath{\op{flip}} operation returns \ensuremath{\Conid{True}} and where it returns \ensuremath{\Conid{False}}, to accumulate a list of all possible program results; i.e.:
\begin{hscode}\SaveRestoreHook
\column{B}{@{}>{\hspre}l<{\hspost}@{}}%
\column{3}{@{}>{\hspre}l<{\hspost}@{}}%
\column{5}{@{}>{\hspre}l<{\hspost}@{}}%
\column{E}{@{}>{\hspre}l<{\hspost}@{}}%
\>[B]{}\Varid{hND}\;\Varid{prog}\mathrel{=}{}\<[E]%
\\
\>[B]{}\hsindent{3}{}\<[3]%
\>[3]{}\kw{handle}\;\{\mskip1.5mu {}\<[E]%
\\
\>[3]{}\hsindent{2}{}\<[5]%
\>[5]{}\op{flip}\;\Varid{k}\mathbin{↦}\{\mskip1.5mu (\Varid{k}\;\{\mskip1.5mu \Conid{True}\mskip1.5mu\})\kw{!}\plus (\Varid{k}\;\{\mskip1.5mu \Conid{False}\mskip1.5mu\})\kw{!}\mskip1.5mu\},{}\<[E]%
\\
\>[3]{}\hsindent{2}{}\<[5]%
\>[5]{}\kw{return}\;\Varid{x}\mathbin{↦}\{\mskip1.5mu [\mskip1.5mu \Varid{x}\mskip1.5mu]\mskip1.5mu\}{}\<[E]%
\\
\>[B]{}\hsindent{3}{}\<[3]%
\>[3]{}\mskip1.5mu\}\;\Varid{prog}\kw{!}{}\<[E]%
\ColumnHook
\end{hscode}\resethooks
The return clause returns a singleton list representing a single possible program result.

We illustrated in \cref{sec:effect-interaction} that the \ensuremath{\Varid{hCatch}} handler treats the exception effect as a separate concern.
In this section we illustrate that the non-determinism effect is not treated as a separate concern by the \ensuremath{\Varid{hND}} handler.
Concretely, by handling \ensuremath{\Varid{hND}} and \ensuremath{\Varid{hCatch}} in different orders, we get different results:
\begin{hscode}\SaveRestoreHook
\column{B}{@{}>{\hspre}l<{\hspost}@{}}%
\column{E}{@{}>{\hspre}l<{\hspost}@{}}%
\>[B]{}(\Varid{hND}\;(\Varid{hCatch}\;\Varid{flippy}))\kw{!} \longtwoheadmapsto [\mskip1.5mu \Conid{Nothing},\Conid{Just}\;\mathrm{1}\mskip1.5mu]{}\<[E]%
\\
\>[B]{}(\Varid{hCatch}\;(\Varid{hND}\;\Varid{flippy}))\kw{!} \longtwoheadmapsto \Conid{Nothing}{}\<[E]%
\ColumnHook
\end{hscode}\resethooks
We also get interaction if we handle \ensuremath{\Varid{hND}} and \ensuremath{\Varid{hState}} in different orders:
\begin{hscode}\SaveRestoreHook
\column{B}{@{}>{\hspre}l<{\hspost}@{}}%
\column{14}{@{}>{\hspre}l<{\hspost}@{}}%
\column{E}{@{}>{\hspre}l<{\hspost}@{}}%
\>[B]{}\Varid{flippy}_{1}\mathrel{=}\{\mskip1.5mu {}\<[14]%
\>[14]{}\kw{match}\;\op{flip}\kw{!}{}\<[E]%
\\
\>[14]{}\mid \Conid{True}\to \mathbf{let}\;()\mathrel{=}\op{put}\;\mathrm{1}\;\mathbf{in}\;\mathrm{0}{}\<[E]%
\\
\>[14]{}\mid \Conid{False}\to \op{get}\kw{!}\mskip1.5mu\}{}\<[E]%
\\[\blanklineskip]%
\>[B]{}(\Varid{hND}\;(\Varid{hState}\;\Varid{flippy}_{1}))\kw{!} \longtwoheadmapsto [\mskip1.5mu (\mathrm{0},\mathrm{1}),(\mathrm{0},\mathrm{0})\mskip1.5mu]{}\<[E]%
\\
\>[B]{}(\Varid{hState}\;(\Varid{hND}\;\Varid{flippy}_{1}))\kw{!} \longtwoheadmapsto ([\mskip1.5mu \mathrm{0},\mathrm{1}\mskip1.5mu],\mathrm{1}){}\<[E]%
\ColumnHook
\end{hscode}\resethooks
These examples demonstrate that the \ensuremath{\Varid{hND}} handler does not have separate concerns: the \ensuremath{\Conid{Just}\;\mathrm{1}} value produced by the first ordering of \ensuremath{\Varid{hND}} and \ensuremath{\Varid{hCatch}} above represents a result that is not produced by the second ordering; and the \ensuremath{(\mathrm{0},\mathrm{0})} result produced by the first ordering of \ensuremath{\Varid{hND}} and \ensuremath{\Varid{hState}} represents a result that is not produced by the second ordering of those handlers.
The \ensuremath{\Varid{hND}} handler generally affects the final results in a way that is fundamentally different from, e.g., the \ensuremath{\Varid{hState}}, \ensuremath{\Varid{hCatch}}, and many other effects.
We characterize this fundamental difference in \cref{sec:meta-theory} where we give a formal definition of what it means to have separate concerns, and in \cref{sec:soc-ho-eff} we show that \ensuremath{\Varid{hND}} does not have this property.






\subsection{Lambda Abstraction as an Effect}
\label{sec:lambda-effect}


In this section we further demonstrate the expressiveness of \ensuremath{\Varid{λ}^{\Varid{hop}}}, by showing how to handle a lambda abstraction effect.
The effect has three operations: (1) a function abstraction operation \ensuremath{\op{abs}\;\Varid{x}\;\Varid{m}} where \ensuremath{\Varid{x}} is a named parameter, and \ensuremath{\Varid{m}} is the suspended body of the function; (2) an operation representing a variable, \ensuremath{\op{var}\;\Varid{x}} where \ensuremath{\Varid{x}} is a named identifier; and (3) a function application operation \ensuremath{\op{app}\;\Varid{v}_{1}\;\Varid{v}_{2}} where \ensuremath{\Varid{v}_{1}} and \ensuremath{\Varid{v}_{2}} are values.
Recent work by \citet{DBLP:conf/aplas/BergSPW21} argues that these higher-order operations are beyond what both algebraic effect handlers and scoped effect handlers can define, and they proposed a Haskell-based library called \emph{latent effects}.
We relate to their work in \cref{sec:related}.
Here, we show how to handle the higher-order operations for the lambda abstraction effect using \ensuremath{\Varid{λ}^{\Varid{hop}}}.
%
%
%

Consider the following program which uses the state effect and the lambda abstraction effect:
\begin{hscode}\SaveRestoreHook
\column{B}{@{}>{\hspre}l<{\hspost}@{}}%
\column{12}{@{}>{\hspre}l<{\hspost}@{}}%
\column{20}{@{}>{\hspre}l<{\hspost}@{}}%
\column{27}{@{}>{\hspre}c<{\hspost}@{}}%
\column{27E}{@{}l@{}}%
\column{E}{@{}>{\hspre}l<{\hspost}@{}}%
\>[B]{}\Varid{lammy}\mathrel{=}\{\mskip1.5mu {}\<[12]%
\>[12]{}\mathbf{let}\;(){}\<[20]%
\>[20]{}\mathrel{=}(\op{put}\;\mathrm{1})\kw{!}\;\mathbf{in}{}\<[E]%
\\
\>[12]{}\mathbf{let}\;\Varid{f}{}\<[20]%
\>[20]{}\mathrel{=}(\op{abs}\;\text{\ttfamily \char34 x\char34}\;\{\mskip1.5mu (\op{var}\;\text{\ttfamily \char34 x\char34})\kw{!}\mathbin{+}\op{get}\kw{!}\mskip1.5mu\})\kw{!}\;\mathbf{in}{}\<[E]%
\\
\>[12]{}\mathbf{let}\;(){}\<[20]%
\>[20]{}\mathrel{=}(\op{put}\;\mathrm{2})\kw{!}\;\mathbf{in}{}\<[E]%
\\
\>[12]{}(\op{app}\;\Varid{f}\;\mathrm{3})\kw{!}{}\<[27]%
\>[27]{}\mskip1.5mu\}{}\<[27E]%
\ColumnHook
\end{hscode}\resethooks
We can interpret \ensuremath{\Varid{lammy}} in several ways, depending on how we handle the abstraction effect.
By handling the lambda effect in a way that provides separation of concerns, we can postpone the evaluation of the body of the lambda until it is called.
This treats the state effect as a \emph{dynamic} effect, meaning the program returns \ensuremath{\mathrm{3}\mathbin{+}\mathrm{2}\mathrel{=}\mathrm{5}}.
The \ensuremath{\Varid{hLam}_{0}} handler below assumes that (1) there is a \ensuremath{\Varid{lookup}\;\Varid{x}\;\Varid{r}} function for looking up a name \ensuremath{\Varid{x}} in an environment \ensuremath{\Varid{r}}; and (2) the value type used by the abstraction effect has a data type constructor \ensuremath{\Conid{Clo}\;\Varid{x}\;\Varid{m}\;\Varid{r}} where \ensuremath{\Varid{x}} is a named parameter, \ensuremath{\Varid{m}} is a suspended function body, and \ensuremath{\Varid{r}} is a closure environment:\footnote{The pattern match in the \ensuremath{\op{app}} clause of the handler is incomplete.  For the purpose of this subsection we assume a type system that would warn about this incomplete pattern, but accept the program as well-typed; similarly to how Haskell would type an incomplete \ensuremath{\mathbf{case}\mathbin{…}\mathbf{of}} expression.}
\begin{hscode}\SaveRestoreHook
\column{B}{@{}>{\hspre}l<{\hspost}@{}}%
\column{3}{@{}>{\hspre}l<{\hspost}@{}}%
\column{5}{@{}>{\hspre}l<{\hspost}@{}}%
\column{18}{@{}>{\hspre}l<{\hspost}@{}}%
\column{23}{@{}>{\hspre}l<{\hspost}@{}}%
\column{32}{@{}>{\hspre}l<{\hspost}@{}}%
\column{E}{@{}>{\hspre}l<{\hspost}@{}}%
\>[B]{}\Varid{hLam}_{0}\;\Varid{r}_{0}\;\Varid{prog}\mathrel{=}{}\<[E]%
\\
\>[B]{}\hsindent{3}{}\<[3]%
\>[3]{}\kw{handle}\;\{\mskip1.5mu {}\<[E]%
\\
\>[3]{}\hsindent{2}{}\<[5]%
\>[5]{}\op{abs}\;\Varid{x}\;\Varid{m}\;{}\<[18]%
\>[18]{}\Varid{r}\;\Varid{k}{}\<[23]%
\>[23]{}\mathbin{↦}\Varid{k}\;\Varid{r}\;\{\mskip1.5mu \Conid{Clo}\;\Varid{x}\;\Varid{m}\;\Varid{r}\mskip1.5mu\},{}\<[E]%
\\
\>[3]{}\hsindent{2}{}\<[5]%
\>[5]{}\op{app}\;\Varid{v}_{1}\;\Varid{v}_{2}\;{}\<[18]%
\>[18]{}\Varid{r}\;\Varid{k}{}\<[23]%
\>[23]{}\mathbin{↦}\Varid{k}\;\Varid{r}\;\{\mskip1.5mu {}\<[32]%
\>[32]{}\kw{match}\;\Varid{v}_{1}{}\<[E]%
\\
\>[32]{}\mid \Conid{Clo}\;\Varid{x}\;\Varid{m}\;\Varid{r'}\to \Varid{hLam}_{0}\;((\Varid{x},\Varid{v}_{2})\mathbin{::}\Varid{r'})\;\Varid{m}\mskip1.5mu\},{}\<[E]%
\\
\>[3]{}\hsindent{2}{}\<[5]%
\>[5]{}\op{var}\;\Varid{x}\;{}\<[18]%
\>[18]{}\Varid{r}\;\Varid{k}{}\<[23]%
\>[23]{}\mathbin{↦}\mathbf{let}\;\Varid{v}\mathrel{=}\Varid{lookup}\;\Varid{x}\;\Varid{r}\;\mathbf{in}\;\Varid{k}\;\Varid{r}\;\{\mskip1.5mu \Varid{v}\mskip1.5mu\},{}\<[E]%
\\
\>[3]{}\hsindent{2}{}\<[5]%
\>[5]{}\kw{return}\;\Varid{x}\;\Varid{r}{}\<[23]%
\>[23]{}\mathbin{↦}\{\mskip1.5mu \Varid{x}\mskip1.5mu\}{}\<[E]%
\\
\>[B]{}\hsindent{3}{}\<[3]%
\>[3]{}\mskip1.5mu\}\;\Varid{r}_{0}\;\Varid{prog}\kw{!}{}\<[E]%
\ColumnHook
\end{hscode}\resethooks
As we argue in \cref{sec:soc-ho-eff}, this handler has separate concerns.

An alternative handler for the lambda effect could use a ``scoped'' semantics, where we can use effect interaction between the abstraction effect handler and the state handler to treat the state effect as a \emph{static} effect, meaning the \ensuremath{\op{get}} inside of the \ensuremath{\op{abs}} bound to \ensuremath{\Varid{f}} is dereferenced under the state that the \ensuremath{\op{abs}} operation is first evaluated under.
This will cause the program above to return \ensuremath{\mathrm{3}\mathbin{+}\mathrm{1}\mathrel{=}\mathrm{4}}.
The handler below implements this semantics, using an \ensuremath{\Varid{abort'}} operation to reset a continuation, similarly to how we used \ensuremath{\op{abort}} to give a scoped semantics of exception catching.
\begin{hscode}\SaveRestoreHook
\column{B}{@{}>{\hspre}l<{\hspost}@{}}%
\column{3}{@{}>{\hspre}l<{\hspost}@{}}%
\column{5}{@{}>{\hspre}l<{\hspost}@{}}%
\column{18}{@{}>{\hspre}l<{\hspost}@{}}%
\column{23}{@{}>{\hspre}l<{\hspost}@{}}%
\column{32}{@{}>{\hspre}l<{\hspost}@{}}%
\column{E}{@{}>{\hspre}l<{\hspost}@{}}%
\>[B]{}\Varid{hLam}_{1}\;\Varid{r}_{0}\;\Varid{prog}\mathrel{=}{}\<[E]%
\\
\>[B]{}\hsindent{3}{}\<[3]%
\>[3]{}\kw{handle}\;\{\mskip1.5mu {}\<[E]%
\\
\>[3]{}\hsindent{2}{}\<[5]%
\>[5]{}\op{abs}\;\Varid{x}\;\Varid{m}\;{}\<[18]%
\>[18]{}\Varid{r}\;\Varid{k}{}\<[23]%
\>[23]{}\mathbin{↦}\Varid{k}\;\Varid{r}\;\{\mskip1.5mu \Conid{Clo}\;\Varid{x}\;\hl{(\Varid{λ}\;\Varid{r'}\;\!.\!\;\op{abort$^\prime$}\;(\Varid{k}\;(\Varid{hLam}_{1}\;\Varid{r'}\;\Varid{m})))}\;\Varid{r}\mskip1.5mu\},{}\<[E]%
\\
\>[3]{}\hsindent{2}{}\<[5]%
\>[5]{}\op{app}\;\Varid{v}_{1}\;\Varid{v}_{2}\;{}\<[18]%
\>[18]{}\Varid{r}\;\Varid{k}{}\<[23]%
\>[23]{}\mathbin{↦}\Varid{k}\;\Varid{r}\;\{\mskip1.5mu {}\<[32]%
\>[32]{}\kw{match}\;\Varid{v}_{1}{}\<[E]%
\\
\>[32]{}\mid \Conid{Clo}\;\Varid{x}\;\Varid{m}\;\Varid{r'}\to \hl{\Varid{m}\;((\Varid{x},\Varid{v}_{2})\mathbin{::}\Varid{r'})}\mskip1.5mu\},{}\<[E]%
\\
\>[3]{}\hsindent{2}{}\<[5]%
\>[5]{}\op{var}\;\Varid{x}\;{}\<[18]%
\>[18]{}\Varid{r}\;\Varid{k}{}\<[23]%
\>[23]{}\mathbin{↦}\mathbf{let}\;\Varid{v}\mathrel{=}\Varid{lookup}\;\Varid{x}\;\Varid{r}\;\mathbf{in}\;\Varid{k}\;\Varid{r}\;\{\mskip1.5mu \Varid{v}\mskip1.5mu\},{}\<[E]%
\\
\>[3]{}\hsindent{2}{}\<[5]%
\>[5]{}\hl{\op{abort$^\prime$}\;\Varid{m}\;\Varid{k}\mathbin{↦}\Varid{m},}{}\<[E]%
\\
\>[3]{}\hsindent{2}{}\<[5]%
\>[5]{}\kw{return}\;\Varid{x}\;\Varid{r}\mathbin{↦}\{\mskip1.5mu \Varid{x}\mskip1.5mu\}{}\<[E]%
\\
\>[B]{}\hsindent{3}{}\<[3]%
\>[3]{}\mskip1.5mu\}\;\Varid{r}_{0}\;\Varid{prog}\kw{!}{}\<[E]%
\ColumnHook
\end{hscode}\resethooks

It is also possible to define a handler for a function abstraction effect where application operations have a suspended argument, such that we can do call-by-need evaluation.
The code artifact accompanying this paper contains such a handler.

The examples in this section have demonstrated the expressiveness of \ensuremath{\Varid{λ}^{\Varid{hop}}} on several examples.
In the remainder of the paper, we introduce a core calculus for \ensuremath{\Varid{λ}^{\Varid{hop}}}, and use that to define a separation of concerns property, and define criteria under which the separation of concerns property is guaranteed to hold.

\newtheorem*{case*}{\textbf{Case}}

\section{A Calculus with Handlers for Higher-Order Effects}
\label{sec:calculus}

In this section, we define the syntax and semantics of a calculus with row-typed
algebraic effects and handlers, called \ensuremath{\Varid{λ}^{\Varid{hop}}}, that supports \emph{higher-order
  operations}. The calculus is founded on the call-by-value \ensuremath{\Varid{λ}}-calculus, but
maintains a distinction between \emph{values} and \emph{computations} in the
spirit of Levy's \emph{call-by-push-value}~\cite{DBLP:books/sp/Levy2004}. Its
type-and-effect system (\cref{sec:type-system}) features
Hindley/Milner-style~\cite{DBLP:journals/jcss/Milner78} polymorphic abstraction
for types and effect rows, and draws inspiration from the core calculi defined
for Koka~\cite{DBLP:conf/popl/Leijen17} and
Frank~\cite{DBLP:journals/jfp/ConventLMM20}. We specify the operational
semantics of \ensuremath{\Varid{λ}^{\Varid{hop}}} using a reduction semantics
(\cref{sec:operational-semantics}), and we show that the type-and-effect system
is sound with respect to this semantics (\cref{sec:type-soundness}).

\subsection{Syntax}

\begin{figure}
\hrule%
\vspace{.5em}
\begin{align*}
\ensuremath{\Varid{ℓ}} & \ensuremath{\mathbin{∈}\Conid{String}} & \text{(Effect labels)} \\
  \ensuremath{\Varid{x},\Varid{y},\Varid{p},\Varid{m},\Varid{k},\Varid{α},\Varid{r}} & \ensuremath{\mathbin{∈}\Conid{String}} & \text{(Variables)} \\
   & & \\   
\ensuremath{\Varid{k}} &::= \ensuremath{\mathbin{★}\mathbin{∣}\Conid{R}\mathbin{∣}\Varid{k}\to \Varid{k}}  
& \text{(Kinds)} \\
\ensuremath{\Varid{τ}}  &::= \ensuremath{\Varid{α}\mathbin{∣}\Varid{τ}\;\Varid{τ}\mathbin{∣}\Varid{τ}\to \Varid{τ}\mathbin{∣}}\ann{ε \ensuremath{\mathbin{*}} ε}{\susp{τ}}\ensuremath{} 
& \text{(Types)} \\
\ensuremath{\Varid{ε}}  &::= \ensuremath{\mathbin{⟨⟩}\mathbin{∣}\mathbin{⟨}\Varid{ℓ},\Varid{ε}\mathbin{⟩}\mathbin{∣}\Varid{r}} & \text{(Effect rows)} \\
\ensuremath{\Varid{σ}}  &::= \ensuremath{\mathbin{∀}}\ \overline{\ensuremath{\Varid{α}}}\ \overline{r}\ .\ \ensuremath{\Varid{τ}} & \text{(Type Schemes}) \\ 
  \ensuremath{\Conid{Δ},\Conid{Γ}} &::= \ensuremath{\mathbin{∅}\mathbin{∣}\Conid{Γ},(\Varid{x}\mathrel{=}\Varid{v})} & \text{(Environments)} \\
 & & \\ 
\ensuremath{\Varid{e}} &::= \ensuremath{\Varid{e}\;\Varid{e}\mathbin{∣}\mathbf{let}\;\Varid{x}\mathrel{=}\Varid{e}\;\mathbf{in}\;\Varid{e}\mathbin{∣}\kw{handle}^{\ell}\;\{\mskip1.5mu \overline{C}\mskip1.5mu\}\;\Varid{e}\;\Varid{e}\mathbin{∣}\Varid{v}\mathbin{∣}\Varid{e}\kw{!}} 
  & \text{(Expressions)} \\
\ensuremath{\Varid{v}} &::= \ensuremath{\Varid{x}\mathbin{∣}\Varid{λx}}.\ensuremath{\Varid{e}\mathbin{∣}\Varid{c}\mathbin{∣}}\susp{e}
  & \text{(Values)} \\
\ensuremath{\Conid{C}} &::= \ensuremath{\op{op}^{\ell}\;\Varid{f}\;\overline{x}\;\overline{m}\;\Varid{p}\;\Varid{k}\mathbin{↦}\Varid{e}\mathbin{∣}\kw{return}\;\Varid{x}\;\Varid{p}\mathbin{↦}\Varid{e}} & \text{(Clauses)}   
\end{align*}
\hrule
\caption{Syntax of \ensuremath{\Varid{λ}^{\Varid{hop}}}}
\label{fig:syntax}
\end{figure}

\cref{fig:syntax} shows the syntax of \ensuremath{\Varid{λ}^{\Varid{hop}}}.

\paragraph{Notational conventions} Variables can range over expressions (\ensuremath{\Varid{x}},
\ensuremath{\Varid{y}}, \ensuremath{\Varid{p}}, \ensuremath{\Varid{m}}, \ensuremath{\Varid{k}}), types (\ensuremath{\Varid{α}}), and rows (\ensuremath{\Varid{r}}), potentially using subscripts
or primes to distinguish them. We use an overline (e.g., \ensuremath{\overline{\Varid{α}}}) to denote a
collection of zero or more objects, and use a subscript \ensuremath{\Varid{i}} to iterate over all
objects in such a collection. Operations are treated internally as variables,
but we distinguish them visually: \ensuremath{\op{op}^{\ell}\;\Varid{f}} denotes the operation \ensuremath{\Varid{f}} of the
effect \ensuremath{\Varid{ℓ}}. We write $e[\ensuremath{\Varid{x}}/\ensuremath{\Varid{v}}]$ for the capture-avoiding substitution of \ensuremath{\Varid{x}}
for \ensuremath{\Varid{v}} in \ensuremath{\Varid{e}}. Multiple substitutions are comma-separated, and we use overlines
again (e.g., $\overline{\ensuremath{\Varid{x}}/\ensuremath{\Varid{v}}}$) to indicate substitution of a collection of
variables.

\paragraph{Types and rows} As shown in \cref{fig:syntax}, types in \ensuremath{\Varid{λ}^{\Varid{hop}}} are
either \emph{variable}, \emph{type applications}, \emph{function types}, or
\emph{suspensions}. To keep the type syntax as simple as possible, it lacks an
introduction form for type constructors. Type constructors are really only
necessary for the types of handlers, which may decorate the resulting value.
For example, the exception handler decorates its result with a \ensuremath{\Conid{Maybe}}. We treat
type constructors like \ensuremath{\Conid{Maybe}} as type variables, which we assume are bound with
kind \ensuremath{\mathbin{...}\to \mathbin{★}} the context \ensuremath{\Conid{Δ}} (see \cref{fig:well-formedness}).

A suspension type indicates a thunked computations whose side effects have yet
to take place. We denote suspensions as $\ann{\ensuremath{\Varid{ε}\mathbin{*}\epsilon_{l}}}{\susp{\ensuremath{\Varid{τ}}}}$: a
computation that, when enacted, returns a value of type \ensuremath{\Varid{τ}} with side effects \ensuremath{\Varid{ε}\mathbin{*}\epsilon_{l}}. Both \ensuremath{\Varid{ε}} and \ensuremath{\epsilon_{l}} range over \emph{effect rows}, where \ensuremath{\Varid{ε}} indicates the
computation's \emph{immediate} effects, and \ensuremath{\epsilon_{l}} the computation's \emph{latent
  effects}.\footnote{We use the term ``latent effects'' to mean something
  different from \citet{DBLP:conf/aplas/BergSPW21}. We relate to their work in
  \cref{sec:related}.}  We discuss this distinction in more detail in
\cref{sec:immediate-latent}, but informally speaking the immediate effets are
the unhandled effects, while the latent effects have already been
handled. Effect rows themselves can be the empty row, a row extension with a
known effect label \ensuremath{\Varid{ℓ}}, or a row variable. We will occasionally abuse notation,
and denote rows using a list-like syntax (e.g., \ensuremath{\mathbin{⟨}\ell_{1},\ell_{2}\mathbin{⟩}}).

\paragraph{Type well-formedness} 

\begin{figure}
\hrule%
\vspace{.5em}\raggedleft\framebox{\ensuremath{\Conid{Δ}\vdash^{\star}\Varid{τ}\mathbin{:}\Varid{k}}}%
\begin{mathpar}
  \inferrule [WF-Var]
  {  \ensuremath{\Conid{Δ}\;(\Varid{α})\mathrel{=}\Varid{k}}  }
  {  \ensuremath{\Conid{Δ}\vdash^{\star}\Varid{α}\mathbin{:}\Varid{k}}  }
  \and
  \inferrule [WF-App]
  {  \ensuremath{\Conid{Δ}\vdash^{\star}\tau_{1}\mathbin{:}k_{\mathrm{1}}}
  \\ \ensuremath{\Conid{Δ}\vdash^{\star}\tau_{2}\mathbin{:}k_{\mathrm{1}}\to k_{\mathrm{2}}} }
  {  \ensuremath{\Conid{Δ}\vdash^{\star}\tau_{1}\;\tau_{2}\mathbin{:}k_{\mathrm{2}}}  }
  \and
  \inferrule [WF-Arrow]
  {  \ensuremath{\Conid{Δ}\vdash^{\star}\tau_{1}\mathbin{:}\mathbin{★}}
  \\ \ensuremath{\Conid{Δ}\vdash^{\star}\tau_{2}\mathbin{:}\mathbin{★}}
  \\ \ensuremath{\Conid{Δ}\vdash^{\star}\Varid{ε}\mathbin{:}\Conid{R}}
  \\ \ensuremath{\Conid{Δ}\vdash^{\star}\epsilon_{l}\mathbin{:}\Conid{R}}}
  {  \ensuremath{\Conid{Δ}\vdash^{\star}\tau_{1}\to \tau_{2}\mathbin{:}\mathbin{★}}}
  \and
  \inferrule [WF-Suspend]
  {  \ensuremath{\Conid{Δ}\vdash^{\star}\Varid{τ}\mathbin{:}\mathbin{★}}
  \\ \ensuremath{\Conid{Δ}\vdash^{\star}\Varid{ε}\mathbin{:}\Conid{R}}
  \\ \ensuremath{\Conid{Δ}\vdash^{\star}\epsilon_{l}\mathbin{:}\Conid{R}} }
  {  \ensuremath{\Conid{Δ}\vdash^{\star}}\ann{\ensuremath{\Varid{ε}\mathbin{*}\epsilon_{l}}}{\susp{τ}}\ensuremath{\mathbin{:}\mathbin{★}} }
  \and
  \inferrule [WF-Empty]
  { }
  {\ensuremath{\Conid{Δ}\vdash^{\star}\mathbin{⟨⟩}\mathbin{:}\Conid{R}}}
  \and
  \inferrule [WF-Cons]
  {  \ensuremath{\Varid{ℓ}\mathbin{∈}\Conid{Σ}} 
  \\ \ensuremath{\Conid{Δ}\vdash^{\star}\Varid{ε}\mathbin{:}\Conid{R}} }
  {  \ensuremath{\Conid{Δ}\vdash^{\star}\mathbin{⟨}\Varid{ℓ},\Varid{ε}\mathbin{⟩}\mathbin{:}\Conid{R}}}
\end{mathpar}
\vspace{.5em}%
\hrule
\caption{Well-formedness rules for types}
\label{fig:well-formedness}
\end{figure}

We use \emph{kinds} to ensure that types are well-formed. The judgment \ensuremath{\Conid{Δ}\vdash^{\star}\Varid{τ}\mathbin{:}\Varid{k}} states that the type \ensuremath{\Varid{τ}} is well-formed with kind \ensuremath{\Varid{k}} under context \ensuremath{\Conid{Δ}},
where \ensuremath{\Conid{Δ}} contains mappings for both type and row variables. We present the
corresponding rules in \cref{fig:well-formedness}. The rules for type variables
(\textsc{WF-Var}), type application (\textsc{WF-App}), and function types
(\textsc{WF-Arrow}) are entirely standard. The rule \textsc{WF-Suspend} asserts
well-formedness of a suspension type, and features additional requirements that
its effect annotations \ensuremath{\Varid{ε}} and \ensuremath{\epsilon_{l}} are well-formed.

Well-formedness for effect rows is defined with the rules \textsc{WF-Empty} and
\textsc{WF-Cons}, declaring respectively that the empty row is well formed, and
that a non-empty row is well-formed if its head is a known effect and its tail
is well-formed. To assert that a label \ensuremath{\Varid{ℓ}} is a known effect, we assume a global
context \ensuremath{\Conid{Σ}} containing all declared effects, which maps effect labels to the set
of operations of that effect.

\paragraph{Expressions and values} The term syntax of \ensuremath{\Varid{λ}^{\Varid{hop}}}, which
distinguishes expressions from values, takes its basic constructs from the
\ensuremath{\Varid{λ}}-calculus; i.e., abstraction, application, variables, let-bindings, and
constants. Additionally, \ensuremath{\Varid{λ}^{\Varid{hop}}} includes an introduction/elimination pair for
suspension types: $\susp{\ensuremath{\Varid{e}}}$ suspends the side effects of \ensuremath{\Varid{e}}, while \ensuremath{\Varid{e}\kw{!}}
enacts any suspended side effects.

The \ensuremath{\kw{handle}^{\ell}} construct installs a handler for the effect \ensuremath{\Varid{ℓ}}, and has three
arguments: a list of clauses (\ensuremath{\overline{\Conid{C}}}) and two expressions. The first expression
argument is the \emph{handler parameter}, which is distributed to all handler
clauses to allow us to define handlers that interpret into a function space
(e.g., a state effect which takes the initial state as input). The second
expression argument is the expression in which we handle the operations of
\ensuremath{\Varid{ℓ}}. The list of clauses, \ensuremath{\overline{\Conid{C}}}, defines the semantics of these operations,
and should contain exactly one operation clause for each operation of \ensuremath{\Varid{ℓ}},
together with a single \ensuremath{\kw{return}} clause.

\paragraph{Row equivalence and concatenation}

\begin{figure}
  \hrule
\vspace{.5em}\raggedleft\framebox{\ensuremath{\epsilon_{1}\simeq\epsilon_{2}}}%
\begin{mathpar}
  \inferrule [Eq-Refl]
  { }
  {  \ensuremath{\Varid{ε}\simeq\Varid{ε}}}
  \and
  \inferrule [Eq-Trans]
  {  \ensuremath{\epsilon_{1}\simeq\epsilon_{2}} \\ \ensuremath{\epsilon_{2}\simeq\epsilon_{3}} }
  {  \ensuremath{\epsilon_{1}\simeq\epsilon_{3}} }
  \and 
  \inferrule [Eq-Head]
  {  \ensuremath{\epsilon_{1}\simeq\epsilon_{2}} }
  {  \ensuremath{\mathbin{⟨}\Varid{ℓ},\epsilon_{1}\mathbin{⟩}\simeq\mathbin{⟨}\Varid{ℓ},\epsilon_{2}\mathbin{⟩}} }
  \and
  \inferrule [Eq-Swap]
  {  \ensuremath{\ell_{1}\neq\ell_{2}} }
  { \ensuremath{\mathbin{⟨}\ell_{1},\mathbin{⟨}\ell_{2},\Varid{ε}\mathbin{⟩⟩}\simeq\mathbin{⟨}\ell_{2},\mathbin{⟨}\ell_{1},\Varid{ε}\mathbin{⟩⟩}} }
\end{mathpar}
\raggedleft{\vspace{.5em}\framebox{\ensuremath{\epsilon_{1}\mathbin{⊕}\epsilon_{2}\simeq\epsilon_{3}}}}%
\begin{mathpar}
  \inferrule [Concat-Nil]
  { }
  { \ensuremath{\mathbin{⟨⟩}\mathbin{⊕}\Varid{ε}\simeq\Varid{ε}} }
  \and
  \inferrule [Concat-Cons]
  { \ensuremath{\epsilon_{1}\mathbin{⊕}\epsilon_{2}\simeq\epsilon_{3}}}
  { \ensuremath{\mathbin{⟨}\Varid{ℓ},\epsilon_{1}\mathbin{⟩}\mathbin{⊕}\epsilon_{2}\simeq\mathbin{⟨}\Varid{ℓ},\epsilon_{3}\mathbin{⟩}} }
\end{mathpar}
\hrule
\caption{Effect row equivalence and concatenation}
\label{fig:row-equivalence}
\end{figure}

In the definition of \ensuremath{\Varid{λ}^{\Varid{hop}}}'s type-and-effect system, we will work modulo a
notion of \emph{row equivalence}. That is, whenever two terms are annotated with
equal rows, we only assume that these rows are equivalent up to
reordering. \cref{fig:row-equivalence} defines this equivalence relation, \ensuremath{\simeq},
for effect rows. Furthermore, we will occasionally need to refer to the
\emph{concatenation} of two rows. Formally, we define (in
\cref{fig:row-equivalence}) row concatenation as a ternary relation, where a
proof of the form \ensuremath{\epsilon_{1}\mathbin{⊕}\epsilon_{2}\simeq\epsilon_{3}} witnesses that the concatenation of \ensuremath{\epsilon_{1}} and
\ensuremath{\epsilon_{2}} is equal (up to reordering) to \ensuremath{\epsilon_{3}}. In the typing rules
(\cref{fig:typing-rules}), however, we leave row reorderings implicit, and write
\ensuremath{\epsilon_{1}\mathbin{⊕}\epsilon_{2}} instead.

\subsection{Type-and-Effect System}
\label{sec:type-system}

The type-and-effect system is defined using a single judgment, for which we
present the rules in \cref{fig:typing-rules}. A judgment of the form \ensuremath{\Conid{Γ}\mathbin{⊢}\Varid{e}\mathbin{:}\Varid{τ}\mathbin{∣}\Varid{ε}\mathbin{*}\epsilon_{l}} asserts that the expression \ensuremath{\Varid{e}} has type \ensuremath{\Varid{τ}} under context \ensuremath{\Conid{Γ}}, and
side effects \ensuremath{\Varid{ε}} (immediate) and \ensuremath{\epsilon_{l}} (latent). As we show in
\cref{sec:type-soundness}, the effect annotations are a sound over-approximation
of an expression's side effects: the effects in the annotation may or may not
happen, but any side-effect that does happen is included in the annotation.

Variables do not have any side effects, as reflected in the rule \textsc{T-Var},
which asserts that variable expressions are typeable under any effect
annotation. Similarly, \ensuremath{\Varid{λ}}-abstractions also do not produce any side effects,
and hence the rule \textsc{T-Abs} permits any effect annotation. The body of a
\ensuremath{\Varid{λ}}-abstraction, however, must by typeable with the empty set of effects, or in
other words: it must be \emph{pure}. All functions in \ensuremath{\Varid{λ}^{\Varid{hop}}} are pure by
definition, and effectful functions can only be encoded by returning an
effect-annotated suspension. This also clarifies the \textsc{T-App} rule, where
only the effect-annotations of the function and argument expression inform the
effects of the resulting expression, but not the function type itself. We treat
the components of a let-expressions as pure expressions in a similar
fashion. The \textsc{T-Gen} and \textsc{T-Inst} rules are the familiar rules for
generalization and instantiation of the polymorphic \ensuremath{\Varid{λ}}-calculus.

\paragraph{Effect resurfacing}
While annotations distinguish between an expression's immediate and latent
effects, latent effects may \emph{resurface} and become immediate again with the
\textsc{T-Resurface} rule. The intuition behind resurfacing is that it allows us
to re-annotate an expression with a less precise approximation of its side
effects, in the sense that we lose the information that the effect \ensuremath{\Varid{ℓ}} occurs
exclusively under at least one layer of suspension. Resurfacing is necessary to
type handlers that recurse on sub-computations, where the handled effect should
be discharged to the surrounding context instead of becoming latent. We discuss
the need for resurfacing in more detail in \cref{sec:immediate-latent}.

\paragraph{Suspension/Enactment}
\textsc{T-Suspend} and \textsc{T-Enact} define respectively introduction and
elimination rules for suspension types. A suspended computation has no effects,
so we may annotate it however is most convenient: the effects of a suspension are
completely unrelated to its context. The opposite is true for enactment, where
it is required that the effects of the enacted computation exactly match the
effects provided by the surrounding context.

\begin{figure} 
\hrule
\vspace{.5em}\raggedleft\framebox{\ensuremath{\Conid{Γ}\mathbin{⊢}\Varid{e}\mathbin{:}\Varid{σ}\mathbin{∣}\Varid{ε}\mathbin{*}\epsilon_{l}}}%
\begin{mathpar}
  \inferrule [T-Var]
  {  \ensuremath{\Conid{Γ}\;(\Varid{x})\mathrel{=}\Varid{σ}}  }
  {  \ensuremath{\Conid{Γ}\mathbin{⊢}\Varid{x}\mathbin{:}\Varid{σ}\mathbin{∣}\Varid{ε}\mathbin{*}\epsilon_{l}}  }
  \and
  \inferrule [T-Abs]
  {  \ensuremath{\Conid{Γ},\Varid{x}\mathbin{:}\tau_{1}\mathbin{⊢}\Varid{e}\mathbin{:}\tau_{2}\mathbin{∣}\mathbin{⟨⟩}\mathbin{*}\mathbin{⟨⟩}}}
  {  \ensuremath{\Conid{Γ}\mathbin{⊢}\Varid{λ}\;\Varid{x}}.\ensuremath{\Varid{e}\mathbin{:}\tau_{1}\to \tau_{2}\mathbin{∣}\Varid{ε}\mathbin{*}\epsilon_{l}} }
  \and
  \inferrule [T-App]
  {  \ensuremath{\Conid{Γ}\mathbin{⊢}e_{1}\mathbin{:}\tau_{1}\to \tau_{2}\mathbin{∣}\Varid{ε}\mathbin{*}\epsilon_{l}}
  \\ \ensuremath{\Conid{Γ}\mathbin{⊢}e_{2}\mathbin{:}\tau_{2}\mathbin{∣}\Varid{ε}\mathbin{*}\epsilon_{l}}  }
  {  \ensuremath{\Conid{Γ}\mathbin{⊢}e_{1}\;e_{2}\mathbin{:}\tau_{2}\mathbin{∣}\Varid{ε}\mathbin{*}\epsilon_{l}} }
  \and
  \inferrule [T-Let]
  {  \ensuremath{\Conid{Γ}\mathbin{⊢}e_{1}\mathbin{:}\tau_{1}\mathbin{∣}\Varid{ε}\mathbin{*}\epsilon_{l}}
  \\ \ensuremath{\Conid{Γ},\Varid{x}\mathrel{=}\tau_{1}\mathbin{⊢}e_{2}\mathbin{:}\tau_{2}\mathbin{∣}\Varid{ε}\mathbin{*}\epsilon_{l}} }
  {  \ensuremath{\Conid{Γ}\mathbin{⊢}\mathbf{let}\;\Varid{x}\mathrel{=}e_{1}\;\mathbf{in}\;e_{2}\mathbin{:}\tau_{2}\mathbin{∣}\Varid{ε}\mathbin{*}\epsilon_{l}} }
  \\
  \inferrule [T-Gen]
  {  \ensuremath{\Conid{Γ}\mathbin{⊢}\Varid{e}\mathbin{:}\Varid{τ}\mathbin{∣}\Varid{ε}\mathbin{*}\epsilon_{l}}
  \\ \ensuremath{\overline{\Varid{α}},\overline{\Varid{r}}}\ \textsf{not free in \ensuremath{\Conid{Γ}}}}  
  {  \ensuremath{\Conid{Γ}\mathbin{⊢}\Varid{e}\mathbin{:}\mathbin{∀}\overline{\Varid{α}}\;\overline{\Varid{r}}}.\ensuremath{\Varid{τ}\mathbin{∣}\Varid{ε}\mathbin{*}\epsilon_{l}} }
  \and
  \inferrule [T-Inst]
  {  \ensuremath{\Conid{Γ}\mathbin{⊢}\Varid{e}\mathbin{:}\mathbin{∀}\overline{\Varid{α}}\;\overline{\Varid{r}}}.\ensuremath{\Varid{τ}\mathbin{∣}\Varid{ε}^{\prime}\mathbin{*}\epsilon_{l}^{\prime}} }
  {  \ensuremath{\Conid{Γ}\mathbin{⊢}\Varid{e}\mathbin{:}\Varid{τ}}[\overline{\ensuremath{\Varid{α}}/\ensuremath{\Varid{τ}^{\prime}}},\overline{\ensuremath{\Varid{r}}/\ensuremath{\Varid{ε}}}]\ensuremath{\mathbin{∣}\Varid{ε}^{\prime}\mathbin{*}\epsilon_{l}^{\prime}} }
  \and
  \inferrule [T-Resurface]
  {  \ensuremath{\Conid{Γ}\mathbin{⊢}\Varid{e}\mathbin{:}\Varid{τ}\mathbin{∣}\Varid{ε}\mathbin{*}\mathbin{⟨}\Varid{ℓ},\epsilon_{l}\mathbin{⟩}}}
  {  \ensuremath{\Conid{Γ}\mathbin{⊢}\Varid{e}\mathbin{:}\Varid{τ}\mathbin{∣}\mathbin{⟨}\Varid{ℓ},\Varid{ε}\mathbin{⟩}\mathbin{*}\epsilon_{l}}}
  \and
  \inferrule [T-Suspend]
  {  \ensuremath{\Conid{Γ}\mathbin{⊢}\Varid{e}\mathbin{:}\Varid{τ}\mathbin{∣}\Varid{ε}\mathbin{*}\epsilon_{l}} }
  {  \ensuremath{\Conid{Γ}\mathbin{⊢}}\susp{e}\ensuremath{\mathbin{:}}\ann{\ensuremath{\Varid{ε}\mathbin{*}\epsilon_{l}}}{\susp{τ}}\ensuremath{\mathbin{∣}\Varid{ε}^{\prime}\mathbin{*}\epsilon_{l}^{\prime}} }
  \and
  \inferrule [T-Enact]
  {  \ensuremath{\Conid{Γ}\mathbin{⊢}\Varid{e}\mathbin{:}}\ann{\ensuremath{\Varid{ε}\mathbin{*}\epsilon_{l}}}{\susp{τ}}\ensuremath{\mathbin{∣}\Varid{ε}\mathbin{*}\epsilon_{l}}}
  {  \ensuremath{\Conid{Γ}\mathbin{⊢}\Varid{e}\kw{!}\mathbin{:}\Varid{τ}\mathbin{∣}\Varid{ε}\mathbin{*}\epsilon_{l}} }
  \and 
  \inferrule [T-Handle]
  {  \ensuremath{\Conid{Γ}^{\prime}\triangleq\Conid{Γ},}\overline{\ensuremath{x_{i}\mathbin{:}\tau_{i}}[\ensuremath{\Varid{r}}/\ensuremath{\Varid{ε}},\ensuremath{r_{l}}/\ensuremath{\epsilon_{l}}]}\ensuremath{,(\Varid{p}\mathbin{:}\tau_{p}),(\Varid{k}\mathbin{:}\tau_{p}\to }\ann{\ensuremath{\mathbin{⟨}\Varid{ℓ},\Varid{ε}\mathbin{⊕}\epsilon_{l}\mathbin{⟩}\mathbin{*}\mathbin{⟨⟩}}}{\susp{\ensuremath{\tau_{1}}}}\ \ensuremath{\to }\ann{\ensuremath{\Varid{ε}\mathbin{*}\mathbin{⟨}\Varid{ℓ},\epsilon_{l}\mathbin{⟩}}}{\susp{\ensuremath{\tau_{2}}}}\ensuremath{)}
  \\ \ensuremath{\Conid{Γ}\mathbin{⊢}\op{op}^{ℓ}_{i}\mathbin{:}\mathbin{∀}\Varid{r}\;r_{l}}\ .\ \ensuremath{\overline{\tau_{i}}\to }\ann{\ensuremath{\mathbin{⟨}\Varid{ℓ},\Varid{r}\mathbin{⊕}r_{l}\mathbin{⟩}\mathbin{*}\mathbin{⟨⟩}}}{\susp{\ensuremath{\tau_{i}}}}\ \ensuremath{\mathbin{∣}\Varid{ε}^{\prime}\mathbin{*}\epsilon_{l}^{\prime}} 
  \\ \ensuremath{\Conid{Γ},\Varid{x}\mathbin{:}\tau_{1},\Varid{p}\mathbin{:}\tau_{p}\mathbin{⊢}e_{r}\mathbin{:}}\ann{\ensuremath{\Varid{ε}\mathbin{*}\mathbin{⟨}\Varid{ℓ},\epsilon_{l}\mathbin{⟩}}}{\susp{\ensuremath{\tau_{2}}}}\ \ensuremath{\mathbin{∣}\Varid{ε}^{\prime}\mathbin{*}\epsilon_{l}^{\prime}}
  \\ \ensuremath{\Conid{Γ}^{\prime}\mathbin{⊢}e_{i}\mathbin{:}}\ann{\ensuremath{\Varid{ε}\mathbin{*}\mathbin{⟨}\Varid{ℓ},\epsilon_{l}\mathbin{⟩}}}{\susp{\ensuremath{\tau_{2}}}}\ \ensuremath{\mathbin{∣}\Varid{ε}^{\prime}\mathbin{*}\epsilon_{l}^{\prime}}
  \\ \ensuremath{\Conid{Γ}\mathbin{⊢}e_{p}\mathbin{:}\tau_{p}\mathbin{∣}\Varid{ε}^{\prime}\mathbin{*}\epsilon_{l}^{\prime}}
  \\ \ensuremath{\Conid{Γ}\mathbin{⊢}\Varid{e}\mathbin{:}\tau_{1}\mathbin{∣}\mathbin{⟨}\Varid{ℓ},\Varid{ε}\mathbin{⟩}\mathbin{*}\epsilon_{l}}
  \\ \overline{\ensuremath{\op{op}^{ℓ}_{i}\;\Varid{f}}}\ \ensuremath{\mathrel{=}\Conid{Σ}}(ℓ) }
  {  \ensuremath{\Conid{Γ}\mathbin{⊢}\kw{handle}^{\ell}\;\{\mskip1.5mu }\overline{\ensuremath{\op{op}^{\ell}\;\Varid{f}\;\overline{x}\;\overline{m}\;\Varid{p}\;\Varid{k}\mathbin{↦}\Varid{e}}}\ \ensuremath{;\kw{return}\;\Varid{x}\;\Varid{p}\mathbin{↦}e_{r}\mskip1.5mu\}\;e_{p}\;\Varid{e}\mathbin{:}}\ann{\ensuremath{\Varid{ε}\mathbin{*}\mathbin{⟨}\Varid{ℓ},\epsilon_{l}\mathbin{⟩}}}{\susp{\ensuremath{\tau_{2}}}}\ensuremath{\mathbin{∣}\Varid{ε}^{\prime}\mathbin{*}\epsilon_{l}^{\prime}} }
\end{mathpar}
\hrule
\caption{Typing rules}
\label{fig:typing-rules}
\end{figure}

\paragraph{Operations}
Before considering how to type effect handlers, we must say a few words about
the type of operations. While operations are typed using \textsc{T-Var}, we do
make some assumptions about their type. In general, we say that the operations
of an effect \ensuremath{\Varid{ℓ}} return a computation with the following type:
$\ann{\ensuremath{\mathbin{⟨}\Varid{ℓ},\Varid{r}\mathbin{⊕}r_{l}\mathbin{⟩}\mathbin{*}\mathbin{⟨⟩}}}{\susp{\ensuremath{\Varid{τ}}}}$, where \ensuremath{\Varid{τ}} is the return type of the
operation, and \ensuremath{\Varid{r}} and \ensuremath{r_{l}} are universally quantified row variables that are
instantiated at the call-site. Effectively, this means that the operations of
the effect \ensuremath{\Varid{ℓ}} can be called in any context where the immediate effects contain
the label \ensuremath{\Varid{ℓ}}. For instance, consider the type of the \ensuremath{\op{throw}} operation:
\begin{equation*}
\ensuremath{\op{throw}\mathbin{:}\mathbin{∀}\Varid{α}\;\Varid{r}\;r_{l}}\ .\ \ann{\ensuremath{\mathbin{⟨}\Varid{exc},\Varid{r}\mathbin{⊕}r_{l}\mathbin{⟩}\mathbin{*}\mathbin{⟨⟩}}}{\susp{\ensuremath{\Varid{α}}}}
\end{equation*}

\noindent
We assume that the type of every defined operation universally quantifies over
\ensuremath{\Varid{r}} and \ensuremath{r_{l}}, and that these denote respectively the immediate and latent
effects of the context of its handler. This implies that between an operation
and its handler, there must be handlers for all latent effects. This restricts
the usage of operations somewhat, as we discuss in \cref{sec:immediate-latent}.

Operations with one or more arguments are assigned a function type, where the
type of the arguments is determined by the operation's definition. Formally, we
do not distinguish sub-computations from ``regular'' parameters, but we reserve
the terminology for parameters that are suspended computations. To illustrate,
consider the \ensuremath{\op{put}} operation, which has one regular argument, and the \ensuremath{\op{catch}}
operation, which takes two sub-computations:
\begin{gather*}
\ensuremath{\op{put}\mathbin{:}\mathbin{∀}\Varid{r}\;r_{l}}\ .\ \ensuremath{\Varid{s}\to }\ann{\ensuremath{\mathbin{⟨}\Varid{state},\Varid{r}\mathbin{⊕}r_{l}\mathbin{⟩}\mathbin{*}\mathbin{⟨⟩}}}{\susp{\ensuremath{()}}} \\
\ensuremath{\op{catch}\mathbin{:}\mathbin{∀}\Varid{α}\;\Varid{r}\;r_{l}}\ .\ \ann{\ensuremath{\mathbin{⟨}\Varid{exc},\Varid{r}\mathbin{⊕}r_{l}\mathbin{⟩}\mathbin{*}\mathbin{⟨⟩}}}{\susp{\ensuremath{\Varid{α}}}}\ \ensuremath{\to }\ann{\ensuremath{\mathbin{⟨}\Varid{exc},\Varid{r}\mathbin{⊕}r_{l}\mathbin{⟩}\mathbin{*}\mathbin{⟨⟩}}}{\susp{\ensuremath{\Varid{α}}}}\ensuremath{\to } \ann{\ensuremath{\mathbin{⟨}\Varid{exc},\Varid{r}\mathbin{⊕}r_{l}\mathbin{⟩}\mathbin{*}\mathbin{⟨⟩}}}{\susp{\ensuremath{\Varid{α}}}} 
\end{gather*}

\paragraph{Typing handlers}
Effect handlers are typed with \textsc{T-Handle}. This rule is quite a handful, so
let us unpack its definition. The conclusion states that handling the effect \ensuremath{\Varid{ℓ}}
produces a suspension, annotated with effects \ensuremath{\Varid{ε}\mathbin{*}\mathbin{⟨}\Varid{ℓ},\epsilon_{l}\mathbin{⟩}}, where \ensuremath{\Varid{ε}} and
\ensuremath{\epsilon_{l}} depend on the context in which the handler is invoked. Similar to
\textsc{T-Suspend}, we say that the side effects of producing this suspension
are completely independent from the side effects of the returned
computation. Furthermore, in the conclusion we syntactically impose the
restriction that the clause list ends with a single return clause, and we assume
that the collection of operation clauses, \ensuremath{\overline{(\op{op}^{\ell}\;\Varid{f})}}, matches the operations
defined for \ensuremath{\Varid{ℓ}} in \ensuremath{\Conid{Σ}}.

In the premise of \textsc{T-Handle}, we first define the context \ensuremath{\Conid{Γ}^{\prime}} under
which operation clauses will be typed. \ensuremath{\Conid{Γ}^{\prime}} extends \ensuremath{\Conid{Γ}} with bindings for the
operation's arguments (\ensuremath{\overline{x_{i}}}), the handler parameter (\ensuremath{\Varid{p}}), and the
continuation (\ensuremath{\Varid{k}}). The continuation returns a suspension that matches the type
of the entire expression --- i.e.  $\ann{\ensuremath{\Varid{ε}\mathbin{*}\mathbin{⟨}\Varid{ℓ},\epsilon_{l}\mathbin{⟩}}}{\susp{\ensuremath{\tau_{2}}}}$ --- and
takes two arguments: the handler parameter and a computation argument of type
$\ann{\ensuremath{\mathbin{⟨}\Varid{ℓ},\Varid{ε}\mathbin{⊕}\epsilon_{l}\mathbin{⟩}\mathbin{*}\mathbin{⟨⟩}}}{\susp{\ensuremath{\tau_{2}}}}$. At run-time, this computation
argument will be evaluated under the same evaluation context as the handled
operation, for which the type-and-effect system statically guarantees that it
contains handlers for all effects in \ensuremath{\epsilon_{l}}. We substitute \ensuremath{\Varid{r}} and \ensuremath{r_{l}} in the
types of \ensuremath{\overline{x_{i}}} precisely such that sub-computations will match the type of
this computation argument, providing an ``escape hatch'' for the effects in
\ensuremath{\epsilon_{l}}, which are immediate in sub-computations, but latent in the handler's
context. These effects are discharged to the handlers in the evaluation context
of the handled operation. Indeed, the operation clause for \ensuremath{\op{catch}} in
\cref{sec:ho-effects} enacts its sub-computations as part of \ensuremath{\Varid{k}}'s computational
argument. We discuss the operational semantics of \ensuremath{\Varid{λ}^{\Varid{hop}}} in more detail in
\cref{sec:operational-semantics}.

Furthermore, \textsc{T-Handle} asserts that the implementation of the \ensuremath{\kw{return}}
clause, \ensuremath{e_{r}}, can be typed under context \ensuremath{\Conid{Γ},\Varid{x}\mathbin{:}\tau_{1},\Varid{p}\mathbin{:}\tau_{p}}, and that the
implementations of all operation clauses, \ensuremath{e_{i}}, can be typed under the
aforementioned context \ensuremath{\Conid{Γ}^{\prime}}. Both \ensuremath{e_{r}} and \ensuremath{e_{i}} should have the same type as the
handle expression as a whole. Finally, we also assert that the parameter
expression, \ensuremath{e_{p}}, and the handled expression, \ensuremath{\Varid{e}}, are well-typed, where the
effect annotation of \ensuremath{e_{p}} matches the entire expression, and \ensuremath{\Varid{e}} that of the
returned suspension, with the label \ensuremath{\Varid{ℓ}} shifted to the latent side.

\subsection{On Immediate and Latent Effects}
\label{sec:immediate-latent}

As described above, a suspension type \ensuremath{\ann{\Varid{ε}\mathbin{*}\epsilon_{l}}{\susp{\Varid{τ}}}} distinguishes a row of \emph{immediate effects} (\ensuremath{\Varid{ε}}) and a row of \emph{latent effects} (\ensuremath{\epsilon_{l}}).
The immediate effects describe the set of operations for which the suspension has no handler installed, whereas the latent effects describe the set of operations for which there is a handler installed inside the suspension.
We briefly discuss the need for tracking both immediate and latent effects, and how this affects what programs can and cannot be typed using our type-and-effect system. 

\paragraph{The need to track immediate and latent effects}
It is necessary to track latent effects because handlers do not eagerly propagate into suspensions.
For example, consider the following expression, where \ensuremath{\op{op}^{\ell}} is some higher-order operation:
\begin{equation*}
\ensuremath{\kw{handle}}^{\ensuremath{\Conid{State}}}\ \ensuremath{\{\mskip1.5mu }\ldots\ensuremath{\mskip1.5mu\}\;(\op{op}^{\ell}\;\{\mskip1.5mu (\op{put}\;\mathrm{0})\kw{!}\mskip1.5mu\})} 
\end{equation*}

\noindent
This expression has a suspension type: \ensuremath{\ann{\Varid{ε}\mathbin{*}\mathbin{⟨}\Conid{State},\epsilon_{l}\mathbin{⟩}}{\susp{\Varid{τ}}}} for some τ and some ε, where ε contains at least the effect label of the \ensuremath{\op{op}^{\ell}} operation.
In conventional effect handler type systems, applying a handler such as $\ensuremath{\kw{handle}}^{State}$ would remove the \ensuremath{\Conid{State}} effect from the result type.
The reason \ensuremath{\Conid{State}} features as a latent effect in the type fo the expression above is that the \ensuremath{\op{put}} operation occurs in a sub-computation of \ensuremath{\op{op}^{\ell}}.
In \ensuremath{\Varid{λ}^{\Varid{hop}}}, the state handler is \emph{not} automatically applied to sub-computations, so if a handler of the \ensuremath{\op{op}^{\ell}} operation enacts its sub-computation, the \ensuremath{\Conid{State}} effect becomes immediate again.
That is why our type system tracks latent effects: simply removing an effect from the annotation of suspensions when applying a handler loses information about the immediate effects of sub-computations.
To avoid this information loss, handlers ``shift'' a label from the immediate to the latent annotation (see \textsc{T-Handle} in \cref{fig:typing-rules}).

\paragraph{Effect resurfacing}
The relocation of annotations from the immediate to the latent annotation by
handlers has subtle implications for \emph{recursive handlers}; i.e., a handler
that invokes itself. The handler \ensuremath{\Varid{hCatch}} defined \cref{sec:ho-effects} does
exactly this: the handler clause for \ensuremath{\op{catch}} is implemented by recursively
invoking \ensuremath{\Varid{hCatch}}, and inspecting the result:
\begin{hscode}\SaveRestoreHook
\column{B}{@{}>{\hspre}l<{\hspost}@{}}%
\column{3}{@{}>{\hspre}l<{\hspost}@{}}%
\column{24}{@{}>{\hspre}l<{\hspost}@{}}%
\column{E}{@{}>{\hspre}l<{\hspost}@{}}%
\>[3]{}\op{catch}\;\Varid{m}_{1}\;\Varid{m}_{2}\;\Varid{k}\mathbin{↦}\Varid{k}\;\{\mskip1.5mu {}\<[24]%
\>[24]{}\kw{match}\;(\Varid{hCatch}\;\Varid{m}_{1})\kw{!}\mid \Conid{Nothing}\to \Varid{m}_{2}\kw{!}\mid \Conid{Just}\;\Varid{x}\to \Varid{x}\mskip1.5mu\}{}\<[E]%
\ColumnHook
\end{hscode}\resethooks

\noindent
Following \textsc{T-Handle}, this clause is part of a \ensuremath{\kw{handle}} expression annotated with \ensuremath{\Varid{ε}\mathbin{*}\mathbin{⟨}\Conid{Catch},\epsilon_{l}\mathbin{⟩}}. 
Consequently, the continuation argument is annotated with \ensuremath{\mathbin{⟨}\Conid{Catch},\Varid{ε}\mathbin{⊕}\epsilon_{l}\mathbin{⟩}\mathbin{*}\mathbin{⟨⟩}}.
The recursive invocation \ensuremath{(\Varid{hCatch}\;\Varid{m}_{1})\kw{!}}, however, has effects \ensuremath{\Varid{ε}\mathbin{⊕}\epsilon_{l}\mathbin{*}\mathbin{⟨}\Conid{Catch}\mathbin{⟩}}, which is incompatible with \ensuremath{\mathbin{⟨}\Conid{Catch},\Varid{ε}\mathbin{⊕}\epsilon_{l}\mathbin{⟩}\mathbin{*}\mathbin{⟨⟩}}.
The \textsc{T-Resurface} rule mediates \ensuremath{\Conid{Catch}} from a latent effect into an immediate effect to make the recursive call typeable.
This mediation is essentially a form of up-casting. 

\paragraph{Operation types restrict use of inline handlers}
As discussed in \cref{sec:type-system}, our type-and-effect system types operations of effect \ensuremath{\Varid{ℓ}} as having a type $\ann{\ensuremath{\mathbin{⟨}\Varid{ℓ},\Varid{r}\mathbin{⊕}r_{l}\mathbin{⟩}\mathbin{*}\mathbin{⟨⟩}}}{\susp{\ensuremath{\Varid{τ}}}}$ where \ensuremath{\Varid{r}} and \ensuremath{r_{l}} are polymorphically-quantified row variables.
As a result, operations are only callable from a context with an empty latent effect row.
Without this restriction, we would be able to type a suspension to make it appear as if we have pre-applied a handler to a sub-computation.
For example, we could type the expression \ensuremath{\op{catch}\;\{\mskip1.5mu (\op{put}\;\mathrm{0})\mathbin{!}\mskip1.5mu\}\;\{\mskip1.5mu ()\mskip1.5mu\}} as \ensuremath{\ann{\mathbin{⟨}\Conid{Catch}\mathbin{⟩}\mathbin{*}\mathbin{⟨}\Conid{State}\mathbin{⟩}}{\susp{()}}}.
However, this type would suggest that the following program is safe, which it is not:
\begin{hscode}\SaveRestoreHook
\column{B}{@{}>{\hspre}l<{\hspost}@{}}%
\column{E}{@{}>{\hspre}l<{\hspost}@{}}%
\>[B]{}(\Varid{hCatch}\;(\op{catch}\;\{\mskip1.5mu (\op{put}\;\mathrm{0})\kw{!}\mskip1.5mu\}\;\{\mskip1.5mu ()\mskip1.5mu\}))\kw{!}{}\<[E]%
\ColumnHook
\end{hscode}\resethooks
This program gets stuck because the \ensuremath{\op{put}\;\mathrm{0}} operation in the first branch of \ensuremath{\op{catch}} is unhandled.

This restrictive typing has implications beyond ruling out programs that go wrong.
It also rules out programs some that go right, and which readers familiar with, e.g., Koka~\cite{DBLP:conf/popl/Leijen17} or Frank~\cite{DBLP:journals/jfp/ConventLMM20}, might expect to be well-typed.
For example, consider the following program:
\begin{hscode}\SaveRestoreHook
\column{B}{@{}>{\hspre}l<{\hspost}@{}}%
\column{E}{@{}>{\hspre}l<{\hspost}@{}}%
\>[B]{}\Varid{example}\mathrel{=}\Varid{hState}\;\mathrm{0}\;\{\mskip1.5mu \mathbf{let}\;()\mathrel{=}(\op{put}\;\mathrm{1})\mathbin{!}\mathbf{in}\;(\Varid{hLocal}\;\mathrm{2}\;\{\mskip1.5mu \op{ask}\mathbin{!}\mskip1.5mu\})\mathbin{!}\mskip1.5mu\}{}\<[E]%
\ColumnHook
\end{hscode}\resethooks
Evaluation of this program will not get stuck: the state handler first updates the state to \ensuremath{\mathrm{1}}; subsequently, the local handler asks for the local binding \ensuremath{\mathrm{2}}.
Still, our type-and-effect system rejects it.
The reason is that \textsc{T-Let} requires both branches of the \ensuremath{\mathbf{let}} expression to have the same latent effects.
This implies that the type of the \ensuremath{\op{put}} operation is \ensuremath{\ann{\mathbin{⟨}\Conid{State}\mathbin{⟩}\mathbin{*}\mathbin{⟨}\Conid{Local}\mathbin{⟩}}{\susp{()}}}, which violates the operation typing restriction discussed above.

By insisting that operations have no latent effects we thus restrict programs that contain inline handlers, such as the \ensuremath{\Varid{example}} program above.
This may come as a surprise, as inline handlers are widely used in, e.g., Koka~\cite{DBLP:conf/popl/Leijen17} and Frank~\cite{DBLP:journals/jfp/ConventLMM20}.
However, in a system like \ensuremath{\Varid{λ}^{\Varid{hop}}}, we can generally use a higher-order operation anywhere you would want to use an inline handler.
In fact, we argue that we \emph{should} replace occurrences of inline handlers by higher-order operations, since inline handlers break down the distinction between the syntax and semantics of programs.
That is, the syntax of programs should be given by the operations that occur in them, while handlers should assign a semantics to those operations.
Programs implemented using inline handlers are not syntactic in this sense. 

\subsection{Operational Semantics}
\label{sec:operational-semantics}

We specify the run-time behaviour of \ensuremath{\Varid{λ}^{\Varid{hop}}} by giving a reduction semantics \`a
la Felleisen's \emph{prompt-control}~\cite{DBLP:conf/popl/Felleisen88}. The
rules are presented in \cref{fig:operational-semantics}. The operational
semantics is specified using three key components: 

\begin{enumerate}

  \item \emph{Evaluation contexts} (\ensuremath{\Conid{E}\mathbin{/}\Conid{E}^{\Varid{ℓ}}}), that capture the decomposition of a term into
    its \emph{leftmost-outermost} redex, and the surrounding expression. 

  \item
    A \emph{contraction relation} (\ensuremath{\mathbin{-->}}), that describes a single reduction
    step for a redex. 

  \item
    A \emph{reduction relation} (\ensuremath{ \longmapsto }), that describes the reduction of an
    expression at its leftmost-outermost redex. 
  
\end{enumerate}

\noindent
Together, these components specify evaluation for \ensuremath{\Varid{λ}^{\Varid{hop}}}: we repeatedly reduce
the leftmost-outermost redex of a term until it cannot be further reduced (i.e.,
it's a value). This evaluation strategy is formally captured by the reflexive
transitive closure of \ensuremath{ \longmapsto }, which we denote by \ensuremath{ \longtwoheadmapsto }. In
\cref{sec:type-soundness}, we show that, per Milner's motto,
evaluating well-typed expressions using \ensuremath{ \longtwoheadmapsto } \emph{cannot go wrong}.

\begin{figure}
\hrule%
\vspace{.5em}
\begin{align*}
  \ensuremath{\Conid{E}}   &::= \ensuremath{\mathbin{∙}\mathbin{∣}\Conid{E}\;\Varid{e}\mathbin{∣}\Varid{v}\;\Conid{E}\mathbin{∣}\mathbf{let}\;\Varid{x}\mathrel{=}\Conid{E}\;\mathbf{in}\;\Varid{e}\mathbin{∣}\Conid{E}\kw{!}\mathbin{∣}\kw{handle}^{\ell}\;\{\mskip1.5mu \overline{\Conid{C}}\mskip1.5mu\}\;\Conid{E}\;\Varid{e}} & \text{(Evaluation contexts)} \\ 
        & \ \ \ensuremath{\mathbin{∣}}\ \ \ensuremath{\kw{handle}^{\ell}\;\{\mskip1.5mu \overline{\Conid{C}}\mskip1.5mu\}\;\Varid{v}\;\Conid{E}}  \\
  \ensuremath{\Conid{E}^{\Varid{ℓ}}}  &::= \ensuremath{\mathbin{∙}\mathbin{∣}\Conid{E}^{\Varid{ℓ}}\;\Varid{e}\mathbin{∣}\Varid{v}\;\Conid{E}^{\Varid{ℓ}}\mathbin{∣}\mathbf{let}\;\Varid{x}\mathrel{=}\Conid{E}^{\Varid{ℓ}}\;\mathbf{in}\;\Varid{e}\mathbin{∣}\Conid{E}^{\Varid{ℓ}}\kw{!}\mathbin{∣}\kw{handle}^{\ell^\prime}\;\{\mskip1.5mu \overline{\Conid{C}}\mskip1.5mu\}\;\Conid{E}^{\Varid{ℓ}}\;\Varid{e}} \\
        & \ \ \ensuremath{\mathbin{∣}}\ \ \ensuremath{\kw{handle}^{\ell^\prime}\;\{\mskip1.5mu \overline{\Conid{C}}\mskip1.5mu\}\;\Varid{v}\;\Conid{E}^{\Varid{ℓ}}\;(}\text{if}\ \ensuremath{\Varid{ℓ}\neq\Varid{ℓ}^{\prime})} 
\end{align*}
\vspace{.5em}\raggedleft\framebox{\ensuremath{\Varid{e}\mathbin{-->}\Varid{e}}}%
\begin{mathpar}
  \inferrule[\ensuremath{\delta}]
  {\ensuremath{\delta} \ensuremath{(\Varid{c},\Varid{v})}\ \textsf{is defined} }
  {\ensuremath{\Varid{c}\;\Varid{v}\mathbin{-->}\delta} \ensuremath{(\Varid{c},\Varid{v})}}
  \and 
  \inferrule[\ensuremath{\Varid{β}}]{ }
  {\ensuremath{(\Varid{λ}\;\Varid{x}}.\ensuremath{\Varid{e})\;\Varid{v}\mathbin{-->}\Varid{e}}[\ensuremath{\Varid{x}}/\ensuremath{\Varid{v}}]}
  \and
  \inferrule[Let]{ }
  {\ensuremath{\mathbf{let}\;\Varid{x}\mathrel{=}\Varid{v}\;\mathbf{in}\;\Varid{e}\mathbin{-->}\Varid{e}}[x/v]}
  \and
  \inferrule[Enact]{ } 
  {\susp{e}\ensuremath{\kw{!}\mathbin{-->}\Varid{e}}}
  \and 
  \inferrule[Return]{ \ensuremath{\kw{return}\;\Varid{x}\;\Varid{p}\mathbin{↦}\Varid{e}\mathbin{∈}\overline{\Conid{C}}} }
  {\ensuremath{\kw{handle}^{\ell}\;\{\mskip1.5mu \overline{\Conid{C}}\mskip1.5mu\}\;\Varid{v}_{\Varid{p}}\;\Varid{v}\mathbin{-->}\Varid{e}}[p/\ensuremath{\Varid{v}_{\Varid{p}}},x/\ensuremath{\Varid{v}}]}
  \and
  \inferrule[Handle]
  {\ensuremath{\op{op}^{\ell}\;\overline{\Varid{x}}\;\Varid{p}\;\Varid{k}\mathbin{↦}\Varid{e}\mathbin{∈}\overline{\Conid{C}}}}
  {\ensuremath{\kw{handle}^{\ell}\;\{\mskip1.5mu \overline{\Conid{C}}\mskip1.5mu\}}\ \inE{\ensuremath{(\op{op}^{\ell}\;\Varid{f}\;\overline{\Varid{v}})\kw{!}}}\ \ensuremath{\Varid{v}_{\Varid{p}}\mathbin{-->}\Varid{e}}[\bsubst{x}{\ensuremath{\Varid{v}}},p/\ensuremath{\Varid{v}_{\Varid{p}}},k/\ensuremath{(\Varid{λ}\;\Varid{y}\;\Varid{q}}.\ensuremath{\kw{handle}^{\ell}\;\{\mskip1.5mu \overline{\Conid{C}}\mskip1.5mu\}\;\Varid{q}}\ \inE{\ensuremath{\Varid{y}\kw{!}}}\ \ensuremath{)}]}
\end{mathpar}
\raggedleft{\vspace{.5em}\framebox{\ensuremath{\Varid{e} \longmapsto \Varid{e}}}}%
\begin{mathpar}
  \inferrule
  { \ensuremath{e_{1}\mathbin{-->}e_{2}}  }
  { \ensuremath{E[e_{1}] \longmapsto }\ E[\ensuremath{e_{2}}] }
  \and
\end{mathpar}
\vspace{.5em}%
\hrule
\caption{Operational semantics}
\label{fig:operational-semantics}
\end{figure}

Evaluation contexts describe expressions with a single \emph{hole} (\ensuremath{\mathbin{∙}}) in
them. Their definition is carefully chosen to ensure that the hole always points
to the leftmost-outermost redex whenever we replace the hole with a contractible
expression. We define a variation, \ensuremath{\Conid{E}^{\Varid{ℓ}}}, that additionally ensures that the
context contains no handlers for the effect \ensuremath{\Varid{ℓ}}.

The \ensuremath{\delta}, \ensuremath{\Varid{β}}, and \textsc{Let} rules are the usual reduction rules for
call-by-value evaluation, where \ensuremath{\delta} is a partial function describing how to
evaluate fully-reduced applications of constants. \textsc{Enact} replaces
consecutive suspension and enactment with the original expression, and the
\textsc{Return} rule invokes the return clause of a handler, if both the handler
parameter and the handler argument have been evaluated to a value.

The \textsc{Handle} rule replaces operations with their semantics, assuming that
the handler parameter has been fully reduced. The handler's argument is a
decomposed expression, $\ensuremath{\Conid{E}^{\Varid{ℓ}}}[\ensuremath{(\op{op}^{\ell}\;\Varid{f}\;\overline{\Varid{v}})\kw{!}}]$, where the context \ensuremath{\Conid{E}^{\Varid{ℓ}}}
ensures that there are no closer handlers for \ensuremath{\Varid{ℓ}}. Reducing a handler means
replacing it with the corresponding clause for \ensuremath{\op{op}^{\ell}\;\Varid{f}}, appropriately
substituting the operation arguments, handler parameter, and continuation. In
the continuation, we install the same handler on the composite expression
$\ensuremath{\Conid{E}^{\Varid{ℓ}}}[\ensuremath{\Varid{y}\kw{!}}]$, giving the clause implementation an opportunity to discharge
any latent effects in sub-computations of \ensuremath{\op{op}^{\ell}\;\Varid{f}}, or the effect \ensuremath{\Varid{ℓ}} itself, to
the context of \ensuremath{\op{op}^{\ell}\;\Varid{f}}. Importantly, the type-and-effect system statically
guarantees that the context \ensuremath{\Conid{E}^{\Varid{ℓ}}} contains handlers for all latent effects, by
restricting the latent effect annotation of operations to the empty row. 

\subsection{Type Soundness}
\label{sec:type-soundness}

We prove soundness of our type system (\cref{fig:typing-rules}) with respect to
the operational semantics (\cref{fig:operational-semantics}) via the syntactic
approach of \citet{DBLP:journals/iandc/WrightF94}. The proof for the final
soundness theorem (\cref{thm:soundness}) follows from \emph{subject reduction}
(reducing expressions preserves their type, \cref{lem:subject-reduction}) and
\emph{progress} (every expression is either a value or can be reduced,
\cref{lem:progress}).

We start by proving subject reduction. For this, we need three additional
lemmas. 
\begin{lemma}[Substitution]
  \label{lem:substitution}
  Given an expression \ensuremath{\Varid{e}} such that
  \ensuremath{\Conid{Γ},\Varid{x}\mathbin{:∀}\overline{\Varid{α}}\;\overline{\Varid{r}}}.\ensuremath{\tau_{1}\mathbin{⊢}\Varid{e}\mathbin{:}\tau_{2}\mathbin{∣}\Varid{ε}\mathbin{*}\epsilon_{l}}, such that $x \notin \text{dom}(Γ)$, then for every value \ensuremath{\Varid{v}} where \ensuremath{\Conid{Γ}\mathbin{⊢}\Varid{v}\mathbin{:}\tau_{1}\mathbin{∣}\Varid{ε}\mathbin{*}\epsilon_{l}}, it follows that \ensuremath{\Conid{Γ}\mathbin{⊢}\Varid{e}}[x/v]\ensuremath{\mathbin{:}\tau_{2}\mathbin{∣}\Varid{ε}\mathbin{*}\epsilon_{l}} provided that \ensuremath{\overline{\Varid{α}},\overline{\Varid{r}}}$\ \notin \text{ftv}(Γ)$.
\end{lemma}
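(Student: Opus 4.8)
The plan is to prove \cref{lem:substitution} by induction on the derivation of $\Gamma, x : \forall\overline{\alpha}\,\overline{r}.\tau_1 \vdash e : \tau_2 \mid \epsilon * \epsilon_l$, allowing the conclusion's type to be a scheme (the last rule may be \textsc{T-Gen}). Before the induction I would record one auxiliary fact: since $v$ is a value, its derivation ends in \textsc{T-Var}, \textsc{T-Abs}, the constant rule, or \textsc{T-Suspend}, each of which permits an arbitrary effect annotation, so in fact $\Gamma \vdash v : \tau_1 \mid \epsilon' * \epsilon_l'$ for \emph{every} $\epsilon', \epsilon_l'$, not just the pair in the hypothesis. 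I also take for granted the routine weakening property (typeability survives extending $\Gamma$ with bindings for variables absent from the term), and I adopt the Barendregt convention so that binders occurring inside the derivation differ from $x$ and from the free variables of $v$, and the quantified variables $\overline{\alpha},\overline{r}$ of $x$'s scheme are bound-only in the derivation; the stated side condition $\overline{\alpha},\overline{r} \notin \text{ftv}(\Gamma)$ then supplies the one remaining fact that these variables are not free in $\Gamma$.

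The crux is the variable case. If $e = y$ with $y \neq x$, then $e[x/v] = y$ and \textsc{T-Var} re-derives the conclusion verbatim under $\Gamma$. If $e = x$, then \textsc{T-Var} forces $\tau_2$ to be $x$'s scheme $\forall\overline{\alpha}\,\overline{r}.\tau_1$ and $e[x/v] = v$; from $\Gamma \vdash v : \tau_1 \mid \epsilon * \epsilon_l$ and $\overline{\alpha},\overline{r} \notin \text{ftv}(\Gamma)$, rule \textsc{T-Gen} yields exactly $\Gamma \vdash v : \forall\overline{\alpha}\,\overline{r}.\tau_1 \mid \epsilon * \epsilon_l$. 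This is the single point at which the side condition is consumed. The polymorphism-manipulating rules follow immediately: for \textsc{T-Inst} apply the IH to the premise and re-instantiate; for \textsc{T-Gen} apply the IH and re-generalize, observing that its side condition (``the generalized variables are not free in $\Gamma, x : \forall\overline{\alpha}\,\overline{r}.\tau_1$'') \emph{a fortiori} gives ``not free in $\Gamma$'', so the rule still applies after substitution.

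The remaining rules are congruence cases, treated uniformly. For the non-binding rules (\textsc{T-App}, \textsc{T-Resurface}, \textsc{T-Suspend}, \textsc{T-Enact}) I would push $[x/v]$ through the subterms, invoke the IH on each premise --- the hypotheses on $v$ and the side condition transfer unchanged since $\Gamma$ is unchanged, and the auxiliary fact supplies $v$ at whatever effect annotation a premise carries --- and reassemble with the same rule. For the binding rules \textsc{T-Abs}, \textsc{T-Let}, and especially \textsc{T-Handle}, I would first $\alpha$-rename the bound variables (the $\lambda$-bound $y$; the \textbf{let}-bound variable; and $\overline{x}$, $p$, $k$ together with the return-clause variables of a handler) away from $x$ and from the free variables of $v$, then apply the IH to each subexpression under the correspondingly extended context, using weakening to lift the hypothesis on $v$ to that context and the freshness of $\overline{\alpha},\overline{r}$ to keep the side condition valid there, and finally reassemble; in \textsc{T-Handle} the operation-typing premise does not involve $x$ (operations being distinct from $x$) and re-derives under $\Gamma$ directly. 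I expect the main obstacle to be bureaucratic rather than conceptual: \textsc{T-Handle} carries many premises over the extended context $\Gamma'$ --- with bindings for $\overline{x}$, $p$, $k$ and the substitutions $\overline{r}/\epsilon$, $r_l/\epsilon_l$ inside the operation-argument types --- so one must carefully line up the IH on each clause body, on $e_p$, and on the handled subexpression with exactly the right types and effect annotations, and then re-check every side condition of \textsc{T-Handle} on the reassembled derivation. The one genuinely delicate move --- re-generalizing $v$ at $x$'s scheme --- is short, and sound precisely because of the hypothesis $\overline{\alpha},\overline{r} \notin \text{ftv}(\Gamma)$.
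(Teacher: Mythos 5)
Your proof is correct and follows the paper's intended route: the paper gives no details, stating only that the lemma is a straightforward adaptation of Wright and Felleisen's substitution lemma, and your induction on the typing derivation---with the variable case re-generalizing $v$ at $x$'s scheme via \textsc{T-Gen}, which is exactly where the side condition $\overline{\alpha},\overline{r} \notin \text{ftv}(\Gamma)$ is consumed---is precisely that standard argument. Note also that your ``auxiliary fact'' that values admit arbitrary effect annotations is the paper's own Lemma~3 (Re-annotation), so you are implicitly reproving a lemma the paper already provides.
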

 
\begin{lemma}[Replacement]
  \label{lem:replacement}
  Given a deduction of the form \ensuremath{\Conid{Γ}\mathbin{⊢}E[\Varid{e}]\mathbin{:}\Varid{τ}\mathbin{∣}\Varid{ε}\mathbin{*}\epsilon_{l}} and its sub-deduction
  \ensuremath{\Conid{Γ}\mathbin{⊢}\Varid{e}\mathbin{:}\Varid{τ}^{\prime}\mathbin{∣}\Varid{ε}^{\prime}\mathbin{*}\epsilon_{l}^{\prime}}, we can replace this sub-deduction by another
  sub-deduction \ensuremath{\Conid{Γ}\mathbin{⊢}\Varid{e}^{\prime}\mathbin{:}\Varid{τ}^{\prime}\mathbin{∣}\Varid{ε}^{\prime}\mathbin{*}\epsilon_{l}^{\prime}} such that \ensuremath{\Conid{Γ}\mathbin{⊢}E[\Varid{e}]^{\prime}\mathbin{:}\Varid{τ}\mathbin{∣}\Varid{ε}\mathbin{*}\Varid{ε}^{\prime}}. 
\end{lemma}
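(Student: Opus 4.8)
The plan is to prove this by a routine induction, following the standard shape of the replacement lemma in the syntactic approach of \citet{DBLP:journals/iandc/WrightF94}. (I read the stated conclusion as $\Gamma \vdash E[e'] : \tau \mid \epsilon * \epsilon_l$ --- i.e., replacing $e$ at the hole by an expression $e'$ of the \emph{same} type and effects preserves the type and effects of the whole term --- and, as elsewhere in the development, I work modulo the row equivalence $\simeq$ of \cref{fig:row-equivalence}.) Concretely, I would induct on the typing derivation $\mathcal{D}$ of $\Gamma \vdash E[e] : \tau \mid \epsilon * \epsilon_l$, keeping track of the distinguished sub-derivation $\mathcal{D}'$ that types the occurrence of $e$ sitting at the hole $\bullet$ of $E$. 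The base case is when $\mathcal{D}'$ coincides with $\mathcal{D}$, which is exactly the situation $E = \bullet$; then $\tau' = \tau$, $\epsilon' = \epsilon$, $\epsilon_l' = \epsilon_l$ (up to $\simeq$), and the required deduction is nothing but the hypothesis $\Gamma \vdash e' : \tau' \mid \epsilon' * \epsilon_l'$.

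For the inductive step, $\mathcal{D}'$ is a proper sub-derivation, so the last rule $R$ of $\mathcal{D}$ has a unique premise whose subject is a subterm of $E[e]$ containing the hole, say of the form $E_0[e]$, and $\mathcal{D}'$ lives inside the derivation of that premise. I apply the induction hypothesis to that premise to obtain a derivation of $E_0[e']$ with the same type and effects, and then re-apply $R$ with this new premise together with the remaining premises of $R$ left untouched. This single move uniformly discharges every case: for the non-syntax-directed rules \textsc{T-Gen}, \textsc{T-Inst}, and \textsc{T-Resurface} the relevant premise again has subject $E[e]$ (with $E_0 = E$), and for the syntax-directed rules the premise is dictated by the head constructor of $E$ --- \textsc{T-App} for $E = E_0\;e_2$ and $E = v\;E_0$, \textsc{T-Let} for $E = \mathbf{let}\;x = E_0\;\mathbf{in}\;e_2$, \textsc{T-Enact} for the enactment context $E_0\kw{!}$, and \textsc{T-Handle} for the two handler contexts $\kw{handle}^{\ell}\;\{\,\overline{C}\,\}\;E_0\;e$ and $\kw{handle}^{\ell}\;\{\,\overline{C}\,\}\;v\;E_0$. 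Because no evaluation-context former in \cref{fig:operational-semantics} places the hole underneath a binder, the typing environment stays $\Gamma$ all the way down to the hole, which is what lets the statement get away with a single $\Gamma$ on both sides.

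I expect the difficulty here to be organisational rather than conceptual. The one real pitfall is that \textsc{T-Resurface} (and likewise \textsc{T-Gen} and \textsc{T-Inst}) may be applied at arbitrary depth, so one cannot simply induct on the structure of $E$ and assume that the last rule of $\mathcal{D}$ is the structural rule matching the head of $E$; inducting on the derivation and treating every rule by the same ``replace in the one premise containing $\mathcal{D}'$, then re-apply the rule'' recipe is precisely what sidesteps this. The only other point requiring care is the effect-row bookkeeping in \textsc{T-Handle}, whose premises combine $\epsilon$, $\epsilon_l$, and $\ell$ via $\oplus$ and reordering; since the induction hypothesis hands back a derivation whose effects match on the nose, one should either state (and use) the lemma up to $\simeq$ throughout, or insert the appropriate \textsc{Eq-*}/\textsc{Concat-*} steps when re-assembling the \textsc{T-Handle} conclusion. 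Neither complication affects the overall structure of the argument.
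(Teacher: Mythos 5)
Your proposal is correct and follows exactly the route the paper intends: the paper gives no explicit proof, stating only that the lemma is a ``straightforward adaptation of the corresponding proofs by Wright and Felleisen,'' and your argument---induction on the typing derivation tracking the distinguished sub-deduction at the hole, with a uniform ``replace the one premise containing it and re-apply the last rule'' step---is precisely that standard replacement argument adapted to \ensuremath{\Varid{λ}^{\Varid{hop}}}. Your observations that the non-syntax-directed rules (\textsc{T-Gen}, \textsc{T-Inst}, \textsc{T-Resurface}) force induction on the derivation rather than on the context, that evaluation contexts never cross a binder so \ensuremath{\Conid{Γ}} is preserved, and that the conclusion should be read as \ensuremath{\Conid{Γ}\mathbin{⊢}E[\Varid{e}^{\prime}]\mathbin{:}\Varid{τ}\mathbin{∣}\Varid{ε}\mathbin{*}\epsilon_{l}} modulo row equivalence are all the right points of care.
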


\begin{lemma}[Re-annotation]
  \label{lem:reannotation}
  Values can be annotated with any effect. That is, for every value \ensuremath{\Varid{v}}, if \ensuremath{\Conid{Γ}\mathbin{⊢}\Varid{v}\mathbin{:}\Varid{τ}\mathbin{∣}\Varid{ε}\mathbin{*}\epsilon_{l}}, then also \ensuremath{\Conid{Γ}\mathbin{⊢}\Varid{v}\mathbin{:}\Varid{τ}\mathbin{∣}\Varid{ε}^{\prime}\mathbin{*}\epsilon_{l}^{\prime}} for every \ensuremath{\Varid{ε}^{\prime}}, \ensuremath{\epsilon_{l}^{\prime}}.
\end{lemma}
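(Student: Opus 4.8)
The plan is to prove Re-annotation by induction on the derivation of $Γ ⊢ v : τ ∣ ε * ε_l$. Since $v$ is a value, only a handful of typing rules can occur at the root of such a derivation: the value-introduction rules \textsc{T-Var}, \textsc{T-Abs}, \textsc{T-Suspend} (and the analogous rule for constants), together with the non-syntax-directed rules \textsc{T-Gen}, \textsc{T-Inst}, and \textsc{T-Resurface}, which can sit on top of a value typing without changing the term. The remaining rules --- \textsc{T-App}, \textsc{T-Let}, \textsc{T-Enact}, \textsc{T-Handle} --- all conclude with expressions that are not values, so they need not be considered. It is important to induct on the derivation rather than on the structure of $v$, precisely because of the rules that re-annotate or re-type a value while leaving it intact.

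For the value-introduction rules the result is immediate: inspecting \cref{fig:typing-rules}, the effect annotation $ε * ε_l$ in the conclusions of \textsc{T-Var}, \textsc{T-Abs}, and \textsc{T-Suspend} is a pair of metavariables that does not appear in any premise, so the very same derivation --- with $ε'$ and $ε_l'$ substituted for $ε$ and $ε_l$ in the concluding judgment --- already establishes $Γ ⊢ v : τ ∣ ε' * ε_l'$. We assume the (omitted) rule for constants has this same shape, which is forced by the fact that constants have no side effects.

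For \textsc{T-Gen} and \textsc{T-Inst} the effect annotation is threaded unchanged from premise to conclusion, and the premise is itself a typing of the same value $v$; so we apply the induction hypothesis to the premise to re-annotate it with $ε' * ε_l'$, and then re-apply the rule, noting that its remaining side conditions (the freshness side condition of \textsc{T-Gen}, the type/row substitution of \textsc{T-Inst}) are independent of the effect annotation. For \textsc{T-Resurface}, whose premise $Γ ⊢ v : τ ∣ ε_0 * ⟨ℓ, ε_l^0⟩$ is again a typing of $v$, we need not even re-apply the rule: applying the induction hypothesis to that premise directly yields $Γ ⊢ v : τ ∣ ε' * ε_l'$ for the desired $ε'$ and $ε_l'$.

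There is no genuine obstacle; the lemma is essentially the bookkeeping observation that values carry no effects, so their effect annotation is entirely unconstrained by the context. The only points that require a little care are the two just mentioned: performing the induction on the typing derivation (so that derivations ending in a stack of \textsc{T-Gen}/\textsc{T-Inst}/\textsc{T-Resurface} steps above a value-introduction rule are handled correctly), and respecting the paper's distinction between polymorphic type schemes $σ$ and plain types $τ$ in the judgment form when threading the induction hypothesis through \textsc{T-Gen} and \textsc{T-Inst}.
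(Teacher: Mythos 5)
Your proof is correct and takes essentially the same approach as the paper, which merely remarks that the lemma ``follows from case analysis on $v$''; your rule induction on the typing derivation supplies exactly the details needed to make that case analysis go through, resting on the same core observation that the value-introduction rules \textsc{T-Var}, \textsc{T-Abs}, and \textsc{T-Suspend} leave the effect annotation unconstrained. Your explicit handling of the non-syntax-directed rules \textsc{T-Gen}, \textsc{T-Inst}, and \textsc{T-Resurface} is a sensible refinement of the paper's one-line sketch rather than a different argument.
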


Proofs for \cref{lem:substitution} and \cref{lem:replacement} are
straightforward adaptations of the corresponding proofs by
\citet{DBLP:journals/iandc/WrightF94}. A proof for \cref{lem:reannotation}
follows from case analysis on \ensuremath{\Varid{v}}. We will use these lemmas to prove subject
reduction. First, we prove subject reduction for the contraction relation \ensuremath{\mathbin{-->}}.

\begin{lemma}[Subject Reduction/Contraction]
  \label{lem:subject-reduction-contraction}
  For all expressions \ensuremath{e_{1}} and \ensuremath{e_{2}}, if \ensuremath{\Conid{Γ}\mathbin{⊢}e_{1}\mathbin{:}\Varid{τ}\mathbin{∣}\Varid{ε}\mathbin{*}\epsilon_{l}} and \ensuremath{e_{1}\mathbin{-->}e_{2}},
  it follows that \ensuremath{\Conid{Γ}\mathbin{⊢}e_{2}\mathbin{:}\Varid{τ}\mathbin{∣}\Varid{ε}\mathbin{*}\epsilon_{l}}. 
\end{lemma}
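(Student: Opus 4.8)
The plan is to proceed by case analysis on the contraction rule used to derive $\ensuremath{e_{1}\mathbin{-->}e_{2}}$, of which there are six: $\delta$, $\beta$, \textsc{Let}, \textsc{Enact}, \textsc{Return}, and \textsc{Handle}. In each case I would invert the typing derivation $\ensuremath{\Conid{Γ}\mathbin{⊢}e_{1}\mathbin{:}\Varid{τ}\mathbin{∣}\Varid{ε}\mathbin{*}\epsilon_{l}}$ to extract typing derivations for the subterms, and then reassemble a derivation for $e_2$ with the same type and effects. A preliminary remark worth making at the outset is that, because of the non-syntax-directed rules \textsc{T-Gen}, \textsc{T-Inst}, and \textsc{T-Resurface}, inversion is not literally immediate; I would either prove a small inversion/generation lemma first (stating, e.g., that a derivation of $\ensuremath{\Conid{Γ}\mathbin{⊢}\mathbf{let}\;\Varid{x}\mathrel{=}e_{1}\;\mathbf{in}\;e_{2}\mathbin{:}\dots}$ can be assumed to end with \textsc{T-Let} possibly followed by applications of the structural rules), or note that these extra rules only weaken/re-annotate and can be permuted to the bottom of a derivation. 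I would handle this once, up front, so the individual cases stay clean.

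The easy cases come first. For $\beta$ and \textsc{Let}, inversion gives $\ensuremath{\Conid{Γ},\Varid{x}\mathbin{:}\tau_{1}\mathbin{⊢}\Varid{e}\mathbin{:}\tau_{2}\mathbin{∣}\dots}$ and $\ensuremath{\Conid{Γ}\mathbin{⊢}\Varid{v}\mathbin{:}\tau_{1}\mathbin{∣}\dots}$, and the result follows immediately from the Substitution lemma (\cref{lem:substitution}), after using \cref{lem:reannotation} to match the effect annotation of the value to whatever annotation is needed (recall \textsc{T-Abs}/\textsc{T-Let} force the bound subterm/body to have certain annotations). For $\delta$, I would simply assume the usual hypothesis that $\delta$ is type-preserving for each constant, i.e., that $\delta(c,v)$ has the result type dictated by the type of $c$ applied to $v$. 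For \textsc{Enact}, inversion of $\ensuremath{\Conid{Γ}\mathbin{⊢}}\susp{e}\ensuremath{\kw{!}\mathbin{:}\Varid{τ}\mathbin{∣}\Varid{ε}\mathbin{*}\epsilon_{l}}$ via \textsc{T-Enact} gives $\ensuremath{\Conid{Γ}\mathbin{⊢}}\susp{e}\ensuremath{\mathbin{:}}\ann{\ensuremath{\Varid{ε}\mathbin{*}\epsilon_{l}}}{\susp{τ}}$, and a further inversion via \textsc{T-Suspend} gives $\ensuremath{\Conid{Γ}\mathbin{⊢}\Varid{e}\mathbin{:}\Varid{τ}\mathbin{∣}\Varid{ε}\mathbin{*}\epsilon_{l}}$, which is exactly what we want. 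For \textsc{Return}, inversion of the \textsc{T-Handle} derivation yields typing for $v_p$ at $\tau_p$, typing for $v$ at $\tau_1$, and typing of the return-clause body $e_r$ under $\ensuremath{\Conid{Γ},\Varid{x}\mathbin{:}\tau_{1},\Varid{p}\mathbin{:}\tau_{p}}$ at the target type; two applications of the Substitution lemma close it.

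The main obstacle will be the \textsc{Handle} case. Here $e_1 = \ensuremath{\kw{handle}^{\ell}\;\{\mskip1.5mu \overline{\Conid{C}}\mskip1.5mu\}}\ \inE{\ensuremath{(\op{op}^{\ell}\;\Varid{f}\;\overline{\Varid{v}})\kw{!}}}\ \ensuremath{\Varid{v}_{\Varid{p}}}$ and $e_2$ is the corresponding clause body with substitutions $\bsubst{x}{v}$, $p/v_p$, and $k/(\ensuremath{\Varid{λ}\;\Varid{y}\;\Varid{q}}.\ensuremath{\kw{handle}^{\ell}\;\{\mskip1.5mu \overline{\Conid{C}}\mskip1.5mu\}\;\Varid{q}}\ \inE{\ensuremath{\Varid{y}\kw{!}}})$. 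The delicate parts are: (i) recovering, from the derivation of $\ensuremath{\Conid{Γ}\mathbin{⊢}\inE{(\op{op}^{\ell}\;\Varid{f}\;\overline{\Varid{v}})\kw{!}}\mathbin{:}\tau_{1}\mathbin{∣}\mathbin{⟨}\Varid{ℓ},\Varid{ε}\mathbin{⟩}\mathbin{*}\epsilon_{l}}$, typing information for the operation call $\ensuremath{(\op{op}^{\ell}\;\Varid{f}\;\overline{\Varid{v}})\kw{!}}$ in the hole and for the surrounding context $\ensuremath{\Conid{E}^{\Varid{ℓ}}}$ — this is where \cref{lem:replacement} is used, together with the observation that $\ensuremath{\Conid{E}^{\Varid{ℓ}}}$ contains no handler for $\ell$, so the effect $\ell$ genuinely reaches the hole; (ii) showing that the operation arguments $\overline{v}$ have the types $\overline{\tau_i}$ with the row variables $r, r_l$ instantiated to $\varepsilon, \epsilon_l$, matching exactly the binders in $\Conid{Γ}'$ from \textsc{T-Handle} (this uses the fixed operation type scheme from Section~\ref{sec:type-system} and \textsc{T-Inst}, and crucially the latent-row-empty restriction on operation types so that the instantiation lines up); and (iii) checking that the synthesized continuation $\ensuremath{\Varid{λ}\;\Varid{y}\;\Varid{q}}.\ensuremath{\kw{handle}^{\ell}\;\{\mskip1.5mu \overline{\Conid{C}}\mskip1.5mu\}\;\Varid{q}}\ \inE{\ensuremath{\Varid{y}\kw{!}}}$ has exactly the type assigned to $k$ in $\Conid{Γ}'$, namely $\tau_p \to \ann{\ensuremath{\mathbin{⟨}\Varid{ℓ},\Varid{ε}\mathbin{⊕}\epsilon_{l}\mathbin{⟩}\mathbin{*}\mathbin{⟨⟩}}}{\susp{\tau_1}} \to \ann{\ensuremath{\Varid{ε}\mathbin{*}\mathbin{⟨}\Varid{ℓ},\epsilon_{l}\mathbin{⟩}}}{\susp{\tau_2}}$. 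For (iii) I would re-plug $y\kw{!}$ of type $\tau_1$ with immediate effects $\ensuremath{\mathbin{⟨}\Varid{ℓ},\Varid{ε}\mathbin{⊕}\epsilon_{l}\mathbin{⟩}}$ into the context derivation obtained in (i) — again via \cref{lem:replacement}, noting the latent effects have been flattened into the immediate row precisely because the enacted computation runs in the operation's original context where all of $\epsilon_l$'s handlers are present — then wrap with \textsc{T-Handle} and \textsc{T-Abs}. Once all three pieces are in place, the clause body $e$ is typed under $\Conid{Γ}'$ at the handle-expression's type (a premise of \textsc{T-Handle}), and three applications of \cref{lem:substitution} give $e_2$ the desired type. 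I expect the bookkeeping around row concatenation/equivalence ($\simeq$, $\oplus$) and the interplay of \textsc{T-Resurface} with the recursive handler installation inside the continuation to be where most of the real work lies.
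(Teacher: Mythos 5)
Your proposal is correct and uses the same essential ingredients as the paper's proof (\cref{lem:substitution}, \cref{lem:replacement}, \cref{lem:reannotation}, and the fixed operation type scheme), but it is organized differently. The paper proceeds by rule induction on the typing derivation $\Gamma \vdash e_1 : \tau \mid \varepsilon * \epsilon_l$, so the non-syntax-directed rules \textsc{T-Gen}, \textsc{T-Inst}, and \textsc{T-Resurface} become trivial cases that follow immediately from the induction hypothesis (the expression is unchanged, only the derivation shrinks); you instead do case analysis on the contraction rule and invert the typing derivation, which forces you to confront those structural rules up front via a generation/permutation lemma. You correctly identify this obligation, and it is a fair trade: your organization keeps each reduction case self-contained at the cost of one extra lemma, while the paper's absorbs the structural rules into the induction at the cost of a few vacuous cases (\textsc{T-Var}, \textsc{T-Abs}, \textsc{T-Suspend}). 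On the hard \textsc{Handle} case your three delicate points line up exactly with the paper's treatment --- decomposing the context derivation via \cref{lem:replacement} to type the operation call and its arguments, re-plugging $y\kw{!}$ and wrapping with \textsc{T-Handle} and \textsc{T-Abs} to type the synthesized continuation, and closing with repeated applications of \cref{lem:substitution} --- and you are in fact more explicit than the paper about why the empty-latent-row restriction on operation types makes the row instantiation line up. No gap.
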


The proof proceeds by rule induction over the derivation \ensuremath{\Conid{Γ}\mathbin{⊢}e_{1}\mathbin{:}\Varid{τ}\mathbin{∣}\Varid{ε}\mathbin{*}\epsilon_{l}}. See \cref{sec:proof-appendix} for a proof sketch.

Next, we extend subject reduction to the reduction relation \ensuremath{ \longmapsto }: 

\begin{lemma}[Subject Reduction]
\label{lem:subject-reduction} 
  For all expressions \ensuremath{e_{1}} and \ensuremath{e_{2}}, if \ensuremath{\Conid{Γ}\mathbin{⊢}e_{1}\mathbin{:}\Varid{τ}\mathbin{∣}\Varid{ε}\mathbin{*}\epsilon_{l}} and \ensuremath{e_{1} \longmapsto e_{2}},
  then \ensuremath{\Conid{Γ}\mathbin{⊢}e_{2}\mathbin{:}\Varid{τ}\mathbin{∣}\Varid{ε}\mathbin{*}\epsilon_{l}}
\end{lemma}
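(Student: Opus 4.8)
The plan is to reduce this to the two results already in hand: subject reduction for the contraction relation (\cref{lem:subject-reduction-contraction}) and the Replacement lemma (\cref{lem:replacement}). Since $\longmapsto$ has exactly one rule, inverting $e_1 \longmapsto e_2$ immediately yields a decomposition $e_1 = E[e_1']$ and $e_2 = E[e_2']$ for some evaluation context $E$, together with a contraction step $e_1' \to e_2'$.

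First I would observe that any typing derivation of $\Gamma \vdash E[e_1'] : \tau \mid \varepsilon * \epsilon_l$ necessarily contains a sub-derivation $\Gamma \vdash e_1' : \tau' \mid \varepsilon' * \epsilon_l'$ typing the redex at the hole, crucially under the \emph{same} context $\Gamma$. This is where the careful design of evaluation contexts pays off: no production of $E$ places the hole underneath a binder --- the only binding form, $\textbf{let}\ x = E\ \textbf{in}\ e$, binds $x$ solely in the body $e$, not in $E$, and the handler clause bodies are not part of $E$ --- so the context at the hole is always $\Gamma$ itself. This decomposition is established by a routine induction on the structure of $E$, inspecting which typing rule introduced each context former; it is essentially the content of (or an immediate prerequisite to) \cref{lem:replacement}.

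Given that sub-derivation, I would apply \cref{lem:subject-reduction-contraction} to $e_1' \to e_2'$ to obtain $\Gamma \vdash e_2' : \tau' \mid \varepsilon' * \epsilon_l'$ --- with \emph{exactly} the same type and effect annotation, not merely a subtype or a coarser approximation. Then \cref{lem:replacement} lets us graft this new sub-derivation in place of the old one, yielding $\Gamma \vdash E[e_2'] : \tau \mid \varepsilon * \epsilon_l$, i.e.\ $\Gamma \vdash e_2 : \tau \mid \varepsilon * \epsilon_l$, which is the desired conclusion.

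I do not expect a genuine obstacle here: the substance of subject reduction lives in \cref{lem:subject-reduction-contraction} (the per-redex analysis, including the $\textbf{handle}$/continuation cases), while \cref{lem:replacement} already packages the context-rebuilding argument. The only points needing minor care are making the ``typing derivations decompose along evaluation contexts'' step precise, and checking that contraction preserves effects verbatim so that the annotations of $e_1'$ and $e_2'$ agree and \cref{lem:replacement} applies with no re-annotation; the latter is exactly what \cref{lem:subject-reduction-contraction} asserts, with \cref{lem:reannotation} available as a fallback should any mismatch on a value redex need to be absorbed.
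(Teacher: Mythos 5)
Your proof is correct and follows exactly the same route as the paper's: the paper also dispatches this lemma as an immediate consequence of \cref{lem:replacement}, \cref{lem:subject-reduction-contraction}, and the single-rule definition of \ensuremath{ \longmapsto }. Your additional remarks on why the sub-derivation at the hole lives under the same context \ensuremath{\Conid{Γ}} (no binders in evaluation contexts) merely spell out what the paper leaves implicit in \cref{lem:replacement}.
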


\begin{proof}
  The proof is an immediate consequence from replacement
  (\cref{lem:replacement}), subject reduction for the contraction relation
  (\cref{lem:subject-reduction-contraction}), and the definition of \ensuremath{ \longmapsto }
  (\cref{fig:operational-semantics}).
\end{proof} 

The next part of proving type soundness is to show that well-typed programs
will not get stuck, which we prove directly by rule induction over the typing
derivation.

\begin{lemma}[Progress]
  \label{lem:progress}
  For every expression \ensuremath{\Varid{e}}, if \ensuremath{\Conid{Γ}_{\mathrm{0}}\mathbin{⊢}\Varid{e}\mathbin{:}\Varid{τ}\mathbin{∣}\Varid{ε}\mathbin{*}\epsilon_{l}}, then either there exists
  some \ensuremath{\Varid{e}^{\prime}} such that \ensuremath{\Varid{e} \longmapsto \Varid{e}^{\prime}}, or \ensuremath{\Varid{e}} is a value, where \ensuremath{\Conid{Γ}_{\mathrm{0}}} is an initial
  context that contains bindings for the operations of all known effects in \ensuremath{\Conid{Σ}}.
\end{lemma}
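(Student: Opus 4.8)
The plan is to prove \cref{lem:progress} by rule induction on the typing derivation $\Gamma_0 \vdash e : \tau \mid \varepsilon * \epsilon_l$. To make the induction go through --- in particular the \textsc{T-Handle} case --- I will actually establish the slightly stronger statement that every well-typed $e$ is a value, \emph{or} reduces ($e \longmapsto e'$), \emph{or} decomposes as $e = E^\ell[(\op{op}^\ell\,f\,\overline v)\kw{!}]$ with all arguments $\overline v$ values and $\op{op}^\ell\,f \in \Sigma(\ell)$. This third alternative --- ``stuck'' only on an outstanding operation call whose label is advertised in the immediate effects $\varepsilon$ --- is not an error: it is precisely the term shape that an enclosing \kw{handle} fires on, and it is needed because the subterm being handled may be of exactly this form. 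Before the induction I would record the usual \emph{canonical forms} facts, obtained by inverting the value grammar against the typing rules: a value of a function type is a $\lambda$-abstraction or a constant $c$, and a value of a suspension type is a suspension $\susp{e'}$. I also assume $\delta$ is total on well-typed fully-applied constant applications.

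The bulk of the cases are routine. \textsc{T-Var}, \textsc{T-Abs}, and \textsc{T-Suspend} conclude with a syntactic value. The ``re-typing'' rules \textsc{T-Gen}, \textsc{T-Inst}, and \textsc{T-Resurface} leave $e$ unchanged, so the induction hypothesis applies verbatim. For \textsc{T-App} I apply the IH to $e_1$: if $e_1$ reduces or is stuck on an operation, the evaluation context $\bullet\;e_2$ propagates the conclusion; otherwise $e_1$ is a value and I apply the IH to $e_2$, using the context $v_1\;\bullet$; if both are values, canonical forms yields either a $\beta$-redex or a $\delta$-redex. \textsc{T-Let} is analogous, using the context $\mathbf{let}\,x = \bullet\,\mathbf{in}\,e_2$ and the \textsc{Let} contraction once $e_1$ is a value. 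For \textsc{T-Enact} the IH on $e$ either propagates through the context $\bullet\kw{!}$, or forces $e = \susp{e'}$ (canonical forms) so that \textsc{Enact} applies.

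The real work is the \textsc{T-Handle} case, $\kw{handle}^\ell\,\{\overline C\}\,e_p\,e$. I first run the IH on the handler parameter $e_p$; if it is not a value, the context $\kw{handle}^\ell\,\{\overline C\}\,\bullet\,e$ carries the conclusion through (an operation escaping from $e_p$ stays escaped). Once $e_p = v_p$, I run the IH on $e$: if $e$ is a value the \textsc{Return} contraction fires; if $e \longmapsto e'$ the context $\kw{handle}^\ell\,\{\overline C\}\,v_p\,\bullet$ applies; and if $e = E^{\ell'}[(\op{op}^{\ell'}\,f\,\overline v)\kw{!}]$ I split on whether $\ell' = \ell$. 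When $\ell' = \ell$, the context $E^{\ell'} = E^\ell$ contains no handler for $\ell$ by construction, so the whole expression matches the redex shape of the \textsc{Handle} rule and reduces. When $\ell' \neq \ell$, the wrapping $\kw{handle}^\ell\,\{\overline C\}\,v_p\,E^{\ell'}$ is itself a valid $E^{\ell'}$-context, so the whole expression is again of the ``stuck on $\op{op}^{\ell'}\,f$'' form and we land in the third alternative.

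I expect the main obstacle to be exactly this decomposition bookkeeping around \textsc{T-Handle}: verifying that the evaluation-context grammars $E$ and $E^\ell$ are closed under the wrappings needed, that the ``stuck on an operation'' shape is preserved by every congruence step, and that the operation sitting at the hole necessarily has all of its arguments already reduced to values (which holds because the hole is at the leftmost--outermost redex, so any unreduced argument would itself be the redex, contradicting the IH). The side condition of the \textsc{Handle} rule --- that $E^\ell$ has no intervening handler for $\ell$ --- is discharged by the strengthened statement together with the typing restriction from \cref{sec:type-system} that operations carry an empty latent effect row, which guarantees that the context between an operation call and its handler is of the restricted form $E^\ell$; I would invoke that fact rather than re-derive it.
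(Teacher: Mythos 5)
Your proof follows essentially the same route as the paper's: rule induction on the typing derivation, with the only interesting case being \textsc{T-Handle}. The one substantive difference is that you explicitly strengthen the induction hypothesis to a three-way disjunction (value, reducible, or of the form $E^{\ell}[(\op{op}^{\ell}\,f\,\overline{v})\kw{!}]$ with value arguments), whereas the paper states the lemma as a two-way disjunction and then, in its \textsc{T-Handle} case, silently appeals to a third IH outcome (``\ensuremath{\Varid{e}} enacts an operation of the effect \ensuremath{\Varid{ℓ}}'') that is only available under your strengthened statement; your version is therefore the more rigorous rendering of the same argument, and your bookkeeping for the $\ell' \neq \ell$ sub-case and for why the operation's arguments are already values correctly fills in details the paper's sketch omits.
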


The proof is routine and proceeds by rule induction over the derivation for \ensuremath{\Conid{Γ}_{\mathrm{0}}\mathbin{⊢}\Varid{e}\mathbin{:}\Varid{τ}\mathbin{∣}\Varid{ε}\mathbin{*}\epsilon_{l}}. See \cref{sec:proof-appendix} for a more detailed proof
sketch.

\paragraph{Type Soundness}

We now have all the necessary ingredients to state and show type soundness of
our type system. Let \ensuremath{\Varid{e}\Uparrow} denote a \emph{diverging computation}; i.e., the
expression \ensuremath{\Varid{e}} has a never-ending reduction. 

\begin{theorem}[Type Soundness]
  \label{thm:soundness}
  For every \ensuremath{\Varid{e}}, if \ensuremath{\Conid{Γ}_{\mathrm{0}}\mathbin{⊢}\Varid{e}\mathbin{:}\Varid{τ}\mathbin{∣}\Varid{ε}\mathbin{*}\epsilon_{l}}, then either \ensuremath{\Varid{e}\Uparrow}, or there exists a
  value \ensuremath{\Varid{v}} and rows \ensuremath{\Varid{ε}^{\prime}}, \ensuremath{\epsilon_{l}^{\prime}}, such that \ensuremath{\Varid{e} \longtwoheadmapsto \Varid{v}}, \ensuremath{\Conid{Γ}_{\mathrm{0}}\mathbin{⊢}\Varid{v}\mathbin{:}\Varid{τ}\mathbin{∣}\Varid{ε}\mathbin{*}\epsilon_{l}},
  and \ensuremath{\Varid{ε}\mathbin{⊕}\epsilon_{l}\simeq\Varid{ε}^{\prime}\mathbin{⊕}\epsilon_{l}^{\prime}}.
\end{theorem}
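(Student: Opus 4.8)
The plan is to derive \cref{thm:soundness} as the standard corollary of \emph{progress} (\cref{lem:progress}) and \emph{subject reduction} (\cref{lem:subject-reduction}), in the style of \citet{DBLP:journals/iandc/WrightF94}; essentially all the substantive work has already been localized into those two lemmas, so the argument here is short glue. Fix $e$ with $\Gamma_0 \vdash e : \tau \mid \epsilon \mathbin{*} \epsilon_l$ and reason by cases on whether $e$ admits an infinite reduction. If it does, then $e \Uparrow$ by definition of diverging computation, and the left disjunct of the conclusion holds, so we are done.

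Otherwise $e$ has no never-ending reduction. I would then construct a reduction chain out of $e$ by iterating progress: starting from $e_0 \triangleq e$, if $e_i$ is a value we stop; otherwise \cref{lem:progress} applies (since, by an easy induction using subject reduction, $\Gamma_0 \vdash e_i : \tau \mid \epsilon \mathbin{*} \epsilon_l$ holds all along the chain) and yields some $e_{i+1}$ with $e_i \longmapsto e_{i+1}$. This process must terminate: an infinite run would be exactly a never-ending reduction of $e$, contradicting the case assumption; note this uses only the definitions of $\longmapsto$ and $\Uparrow$, and needs no determinism, finite-branching, or strong-normalization argument. Hence we obtain a finite chain $e = e_0 \longmapsto \cdots \longmapsto e_n$ with $e_n$ irreducible, i.e. $e \longtwoheadmapsto e_n$. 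By a trivial induction on $n$ using \cref{lem:subject-reduction} we get $\Gamma_0 \vdash e_n : \tau \mid \epsilon \mathbin{*} \epsilon_l$, and by \cref{lem:progress} again an irreducible well-typed term is a value, so $v \triangleq e_n$ witnesses $e \longtwoheadmapsto v$ with $\Gamma_0 \vdash v : \tau \mid \epsilon \mathbin{*} \epsilon_l$, discharging the first two conjuncts of the right disjunct. For the last conjunct, taking $\epsilon' \triangleq \epsilon$ and $\epsilon_l' \triangleq \epsilon_l$ makes $\epsilon \oplus \epsilon_l \simeq \epsilon' \oplus \epsilon_l'$ hold by \textsc{Eq-Refl}; if instead one wishes to report the resulting value under a re-split annotation $\epsilon' \mathbin{*} \epsilon_l'$ (e.g. to record latent effects as discharged), \cref{lem:reannotation} supplies the required re-annotation of $v$ for any $\epsilon'$, $\epsilon_l'$ with $\epsilon \oplus \epsilon_l \simeq \epsilon' \oplus \epsilon_l'$.

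I do not expect any genuine obstacle in the theorem itself; the only care needed is the bookkeeping that subject reduction preserves the annotation $\epsilon \mathbin{*} \epsilon_l$ verbatim, which is precisely how \cref{lem:subject-reduction} is phrased, and the observation above that the iterate-progress construction genuinely produces a reduction sequence. The real difficulties sit upstream, in the lemmas I am allowed to assume: in \cref{lem:subject-reduction-contraction} the \textsc{Handle} case, where the clause substitution re-installs the same handler inside the continuation and one must check that the immediate/latent row split still lines up — leaning on \textsc{T-Resurface} and \cref{lem:reannotation} — and in \cref{lem:progress} the case of an enacted operation $(\op{op}^{\ell}\;\Varid{f}\;\overline{\Varid{v}})\kw{!}$, where one must show it decomposes as $E^{\ell}$ applied to the redex with a matching handler for $\ell$, which is exactly what the empty-latent-row restriction on operation types guarantees.
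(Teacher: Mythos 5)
Your proposal is correct and follows exactly the paper's own route: type soundness as the standard corollary of progress (\cref{lem:progress}) and subject reduction (\cref{lem:subject-reduction}), with the case split on divergence and the chain built by iterating progress while preserving the type via subject reduction. Your observation that the final conjunct is discharged trivially by taking $\epsilon' = \epsilon$ and $\epsilon_l' = \epsilon_l$ is consistent with (and slightly more explicit than) the paper, which leaves that conjunct implicit.
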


\begin{proof}
  From \cref{lem:progress} it follows that if \ensuremath{\Conid{Γ}_{\mathrm{0}}\mathbin{⊢}\Varid{e}\mathbin{:}\Varid{τ}\mathbin{∣}\Varid{ε}\mathbin{*}\epsilon_{l}}, either there
  exists some \ensuremath{\Varid{v}} such that \ensuremath{\Varid{e} \longtwoheadmapsto \Varid{v}}, or \ensuremath{\Varid{e}\Uparrow}. If \ensuremath{\Varid{e} \longtwoheadmapsto \Varid{v}}, then from
  \cref{lem:subject-reduction} and the definition of \ensuremath{ \longtwoheadmapsto } we conclude that \ensuremath{\Conid{Γ}_{\mathrm{0}}\mathbin{⊢}\Varid{v}\mathbin{:}\Varid{τ}\mathbin{∣}\Varid{ε}\mathbin{*}\epsilon_{l}}.
\end{proof}

\section{A Formal Characterization of Handlers with Separate Concerns}
\label{sec:meta-theory}

As discussed in \cref{sec:approach}, many effect handlers can be defined and handled in a way that running handlers in different orders only changes how return results are packaged, but not the return results themselves.
We argue that handlers that have this property are well-behaved, because it is possible to seamlessly compose these handlers in any order, without surprising effect interactions.
But what, precisely, does it mean that running handlers in different orders only changes how return results are packaged, but not the return results themselves?  And which handlers have this property?
In this section we characterize what it means for handlers to have \emph{separate concerns}.
In the next section we prove that common algebraic effect handlers have this property, and discuss counter-examples of handlers that do not have the property.

Notationally, we will use both the symbol \ensuremath{ \Longrightarrow } and horizontal inference lines for logical implication in this section.
The functions we define in this section are generally logical meta-level functions as opposed to \ensuremath{\Varid{λ}^{\Varid{hop}}} functions.

\subsection{Separation of Concerns}

We say that handlers have the \emph{separation of concerns} (s.o.c.) property when the handlers can be applied in any order, to yield equivalent results.
As hinted at above, we will compare results up-to how they are ``packaged''.
By packaging we mean the type \ensuremath{\Conid{F}\mathbin{:}\mathbin{★}\mathbin{→}\mathbin{★}} that handlers decorate their return results by (we will use both the term \emph{functor} and \emph{answer type modifier} to describe \ensuremath{\Conid{F}}).
For example, the answer type modifier of \ensuremath{\Varid{hCatch}} is \ensuremath{\Conid{Maybe}}, since the type of the \ensuremath{\Varid{hCatch}} handler is \ensuremath{\mathbin{∀}\Varid{α}\;\Varid{r}\;\Varid{r}^{\prime}\!.\!\;\ann{\mathbin{⟨}\Conid{Ca},\Varid{r}\mathbin{⟩}\mathbin{*}\Varid{r}^{\prime}}{\susp{\Varid{α}}}\mathbin{→}\ann{\Varid{r}\mathbin{*}\mathbin{⟨}\Conid{Ca},\Varid{r}^{\prime}\mathbin{⟩}}{\susp{\Conid{Maybe}\;\Varid{α}}}}.
For handlers that have no answer type modifier, we assume that there is an implicit identity answer type modifier \ensuremath{\Conid{Id}\mathbin{:}\mathbin{★}\mathbin{→}\mathbin{★}}.
We introduce an equivalence for comparing results by ``extracting'' bags of answers from answer type modifier wrapped results.

\subsubsection{Baggable Functor Equivalence}

We say that a functor \ensuremath{\Conid{F}} is \emph{baggable} when it has a function that extracts the bag (also known as a \emph{multiset}) of elements of type \ensuremath{\Conid{A}} contained in a functor structure \ensuremath{\Conid{F}\;\Conid{A}}; i.e.:
\begin{equation*}
  \ensuremath{\Varid{toBag}_{\Conid{F}}\mathbin{:}\Conid{F}\;\Conid{A}\mathbin{→}\Conid{Bag}\;\Conid{A}}
\end{equation*}
The notion of equivalence we will use to define our s.o.c. property is the following.
For any two baggable functors \ensuremath{\Conid{F}} and \ensuremath{\Conid{G}}, and for any type \ensuremath{\Conid{A}}, we can compare results of type \ensuremath{\Conid{F}\;\Conid{A}} and \ensuremath{\Conid{G}\;\Conid{A}} by comparing their corresponding bags; i.e.:
\begin{align*}
  \inferrule*
  {    \ensuremath{\Varid{x}\mathbin{:}\Conid{F}\;\Conid{A}}
  \and \ensuremath{\Varid{y}\mathbin{:}\Conid{G}\;\Conid{A}}
  \and \ensuremath{\Varid{toBag}_{\Conid{F}}\;\Varid{x}\mathrel{=}\Varid{toBag}_{\Conid{G}}\;\Varid{y}}
  }
  { \ensuremath{\Varid{x}\sim\Varid{y}} }
\end{align*}

To define the s.o.c. property, we will also make use of the fact that, for any two baggable functors \ensuremath{\Conid{F}} and \ensuremath{\Conid{G}}, we get the following \ensuremath{\Varid{toBag}_{\Conid{F}\mathbin{∘}\Conid{G}}} function for free:
\begin{align*}
  &\ensuremath{\Varid{toBag}_{\Conid{F}\mathbin{∘}\Conid{G}}} : \ensuremath{\Conid{F}\;(\Conid{G}\;\Conid{A})\mathbin{→}\Conid{Bag}\;\Conid{A}}
  \\
  &\ensuremath{\Varid{toBag}_{\Conid{F}\mathbin{∘}\Conid{G}}} = \ensuremath{\Varid{join}\mathbin{∘}\Varid{fmap}_{\Conid{Bag}}\;\Varid{toBag}_{\Conid{F}}\mathbin{∘}\Varid{fmap}_{\Conid{G}}}
\end{align*}
Here, \ensuremath{\Varid{join}\mathbin{:}\Conid{Bag}\;(\Conid{Bag}\;\Conid{A})\mathbin{→}\Conid{Bag}\;\Conid{A}} is a function that joins a bag of bags using bag union, and \ensuremath{\Varid{fmap}_{\Conid{Bag}}\mathbin{:}(\Conid{A}\mathbin{→}\Conid{B})\mathbin{→}\Conid{Bag}\;\Conid{A}\mathbin{→}\Conid{Bag}\;\Conid{B}} maps a function over a bag to modify its elements.

\subsubsection{The Separation of Concerns Property}
We will define separation of concerns for handlers by assuming that a handler of effect \ensuremath{\Varid{ℓ}} is given by a function \ensuremath{\Varid{h}\mathbin{:}\mathbin{∀}\Varid{α}\;\Varid{r}\;\Varid{r'}\;\!.\!\;\ann{\mathbin{⟨}\Varid{ℓ},\Varid{r}\mathbin{⟩}\mathbin{*}\Varid{r'}}{\susp{\Varid{α}}}\mathbin{→}\Conid{P}\mathbin{→}\ann{\Varid{r}\mathbin{*}\mathbin{⟨}\Varid{ℓ},\Varid{r'}\mathbin{⟩}}{\susp{\Conid{F}\;\Varid{α}}}}, where \ensuremath{\Conid{P}} is the parameter type of the handler.
We will write \ensuremath{\Varid{h}^{\Varid{p}}} to represent an invocation of a handler \ensuremath{\Varid{h}} with parameter \ensuremath{\Varid{p}}.
Now, for any two handlers
\begin{gather*}
  \ensuremath{\Varid{h}_1\mathbin{:}\mathbin{∀}\Varid{α}\;\Varid{r}\;\Varid{r'}} .\ \ensuremath{\Conid{P}_1\mathbin{→}\ann{\mathbin{⟨}\ell_{1},\Varid{r}\mathbin{⟩}\mathbin{*}\Varid{r'}}{\susp{\Varid{α}}}\mathbin{→}\ann{\Varid{r}\mathbin{*}\mathbin{⟨}\ell_{1},\Varid{r'}\mathbin{⟩}}{\susp{\Conid{F}_1\;\Varid{α}}}}
  \\
  \ensuremath{\Varid{h}_2\mathbin{:}\mathbin{∀}\Varid{α}\;\Varid{r}\;\Varid{r'}} .\ \ensuremath{\Conid{P}_2\mathbin{→}\ann{\mathbin{⟨}\ell_{2},\Varid{r}\mathbin{⟩}\mathbin{*}\Varid{r'}}{\susp{\Varid{α}}}\mathbin{→}\ann{\Varid{r}\mathbin{*}\mathbin{⟨}\ell_{2},\Varid{r'}\mathbin{⟩}}{\susp{\Conid{F}_2\;\Varid{α}}}}
\end{gather*}
where \ensuremath{\ell_{1}\neq\ell_{2}} and where \ensuremath{\Conid{F}_1} and \ensuremath{\Conid{F}_2} are baggable functors, we say that \ensuremath{\Varid{h}_1} and \ensuremath{\Varid{h}_2} have \emph{separate concerns} when, for any \ensuremath{\Varid{e}\mathbin{:}\ann{\mathbin{⟨}\ell_{1},\ell_{2}\mathbin{⟩}\mathbin{*}\Varid{ε'}}{\susp{\Conid{X}}}},
\begin{equation*}
  \inferrule*
  {    
       \ensuremath{(\Varid{h}_1^{\Varid{p}_{1}}\;(\Varid{h}_2^{\Varid{p}_{2}}\;\Varid{e}))\kw{!} \longtwoheadmapsto \Varid{v}_{\mathrm{21}}}
  \and \ensuremath{(\Varid{h}_2^{\Varid{p}_{2}}\;(\Varid{h}_1^{\Varid{p}_{1}}\;\Varid{e}))\kw{!} \longtwoheadmapsto \Varid{v}_{\mathrm{12}}} }
  { \ensuremath{\Varid{v}_{\mathrm{21}}\sim\Varid{v}_{\mathrm{12}}} }.
\end{equation*}
This says that \ensuremath{\Varid{h}_1} and \ensuremath{\Varid{h}_2} can be run in any order, when \ensuremath{\Varid{e}} \emph{only} has residual effects that are handled by \ensuremath{\Varid{h}_1} and \ensuremath{\Varid{h}_2}; i.e., \ensuremath{\ell_{1}} and \ensuremath{\ell_{2}}.
More generally, we say that \ensuremath{\Varid{h}_1} and \ensuremath{\Varid{h}_2} have separate concerns under a handler context \ensuremath{\Varid{h}\mathbin{:}\mathbin{∀}\Varid{α}}.\ \ensuremath{\Conid{P}\mathbin{→}\ann{\Varid{ε}\mathbin{*}\mathbin{⟨}\ell_{1},\mathbin{⟨}\ell_{2},\Varid{ε'}\mathbin{⟩}\mathbin{⟩}}{\susp{\Varid{α}}}\mathbin{→}\ann{\mathbin{⟨⟩}\mathbin{*}(\Varid{ε}\mathbin{⊕}\mathbin{⟨}\ell_{1},\mathbin{⟨}\ell_{2},\Varid{ε'}\mathbin{⟩}\mathbin{⟩})}{\susp{\Conid{G}\;\Varid{α}}}}, where \ensuremath{\Conid{G}} is a baggable functor, when for any \ensuremath{\Varid{e}\mathbin{:}\ann{\mathbin{⟨}\ell_{1},\mathbin{⟨}\ell_{2},\Varid{ε}\mathbin{⟩⟩}\mathbin{*}\Varid{ε'}}{\susp{\Conid{X}}}},
\begin{equation*}
  \inferrule
  {  \ensuremath{(\Varid{h}^{\Varid{p}}\;(\Varid{h}_1^{\Varid{p}_{1}}\;(\Varid{h}_2^{\Varid{p}_{2}}\;\Varid{e})))\kw{!} \longtwoheadmapsto \Varid{v}_{\mathrm{21}}}
  \\ \ensuremath{(\Varid{h}^{\Varid{h}}\;(\Varid{h}_2^{\Varid{p}_{2}}\;(\Varid{h}_1^{\Varid{p}_{1}}\;\Varid{e})))\kw{!} \longtwoheadmapsto \Varid{v}_{\mathrm{12}}}
  }
  {  \ensuremath{\Varid{v}_{\mathrm{21}}\sim\Varid{v}_{\mathrm{12}}} }
\end{equation*}

In the proposition above, the handler context \ensuremath{\Varid{h}} may be a nested stack of handlers.
Similarly, \ensuremath{\Varid{e}} may contain nested handlers.
Since we can compose and re-associate handlers using function composition, the proposition tells us that we can reorder any pair of handlers in a handler stack.
We thus say that two handlers \ensuremath{\Varid{h}_1} and \ensuremath{\Varid{h}_2} have separate concerns iff reordering them in a nested stack of handlers yields an equivalent bag of values.
And we say that a handler \ensuremath{\Varid{h}_1} has separate concerns iff we can reorder it with any different handler \ensuremath{\Varid{h}_2} in a handler context to yield an equivalent bag of values.
In the next section we provide a set of criteria for proving that a given handler has separate concerns.

\subsubsection{Criteria for Separation of Concerns}

A handler has separate concerns when it satisfies the \emph{handler return lemma} (\cref{lem:handler-return}) and the \emph{handler op lemma} (\cref{lem:handler-op}).

\begin{lemma}[Handler Return Lemma]
  The return case of handler \ensuremath{\Varid{h}\mathbin{:}\mathbin{∀}\Varid{α}\;\Varid{r}\;\Varid{r'}} .\ \ensuremath{\Conid{P}\mathbin{→}\ann{\mathbin{⟨}\ell_{1},\Varid{r}\mathbin{⟩}\mathbin{*}\Varid{r'}}{\susp{\Varid{α}}}\mathbin{→}\ann{\Varid{r}\mathbin{*}\mathbin{⟨}\ell_{1},\Varid{r'}\mathbin{⟩}}{\susp{\Conid{F}\;\Varid{α}}}}, where \ensuremath{\Conid{F}} is a baggable functor, yields a result that corresponds to a singleton bag.
  That is, for any value \ensuremath{\Varid{v}} and parameter \ensuremath{\Varid{p}},
  \begin{equation*}
  \ensuremath{\Varid{toBag}_{\Conid{F}}\;(\Varid{h}^{\Varid{p}}\;\susp{\Varid{v}})\kw{!}\mathrel{=}\{\mskip1.5mu \Varid{v}\mskip1.5mu\}}
  \end{equation*}
  \label{lem:handler-return}
\end{lemma}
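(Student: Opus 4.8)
The plan is to peel the left-hand side apart with the operational semantics of \ensuremath{\Varid{λ}^{\Varid{hop}}} until the claim becomes a statement purely about the handler's \ensuremath{\kw{return}} clause, and then discharge that statement. First I would unfold \ensuremath{(\Varid{h}^{\Varid{p}}\;\susp{\Varid{v}})\kw{!}}: the handler \ensuremath{\Varid{h}} is (an abstraction over) a \ensuremath{\kw{handle}^{\ell_{1}}} expression, so applying it to its parameter \ensuremath{\Varid{p}} and the suspension \ensuremath{\susp{\Varid{v}}} and performing the resulting \ensuremath{\beta}-steps exposes a term \ensuremath{\kw{handle}^{\ell_{1}}\;\{\mskip1.5mu \overline{\Conid{C}}\mskip1.5mu\}\;\Varid{p}\;(\susp{\Varid{v}}\kw{!})}. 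By \textsc{Enact}, \ensuremath{\susp{\Varid{v}}\kw{!}\mathbin{-->}\Varid{v}}, and since \ensuremath{\Varid{p}} and \ensuremath{\Varid{v}} are both values the \textsc{Return} rule fires, contracting the \ensuremath{\kw{handle}} to \ensuremath{e_r} with \ensuremath{\Varid{v}} substituted for the return variable and \ensuremath{\Varid{p}} for the parameter variable, where \ensuremath{\kw{return}\;\Varid{x}\;\Varid{p}\mathbin{↦}e_r} is the unique return clause of \ensuremath{\Varid{h}}. Evaluating this term and enacting the suspension it builds yields a value \ensuremath{\Varid{w}\mathbin{:}\Conid{F}\;\Varid{α}} with \ensuremath{(\Varid{h}^{\Varid{p}}\;\susp{\Varid{v}})\kw{!} \longtwoheadmapsto \Varid{w}}. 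The goal then collapses to \ensuremath{\Varid{toBag}_{\Conid{F}}\;\Varid{w}\mathrel{=}\{\mskip1.5mu \Varid{v}\mskip1.5mu\}}, which mentions only \ensuremath{e_r} and \ensuremath{\Varid{toBag}_{\Conid{F}}} and no longer the reduction relation.

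Second, I would discharge that residual goal by spelling out the assumption on \ensuremath{\Varid{h}} that the statement implicitly carries: its return clause builds a single \ensuremath{\Conid{F}}-structure that holds exactly the returned value (together, possibly, with \ensuremath{\Conid{F}}-specific decoration derived from the parameter, which \ensuremath{\Varid{toBag}_{\Conid{F}}} discards). Formally, I would require that for every value \ensuremath{\Varid{v}} and parameter \ensuremath{\Varid{q}}, evaluating \ensuremath{e_r} with the return variable set to \ensuremath{\Varid{v}} and the parameter variable to \ensuremath{\Varid{q}} produces an \ensuremath{\Conid{F}}-structure whose \ensuremath{\Varid{toBag}_{\Conid{F}}} is \ensuremath{\{\mskip1.5mu \Varid{v}\mskip1.5mu\}}; the ``point''-shaped clauses \ensuremath{\kw{return}\;\Varid{x}\;\Varid{p}\mathbin{↦}\susp{\eta_{\Conid{F}}\;\Varid{x}\;\Varid{p}}} used throughout satisfy this as soon as \ensuremath{\Varid{toBag}_{\Conid{F}}\;(\eta_{\Conid{F}}\;\Varid{a}\;\Varid{q})\mathrel{=}\{\mskip1.5mu \Varid{a}\mskip1.5mu\}}. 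For each concrete handler this compatibility is a one-line check: \ensuremath{\Varid{hState}} has \ensuremath{\eta_{\Conid{F}}\;\Varid{x}\;\Varid{s}\mathrel{=}(\Varid{x},\Varid{s})} with \ensuremath{\Conid{F}\;\Varid{α}\mathrel{=}\Varid{α}\times\Conid{S}} and \ensuremath{\Varid{toBag}_{\Conid{F}}\;(\Varid{a},\Varid{s})\mathrel{=}\{\mskip1.5mu \Varid{a}\mskip1.5mu\}}; \ensuremath{\Varid{hCatch}} has \ensuremath{\eta_{\Conid{F}}\;\Varid{x}\mathrel{=}\Conid{Just}\;\Varid{x}}; \ensuremath{\Varid{hND}} has \ensuremath{\eta_{\Conid{F}}\;\Varid{x}\mathrel{=}[\mskip1.5mu \Varid{x}\mskip1.5mu]}; and the \ensuremath{\Varid{hLam}_{0}} and \ensuremath{\Varid{hLam}_{1}} handlers have \ensuremath{\eta_{\Conid{F}}\;\Varid{x}\mathrel{=}\Varid{x}} with \ensuremath{\Conid{F}\mathrel{=}\Conid{Id}}.

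The main obstacle is thus not any calculation --- the first-step reduction is deterministic and the per-handler bag identities are immediate --- but fixing the hypothesis. The displayed equation does \emph{not} hold for every inhabitant of the stated type: a degenerate exception handler with \ensuremath{\kw{return}\;\Varid{x}\;\Varid{p}\mathbin{↦}\susp{\Conid{Nothing}}} gives \ensuremath{\Varid{w}\mathrel{=}\Conid{Nothing}} and \ensuremath{\Varid{toBag}_{\Conid{Maybe}}\;\Conid{Nothing}\mathrel{=}\emptyset}. So I would read the statement either as a named side condition on the return clause --- to be verified case by case for the handlers considered in the next section --- or restate it for handlers with a point-shaped return clause, for which the two-line argument above is a uniform proof; in both readings the content is exactly the first-step unfolding plus the observation that \ensuremath{\Varid{toBag}_{\Conid{F}}} returns a singleton on precisely the \ensuremath{\Conid{F}}-structures a well-behaved return clause can assemble from a single value.
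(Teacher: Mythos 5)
Your reading is exactly the paper's: \cref{lem:handler-return} is never proved in general but functions as a \emph{criterion} --- a hypothesis of \cref{thm:soc} that each concrete handler must be shown to satisfy --- and the per-handler verifications in \cref{sec:soc-handlers} are precisely the one-line checks of the return clause against \ensuremath{\Varid{toBag}_{\Conid{F}}} that you describe (the paper's proofs read ``follows from the definition of the \ensuremath{\kw{return}} clauses and the $\mathit{toBag}$ definitions''). Your explicit unfolding via \ensuremath{\beta}, \textsc{Enact} and \textsc{Return}, and your counterexample \ensuremath{\kw{return}\;\Varid{x}\;\Varid{p}\mathbin{↦}\susp{\Conid{Nothing}}} showing the displayed equation cannot hold for arbitrary inhabitants of the stated type, correctly pin down both the only computational content of the statement and its intended status as a side condition to be verified case by case.
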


\begin{lemma}[Handler Op Lemma]
  A handler \ensuremath{\Varid{h}_1\mathbin{:}\mathbin{∀}\Varid{α}\;\Varid{r}\;\Varid{r'}} .\ \ensuremath{\Conid{P}_1\mathbin{→}\ann{\mathbin{⟨}\Varid{ℓ},\Varid{r}\mathbin{⟩}\mathbin{*}\Varid{r'}}{\susp{\Varid{α}}}\mathbin{→}\ann{\Varid{r}\mathbin{*}\mathbin{⟨}\Varid{ℓ},\Varid{r'}\mathbin{⟩}}{\susp{\Conid{F}_1\;\Varid{α}}}} has separate concerns when all of its handler clauses are handled in a way that provides separation of concerns.
  That is, for any \ensuremath{\op{op}^{\ell}\;\Varid{f}\;\overline{\Varid{v}}},
  any \ensuremath{\Varid{h}\mathbin{:}\mathbin{∀}\Varid{α}\;\!.\!\;\Conid{P}\mathbin{→}\ann{\Varid{ε}\mathbin{*}\mathbin{⟨}\ell_{1},\mathbin{⟨}\ell_{2},\Varid{ε}^{\prime}\mathbin{⟩}\mathbin{⟩}}{\susp{\Varid{α}}}\mathbin{→}\ann{\mathbin{⟨⟩}\mathbin{*}(\Varid{ε}\mathbin{⊕}\mathbin{⟨}\ell_{1},\mathbin{⟨}\ell_{2},\Varid{ε}^{\prime}\mathbin{⟩}\mathbin{⟩})}{\susp{\Conid{G}\;\Varid{α}}}},
  any \ensuremath{\Varid{h}_2\mathbin{:}\mathbin{∀}\Varid{α}\;\Varid{r}\;\Varid{r'}\;\!.\!\;\Conid{P}_2\mathbin{→}\ann{\mathbin{⟨}\Varid{ℓ'},\Varid{r}\mathbin{⟩}\mathbin{*}\Varid{r'}}{\susp{\Varid{α}}}\mathbin{→}\ann{\Varid{r}\mathbin{*}\mathbin{⟨}\Varid{ℓ'},\Varid{r'}\mathbin{⟩}}{\susp{\Conid{F}_2\;\Varid{α}}}},
  and any \ensuremath{\Varid{e}\mathbin{:}\ann{\mathbin{⟨}\ell_{1},\mathbin{⟨}\ell_{2},\Varid{ε}\mathbin{⟩⟩}\mathbin{*}\Varid{ε'}}{\susp{\Conid{X}}}},
  where \ensuremath{\ell_{1}\neq\ell_{2}}, and where \ensuremath{\Varid{h}_1} and \ensuremath{\Varid{h}_2} satisfy \cref{lem:handler-return}, the following holds:
  \begin{equation*}
    \inferrule
    {    \ensuremath{(\Varid{h}^{\Varid{p}}\;(\Varid{h}_1^{\Varid{p}_{1}}\;(\Varid{h}_2^{\Varid{p}_{2}}\;\Conid{E}[\op{op}^{\ell}\;\Varid{f}\;\overline{\Varid{v}}])))\kw{!} \longmapsto \Varid{e'}}
    \and \ensuremath{\Varid{e}^{\prime} \longtwoheadmapsto \Varid{v}_{\mathrm{21}}}
    \\   \ensuremath{(\Varid{h}^{\Varid{p}}\;(\Varid{h}_2^{\Varid{p}_{2}}\;(\Varid{h}_1^{\Varid{p}_{1}}\;\Conid{E}[\op{op}^{\ell}\;\Varid{f}\;\overline{\Varid{v}}])))\kw{!} \longtwoheadmapsto \Varid{v}_{\mathrm{12}}}
    \\   \left(\ensuremath{\mathbin{∀}e_{0}\;\!.\!\;\Varid{e}^{\prime} \longtwoheadmapsto _{\beta}^{*}(\Varid{h}^{\Varid{p}}\;(\Varid{h}_1^{\Varid{p}_{1}}\;(\Varid{h}_2^{\Varid{p}_{2}}\;e_{0})))\kw{!} \Longrightarrow \Varid{h}^{\Varid{p}}\;(\Varid{h}_2^{\Varid{p}_{2}}\;(\Varid{h}_1^{\Varid{p}_{1}}\;e_{0})) \longtwoheadmapsto \Varid{v}_{\mathrm{12}} \Longrightarrow \Varid{v}_{\mathrm{21}}\sim\Varid{v}_{\mathrm{12}}}\right)
    }
    { \ensuremath{\Varid{v}_{\mathrm{21}}\sim\Varid{v}_{\mathrm{12}}}
    }
  \end{equation*}
  \label{lem:handler-op}
\end{lemma}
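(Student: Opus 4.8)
The plan is to establish the conclusion ``$h_1$ has separate concerns'' directly: I will show that for every handler context $h$, every handler $h_2$ with $\ell_2\neq\ell_1$ such that $h_1$ and $h_2$ satisfy \cref{lem:handler-return}, and every suitably-typed suspension $e$, if $(h^p\,(h_1^{p_1}\,(h_2^{p_2}\,e)))\kw{!}\longtwoheadmapsto v_{21}$ and $(h^p\,(h_2^{p_2}\,(h_1^{p_1}\,e)))\kw{!}\longtwoheadmapsto v_{12}$ then $v_{21}\sim v_{12}$. By \cref{thm:soundness} each side either diverges or terminates, so the statement has content only when both terminate; I take $v_{21}$ and $v_{12}$ as given and argue by strong induction on the length $n$ of the left-hand reduction, with the induction hypothesis ranging over all choices of $h$, $h_2$ and $e$ that admit a strictly shorter such reduction.

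The two configurations begin by evaluating $e$ inside their (differently ordered) handler stacks. I run both forward through their administrative steps, which act only on $e$ and so behave identically on the two sides, and split on what $e$'s computation does first. If it reaches a value $v$ without ever demanding a clause of $h_1$, $h_2$, or a handler of $h$, then the \textsc{Return} clauses of $h_2$, $h_1$ and the handlers comprising $h$ fire in turn (dually in the swapped order); by \cref{lem:handler-return} applied to $h_1$ and to $h_2$ (and, for $h$, by composing this property over its stack) together with the definition of $\Varid{toBag}$ on composites of baggable functors, the bags extracted from $v_{21}$ and from $v_{12}$ are both the singleton $\{v\}$, hence $v_{21}\sim v_{12}$ by the definition of $\sim$.

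Otherwise $e$'s computation is poised to perform some $\op{op}^{\ell}\;\Varid{f}\;\overline{\Varid{v}}$ in an evaluation context $E$ with no closer handler for $\ell$ (so $\ell\in\{\ell_1,\ell_2\}$, or $\ell$ is handled by $h$), which matches the configurations appearing in the rule's premises, modulo the $\beta$/\textsc{Enact} bookkeeping that the notation $h^p(\cdots)$ leaves implicit. I instantiate the lemma's hypothesis with this $\op{op}^{\ell}\;\Varid{f}\;\overline{\Varid{v}}$, $E$, $h$, $h_2$ and discharge its premises: the first two hold by construction, since the \textsc{Handle} step for $\op{op}^{\ell}$ gives $e'$ with $e'\longtwoheadmapsto v_{21}$ in fewer than $n$ steps; the swapped-order termination premise is given; and for the fourth, ``coinductive'' premise I fix $e_0$ with $e'\longtwoheadmapsto_\beta^* (h^p\,(h_1^{p_1}\,(h_2^{p_2}\,e_0)))\kw{!}$ and $(h^p\,(h_2^{p_2}\,(h_1^{p_1}\,e_0)))\kw{!}\longtwoheadmapsto v_{12}$, note that by determinism of $\longtwoheadmapsto$ the first of these is an initial segment of $e'\longtwoheadmapsto v_{21}$, so $(h^p\,(h_1^{p_1}\,(h_2^{p_2}\,e_0)))\kw{!}\longtwoheadmapsto v_{21}$ in strictly fewer than $n$ steps with $e_0$ well-typed by \cref{lem:subject-reduction}, and apply the induction hypothesis at $e_0$ (same $h$, $h_2$) to conclude $v_{21}\sim v_{12}$. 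With all premises in place, the rule yields $v_{21}\sim v_{12}$, closing the induction.

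I expect the real work to be concentrated in the operation case, and specifically in the bookkeeping that the abstract notation hides: confirming that the configuration reached by forward reduction genuinely matches the rule's left-hand sides (the administrative prefix), and that the continuation re-installed when $\op{op}^{\ell}$ is handled in the $h\circ h_1\circ h_2$ order is again a stack of the form $h^p(h_1^{p_1}(h_2^{p_2}(-)))$ wrapped around a subcomputation, so that the rule's $\forall e_0$ has a handle on it. This is where $\lambda^{hop}$'s ``delayed'' continuations (which consume suspensions rather than values) and the \textsc{T-Resurface} rule carry the load: they ensure the reinstalled stack around $e_0$ re-exposes $\ell$ together with the latent effects under correct annotations, so $e_0$ is again a legitimate input for the induction hypothesis. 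A lesser point to pin down is that the value case invokes \cref{lem:handler-return} for the whole context $h$, which presupposes that the handlers composing a handler context individually have singleton returns; this should be recorded as a standing assumption on the handler contexts considered.
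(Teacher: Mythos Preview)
You have misread what \cref{lem:handler-op} is. In the paper, Lemmas~5.1 and~5.2 are not theorems with proofs; they are \emph{criteria} that a particular handler may or may not satisfy. The sentence ``A handler $h_1$ \ldots\ has separate concerns when all of its handler clauses are handled in a way that provides separation of concerns'' is followed by ``That is, \ldots'', and what comes after is the \emph{definition} of the criterion: the displayed inference rule is the property one must verify for $h_1$. The paper never attempts to prove \cref{lem:handler-op} in the abstract; instead it (i) proves \cref{thm:soc}, which says that any $h_1$ satisfying both criteria has separate concerns, and (ii) verifies the criteria for concrete handlers (state, catch, \ldots) in the appendix.

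What you have actually written is a proof of \cref{thm:soc}, and there your argument is essentially the paper's: strong induction on the length of the $(h^p(h_1^{p_1}(h_2^{p_2}\,e)))\kw{!}\longtwoheadmapsto v_{21}$ derivation, with the value case discharged by the handler return lemmas and the operation case discharged by invoking the Handler~Op criterion, whose $\forall e_0$ premise is exactly the induction hypothesis. The paper phrases the induction as rule induction on $\longtwoheadmapsto$ after first generalising the left premise to allow a $\longtwoheadmapsto_\beta^*$ prefix (so that the IH applies after the administrative $\beta$-steps reinstall the $h\circ h_1\circ h_2$ stack), which is precisely the bookkeeping you flagged as ``where the real work is concentrated''. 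Your remark that the value case needs the handler return lemma for the whole context $h$ is also correct and is made explicit in the statement of \cref{thm:soc} (``where each handler in the handler context $h$ satisfies the handler return lemma''). So the content is right, but it belongs under \cref{thm:soc}, not \cref{lem:handler-op}.
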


The \ensuremath{ \longtwoheadmapsto _{\beta}^{*}} relation in \cref{lem:handler-op} represents a sequence of zero or more \ensuremath{\Varid{β}} reduction steps.

\Cref{thm:soc} below shows that any handler that satisfies the handler return and op lemmas has separate concerns.

\begin{theorem}[Separation of Concerns]
  Any handler \ensuremath{\Varid{h}_1\mathbin{:}\mathbin{∀}\Varid{α}\;\Varid{r}\;\Varid{r'}} .\ \ensuremath{\Conid{P}_1\mathbin{→}\ann{\mathbin{⟨}\Varid{ℓ},\Varid{r}\mathbin{⟩}\mathbin{*}\Varid{r'}}{\susp{\Varid{α}}}\mathbin{→}\ann{\Varid{r}\mathbin{*}\mathbin{⟨}\Varid{ℓ},\Varid{r'}\mathbin{⟩}}{\susp{\Conid{F}_1\;\Varid{α}}}} that satisfies the handler return lemma (\cref{lem:handler-return}) and handler op lemma (\cref{lem:handler-op}) has \emph{separate concerns}.
  That is, for
  any \ensuremath{\Varid{h}\mathbin{:}\mathbin{∀}\Varid{α}\;\!.\!\;\Conid{P}\mathbin{→}\ann{\Varid{ε}\mathbin{*}\mathbin{⟨}\ell_{1},\mathbin{⟨}\ell_{2},\Varid{ε}^{\prime}\mathbin{⟩}\mathbin{⟩}}{\susp{\Varid{α}}}\mathbin{→}\ann{\mathbin{⟨⟩}\mathbin{*}(\Varid{ε}\mathbin{⊕}\mathbin{⟨}\ell_{1},\mathbin{⟨}\ell_{2},\Varid{ε}^{\prime}\mathbin{⟩}\mathbin{⟩})}{\susp{\Conid{G}\;\Varid{α}}}}
  where each handler in the handler context \ensuremath{\Varid{h}} satisfies the handler return lemma,
  any \ensuremath{\Varid{h}_2\mathbin{:}\mathbin{∀}\Varid{α}\;\Varid{r}\;\Varid{r'}\;\!.\!\;\Conid{P}_2\mathbin{→}\ann{\mathbin{⟨}\Varid{ℓ'},\Varid{r}\mathbin{⟩}\mathbin{*}\Varid{r'}}{\susp{\Varid{α}}}\mathbin{→}\ann{\Varid{r}\mathbin{*}\mathbin{⟨}\Varid{ℓ'},\Varid{r'}\mathbin{⟩}}{\susp{\Conid{F}_2\;\Varid{α}}}}
  that also satisfies the handler return and the handler op lemma,
  and any \ensuremath{\Varid{e}\mathbin{:}\ann{\mathbin{⟨}\ell_{1},\mathbin{⟨}\ell_{2},\Varid{ε}\mathbin{⟩⟩}\mathbin{*}\Varid{ε'}}{\susp{\Conid{X}}}},
  \begin{equation*}
    \inferrule
    {  \ensuremath{(\Varid{h}^{\Varid{p}}\;(\Varid{h}_1^{\Varid{p}_{1}}\;(\Varid{h}_2^{\Varid{p}_{2}}\;\Varid{e})))\kw{!} \longtwoheadmapsto \Varid{v}_{\mathrm{21}}}
    \\ \ensuremath{(\Varid{h}^{\Varid{p}}\;(\Varid{h}_2^{\Varid{p}_{2}}\;(\Varid{h}_1^{\Varid{p}_{1}}\;\Varid{e})))\kw{!} \longtwoheadmapsto \Varid{v}_{\mathrm{12}}}
    }
    {  \ensuremath{\Varid{v}_{\mathrm{21}}\sim\Varid{v}_{\mathrm{12}}} }
  \end{equation*}
  \begin{proof}[Proof (sketch -- see \cref{sec:soc-appendix} for full proof)]
    The proof is by rule induction on \ensuremath{ \longtwoheadmapsto } in the premise of the goal.
    The base case, where \ensuremath{ \longtwoheadmapsto } represents zero steps, holds vacuously, because \ensuremath{(\Varid{h}^{\Varid{p}}\;(\Varid{h}_1^{\Varid{p}_{1}}\;(\Varid{h}_2^{\Varid{p}_{2}}\;\Varid{e})))\kw{!}} is not a value.
    The inductive case proceeds by inversion on the \ensuremath{ \longmapsto } and \ensuremath{\mathbin{-->}} relations, and the resulting goals follow from the handler return and handler op lemmas, since the induction hypothesis matches matches exactly the premise of the op handler lemma.
  \end{proof}
  \label{thm:soc}
\end{theorem}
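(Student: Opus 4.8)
The plan is to prove the statement by rule induction on the derivation of the reduction $(h^p\,(h_1^{p_1}\,(h_2^{p_2}\,e)))\kw{!} \longtwoheadmapsto v_{21}$ occurring as the first premise of the conclusion; equivalently, by strong induction on the number of $\longmapsto$-steps in it. The base case, a derivation of length zero, is vacuous: it would force $(h^p\,(h_1^{p_1}\,(h_2^{p_2}\,e)))\kw{!}$ to be equal to the value $v_{21}$, but an enactment $\Varid{e}\kw{!}$ is never a value in the grammar of \cref{fig:syntax}.

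For the inductive step, write the reduction as $(h^p\,(h_1^{p_1}\,(h_2^{p_2}\,e)))\kw{!} \longmapsto e' \longtwoheadmapsto v_{21}$ and analyse the first step. By the definition of $\longmapsto$ (\cref{fig:operational-semantics}) it factors as $\Conid{E}[e_1] \longmapsto \Conid{E}[e_2]$ with $e_1 \mathbin{-->} e_2$; inverting the contraction relation yields the cases $\delta$, $\beta$, \textsc{Let}, \textsc{Enact}, \textsc{Return}, and \textsc{Handle}. The ``administrative'' cases ($\delta$, $\beta$, \textsc{Let}, \textsc{Enact})---which also cover the $\beta$-steps that unfold the meta-level handler applications $h^p$, $h_1^{p_1}$, $h_2^{p_2}$ into $\kw{handle}$ expressions---together with the \textsc{Handle}/\textsc{Return} contractions that fire against a handler living inside the handler context $h$, or inside $e$ for a label other than $\ell_1$ or $\ell_2$, leave the two distinguished handlers untouched: the identical redex $e_1$ occurs in the mirrored expression $(h^p\,(h_2^{p_2}\,(h_1^{p_1}\,e)))\kw{!}$, so I would take the corresponding step there, observe that $e'$ is then of the canonical shape $(h^p\,(h_1^{p_1}\,(h_2^{p_2}\,e_0)))\kw{!}$ (up to a prefix of further $\beta$-steps), and close the case by the induction hypothesis applied to the strictly shorter residual reduction.

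Two cases carry the real content. If the first contraction is \textsc{Return}, which happens precisely when $e$ has already reduced to a suspended value $\susp{\Varid{v}}$, then iterating the handler return lemma (\cref{lem:handler-return}) over $h_2$, $h_1$, and each handler occurring in $h$ shows that both orderings reduce to results whose image under the derived $\Varid{toBag}$ is the singleton bag $\{v\}$, so $v_{21}\sim v_{12}$ holds at once. If instead the first contraction is \textsc{Handle} acting on an operation $\op{op}^{\ell}\;\Varid{f}\;\overline{\Varid{v}}$ with $\ell\in\{\ell_1,\ell_2\}$---so that $e$ decomposes as $\Conid{E}[\op{op}^{\ell}\;\Varid{f}\;\overline{\Varid{v}}]$---I would discharge the goal using the handler op lemma (\cref{lem:handler-op}) for the handler of $\ell$ (available since $h_1$ and $h_2$ both satisfy that lemma by hypothesis): its first two premises are the one-step unfolding of the first premise of the goal, its third premise is the second premise of the goal, and its fourth, universally quantified premise is exactly the induction hypothesis instantiated at the post-step continuation.

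The main obstacle---and the reason \cref{lem:handler-op} is phrased with its cumbersome fourth premise---is aligning the induction with that premise. The induction hypothesis cannot be applied to $e'$ directly, because $e'$ has the canonical form $(h^p\,(h_1^{p_1}\,(h_2^{p_2}\,e_0)))\kw{!}$ only after a prefix of $\beta$-reductions, and moreover the two orderings generally reach their respective values in different numbers of steps. The fourth premise of the op lemma is precisely the device that bridges this gap: it quantifies over the continuations $e_0$ reachable from $e'$ by $\beta$-steps, and threads a single value $v_{12}$ through both the whole-program reduction and the $e_0$-reduction, so that one length-decreasing appeal to the induction hypothesis suffices to settle each operation case. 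Getting this bookkeeping right, rather than any conceptual subtlety, is where the care is concentrated.
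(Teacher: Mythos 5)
Your proposal follows essentially the same route as the paper's proof: rule induction on the reduction sequence with a vacuous base case (the enacted handler stack is not a value), inversion on the first contraction, administrative steps closed by the induction hypothesis, the \textsc{Return} case discharged by the handler return lemmas, and the \textsc{Handle} case discharged by the handler op lemmas with the induction hypothesis instantiating their universally quantified premise. The only packaging difference is that the paper obtains the stronger induction hypothesis by generalizing the statement with an explicit \ensuremath{\longtwoheadmapsto_{\beta}^{*}}-prefix hypothesis before inducting, rather than by strong induction on reduction length --- the same device you identify for bridging the $\beta$-prefix gap.
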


\subsection{Discussion}

We have given a formal characterization of handlers that have separate concerns.
While our formal characterization is defined in terms of the notion of handlers of higher-order effects that this paper introduces, the characterization is expected to apply to plain algebraic effect handlers, and (perhaps with some adjustments) to scoped effect handlers.

Our characterization is currently limited to talking about a class of handlers that handles a \emph{single} effect and is polymorphic in both the remaining set of effects and in the return type of the computation being handled.
While this class of handlers is representative of many effects and effect handlers found in the literature, it does not contain effect handlers that handle multiple effects at once, nor effect handlers that assume that the input computation has a particular return type or a particular set of effects.
Our characterization of separate concerns readily generalizes to effect handlers that handle multiple effects at once, and to effect handlers that are non-polymorphic in the type of their input computation; e.g., \ensuremath{\Varid{h}\mathbin{:}\mathbin{∀}\Varid{r}\;\Varid{r}^{\prime}\!.\!\;\Conid{P}\mathbin{→}\ann{\mathbin{⟨}\Varid{ℓ},\Varid{r}\mathbin{⟩}\mathbin{*}\Varid{r'}}{\susp{\Conid{X}}}\mathbin{→}\ann{\Varid{r}\mathbin{*}\mathbin{⟨}\Varid{ℓ},\Varid{r'}\mathbin{⟩}}{\susp{\Conid{F}\;\Conid{X}}}} for some specific type \ensuremath{\Conid{X}}.
Handlers that are not polymorphic in \ensuremath{\Varid{r}} and \ensuremath{\Varid{r}^{\prime}} cannot be reordered as freely, as a specific row of handlers implies the presence of a particular set of effects---which may change when reordering handlers.
But it is still possible to prove a form of separation of concerns for such handlers, as we demonstrate below in \cref{sec:soc-ho-eff} where we prove a notion of separation of concerns for a handler for \ensuremath{\Varid{λ}} abstraction which is non-polymorphic in its row of effects.


\section{Separation of Concerns for Handlers of Common Effects}
\label{sec:soc-handlers}

We show (in \Cref{sec:soc-alg-eff,sec:soc-ho-eff}) that the characterization of separate concerns from \cref{sec:meta-theory} applies to handlers of common effects found in programming languages.
We also show (in \Cref{sec:soc-non-separable}) how handlers of other common effects, such as non-determinism, do \emph{not} have separable concerns.
Finally, in \cref{sec:soc-conjecture} we discuss a general criterion for handlers, which we conjecture implies separation of concerns.
The examples in this section are defined for a calculus that extends the core calculus from \cref{sec:calculus} with recursive bindings, \ensuremath{\kw{match}} expressions, and with data values (i.e., terms that represent constructors of data types).

\subsection{Separation of Concerns for Three Common Algebraic Effects}
\label{sec:soc-alg-eff}

\Cref{fig:alg-eff-handlers} shows the handlers of three common algebraic effects: a \emph{state effect}, \emph{abort effect}, and \emph{read-only effect}.
We show that these handlers satisfy \cref{lem:handler-return} and \cref{lem:handler-op}, and thus have separate concerns.

\begin{figure}
{\renewcommand{\hscodestyle}{\Small}
\begin{minipage}[t]{0.32\textwidth}
  \begin{hscode}\SaveRestoreHook
\column{B}{@{}>{\hspre}l<{\hspost}@{}}%
\column{3}{@{}>{\hspre}l<{\hspost}@{}}%
\column{5}{@{}>{\hspre}l<{\hspost}@{}}%
\column{6}{@{}>{\hspre}c<{\hspost}@{}}%
\column{6E}{@{}l@{}}%
\column{9}{@{}>{\hspre}l<{\hspost}@{}}%
\column{13}{@{}>{\hspre}l<{\hspost}@{}}%
\column{16}{@{}>{\hspre}l<{\hspost}@{}}%
\column{19}{@{}>{\hspre}l<{\hspost}@{}}%
\column{E}{@{}>{\hspre}l<{\hspost}@{}}%
\>[B]{}\Varid{hSt}{}\<[6]%
\>[6]{}\mathbin{:}{}\<[6E]%
\>[9]{}\mathbin{∀}\Varid{α}\;\Varid{r}\;\Varid{r}^{\prime}\!.\!\;\Conid{S}\mathbin{→}\ann{\mathbin{⟨}\Conid{St},\Varid{r}\mathbin{⟩}\mathbin{*}\Varid{r}^{\prime}}{\susp{\Varid{α}}}{}\<[E]%
\\
\>[6]{}\mathbin{→}{}\<[6E]%
\>[9]{}\ann{\Varid{r}\mathbin{*}\mathbin{⟨}\Conid{St},\Varid{r}^{\prime}\mathbin{⟩}}{\susp{(\Varid{α},\Conid{S})}}{}\<[E]%
\\
\>[B]{}\Varid{hSt}\;\Varid{s}_{0}\;\Varid{prog}\mathrel{=}{}\<[E]%
\\
\>[B]{}\hsindent{3}{}\<[3]%
\>[3]{}\kw{handle}\;\{\mskip1.5mu {}\<[E]%
\\
\>[3]{}\hsindent{2}{}\<[5]%
\>[5]{}\op{get}\;{}\<[13]%
\>[13]{}\Varid{s}\;{}\<[16]%
\>[16]{}\Varid{k}{}\<[19]%
\>[19]{}\mathbin{↦}\Varid{k}\;\Varid{s}\;\susp{\Varid{s}},{}\<[E]%
\\
\>[3]{}\hsindent{2}{}\<[5]%
\>[5]{}\op{put}\;\Varid{s'}\;{}\<[13]%
\>[13]{}\Varid{s}\;{}\<[16]%
\>[16]{}\Varid{k}{}\<[19]%
\>[19]{}\mathbin{↦}\Varid{k}\;\Varid{s'}\;\susp{()},{}\<[E]%
\\
\>[3]{}\hsindent{2}{}\<[5]%
\>[5]{}\kw{return}\;\Varid{x}\;\Varid{s}{}\<[19]%
\>[19]{}\mathbin{↦}\susp{(\Varid{x},\Varid{s})}{}\<[E]%
\\
\>[B]{}\hsindent{3}{}\<[3]%
\>[3]{}\mskip1.5mu\}\;\Varid{s}_{0}\;\Varid{prog}\kw{!}{}\<[E]%
\ColumnHook
\end{hscode}\resethooks
\end{minipage}
\begin{minipage}[t]{0.32\textwidth}
  \begin{hscode}\SaveRestoreHook
\column{B}{@{}>{\hspre}l<{\hspost}@{}}%
\column{3}{@{}>{\hspre}l<{\hspost}@{}}%
\column{5}{@{}>{\hspre}l<{\hspost}@{}}%
\column{9}{@{}>{\hspre}c<{\hspost}@{}}%
\column{9E}{@{}l@{}}%
\column{12}{@{}>{\hspre}l<{\hspost}@{}}%
\column{15}{@{}>{\hspre}l<{\hspost}@{}}%
\column{E}{@{}>{\hspre}l<{\hspost}@{}}%
\>[B]{}\Varid{hAbort}{}\<[9]%
\>[9]{}\mathbin{:}{}\<[9E]%
\>[12]{}\mathbin{∀}\Varid{α}\;\Varid{r}\;\Varid{r}^{\prime}\!.\!\;\ann{\mathbin{⟨}\Conid{Ab},\Varid{r}\mathbin{⟩}\mathbin{*}\Varid{r}^{\prime}}{\susp{\Varid{α}}}{}\<[E]%
\\
\>[9]{}\mathbin{→}{}\<[9E]%
\>[12]{}\ann{\Varid{r}\mathbin{*}\mathbin{⟨}\Conid{Ab},\Varid{r}^{\prime}\mathbin{⟩}}{\susp{\Conid{Maybe}\;\Varid{α}}}{}\<[E]%
\\
\>[B]{}\Varid{hAbort}\;\Varid{prog}\mathrel{=}{}\<[E]%
\\
\>[B]{}\hsindent{3}{}\<[3]%
\>[3]{}\kw{handle}\;\{\mskip1.5mu {}\<[E]%
\\
\>[3]{}\hsindent{2}{}\<[5]%
\>[5]{}\op{abort}\;\anonymous {}\<[15]%
\>[15]{}\mathbin{↦}\susp{\Conid{Nothing}},{}\<[E]%
\\
\>[3]{}\hsindent{2}{}\<[5]%
\>[5]{}\kw{return}\;\Varid{x}{}\<[15]%
\>[15]{}\mathbin{↦}\susp{\Conid{Just}\;\Varid{x}}{}\<[E]%
\\
\>[B]{}\hsindent{3}{}\<[3]%
\>[3]{}\mskip1.5mu\}\;\Varid{prog}\kw{!}{}\<[E]%
\ColumnHook
\end{hscode}\resethooks
\end{minipage}
\begin{minipage}[t]{0.32\textwidth}
  \begin{hscode}\SaveRestoreHook
\column{B}{@{}>{\hspre}l<{\hspost}@{}}%
\column{3}{@{}>{\hspre}l<{\hspost}@{}}%
\column{5}{@{}>{\hspre}l<{\hspost}@{}}%
\column{8}{@{}>{\hspre}c<{\hspost}@{}}%
\column{8E}{@{}l@{}}%
\column{11}{@{}>{\hspre}l<{\hspost}@{}}%
\column{17}{@{}>{\hspre}l<{\hspost}@{}}%
\column{E}{@{}>{\hspre}l<{\hspost}@{}}%
\>[B]{}\Varid{hRead}{}\<[8]%
\>[8]{}\mathbin{:}{}\<[8E]%
\>[11]{}\mathbin{∀}\Varid{α}\;\Varid{r}\;\Varid{r}^{\prime}\!.\!\;\Conid{R}\mathbin{→}\ann{\mathbin{⟨}\Conid{Rd},\Varid{r}\mathbin{⟩}\mathbin{*}\Varid{r}^{\prime}}{\susp{\Varid{α}}}{}\<[E]%
\\
\>[8]{}\mathbin{→}{}\<[8E]%
\>[11]{}\ann{\Varid{r}\mathbin{*}\mathbin{⟨}\Conid{Rd},\Varid{r}^{\prime}\mathbin{⟩}}{\susp{\Varid{α}}}{}\<[E]%
\\
\>[B]{}\Varid{hRead}\;\Varid{r}\;\Varid{prog}\mathrel{=}{}\<[E]%
\\
\>[B]{}\hsindent{3}{}\<[3]%
\>[3]{}\kw{handle}\;\{\mskip1.5mu {}\<[E]%
\\
\>[3]{}\hsindent{2}{}\<[5]%
\>[5]{}\op{read}\;\Varid{r}\;\Varid{k}{}\<[17]%
\>[17]{}\mathbin{↦}\Varid{k}\;\Varid{r}\;\susp{\Varid{r}},{}\<[E]%
\\
\>[3]{}\hsindent{2}{}\<[5]%
\>[5]{}\kw{return}\;\Varid{x}\;\anonymous {}\<[17]%
\>[17]{}\mathbin{↦}\susp{\Varid{x}}{}\<[E]%
\\
\>[B]{}\hsindent{3}{}\<[3]%
\>[3]{}\mskip1.5mu\}\;\Varid{r}\;\Varid{prog}\kw{!}{}\<[E]%
\ColumnHook
\end{hscode}\resethooks
\end{minipage}
\renewcommand{\hscodestyle}{}
}
  \caption{Handlers for common algebraic effects defined using handlers of higher-order effects}
  \label{fig:alg-eff-handlers}
\end{figure}

\begin{lemma}[Handler Return Lemmas for \Cref{fig:alg-eff-handlers}]
  Each of the handlers in \cref{fig:alg-eff-handlers} satisfy the handler return lemma (\cref{lem:handler-return}), using the following baggable functor definitions:\\
  \begin{minipage}[t]{0.32\textwidth}
    \begin{align*}
      \ensuremath{\Varid{toBag}_{(,)\;\Conid{S}}\ (\Varid{x},\Varid{s})} &= \ensuremath{\{\mskip1.5mu \Varid{x}\mskip1.5mu\}}
    \end{align*}
  \end{minipage}
  \begin{minipage}[t]{0.32\textwidth}
    \begin{align*}
      \ensuremath{\Varid{toBag}_{\Conid{Maybe}}\ (\Conid{Just}\;\Varid{x})} &= \ensuremath{\{\mskip1.5mu \Varid{x}\mskip1.5mu\}}
      \\
      \ensuremath{\Varid{toBag}_{\Conid{Maybe}}\ \Conid{Nothing}}    &= \ensuremath{\{\mskip1.5mu \mskip1.5mu\}}
    \end{align*}
  \end{minipage}
  \begin{minipage}[t]{0.32\textwidth}
    \begin{align*}
      \ensuremath{\Varid{toBag}_{\Conid{Id}}\ \Varid{x}} &= \ensuremath{\{\mskip1.5mu \Varid{x}\mskip1.5mu\}}
    \end{align*}
  \end{minipage}
  \begin{proof}
    Follows from the definition of the \ensuremath{\kw{return}} clauses in \cref{fig:alg-eff-handlers} and the $\textit{toBag}$ definitions above.
  \end{proof}
  \label{lem:handler-ret-alg-eff}
\end{lemma}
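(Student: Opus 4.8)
The plan is to establish the three cases separately, in each case by directly unfolding the operational semantics of \cref{fig:operational-semantics} on the expression $(\Varid{h}^{\Varid{p}}\;\susp{\Varid{v}})\kw{!}$ and then reading off the resulting $\Conid{F}$-structure. The key observation is that the argument suspension $\susp{\Varid{v}}$ contains no operations whatsoever, so no operation clause of $\Varid{h}$ is ever triggered: the reduction consists solely of unfolding the (recursive) definition of $\Varid{h}$ together with a couple of $\beta$- and \textsc{Let}-steps, the \textsc{Enact} rule applied to $\susp{\Varid{v}}\kw{!}$, and a single application of the \textsc{Return} rule.

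Concretely, for $\Varid{hSt}$: unfolding $\Varid{hSt}^{\Varid{s}_{0}}\;\susp{\Varid{v}}$ yields $\kw{handle}\;\{\ldots\}\;\Varid{s}_{0}\;\susp{\Varid{v}}\kw{!}$; since $\susp{\Varid{v}}\kw{!}\longmapsto\Varid{v}$ by \textsc{Enact}, and both $\Varid{v}$ and $\Varid{s}_{0}$ are values, the \textsc{Return} rule fires with the clause $\kw{return}\;\Varid{x}\;\Varid{s}\mapsto\susp{(\Varid{x},\Varid{s})}$, producing $\susp{(\Varid{v},\Varid{s}_{0})}$; one further \textsc{Enact} step for the outermost $\kw{!}$ then gives $(\Varid{hSt}^{\Varid{s}_{0}}\;\susp{\Varid{v}})\kw{!}\longtwoheadmapsto(\Varid{v},\Varid{s}_{0})$, and $\Varid{toBag}_{(,)\;\Conid{S}}\;(\Varid{v},\Varid{s}_{0})=\{\mskip1.5mu \Varid{v}\mskip1.5mu\}$ holds by the definition of $\Varid{toBag}_{(,)\;\Conid{S}}$ supplied in the statement. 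The cases of $\Varid{hAbort}$ (which carries no handler parameter) and $\Varid{hRead}$ are structurally identical: the \textsc{Return} clause produces $\susp{\Conid{Just}\;\Varid{v}}$, respectively $\susp{\Varid{v}}$, which after enactment becomes the data value $\Conid{Just}\;\Varid{v}$, respectively $\Varid{v}$; and then $\Varid{toBag}_{\Conid{Maybe}}\;(\Conid{Just}\;\Varid{v})=\{\mskip1.5mu \Varid{v}\mskip1.5mu\}$ and $\Varid{toBag}_{\Conid{Id}}\;\Varid{v}=\{\mskip1.5mu \Varid{v}\mskip1.5mu\}$ close those cases.

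I do not anticipate any genuine obstacle; the whole argument is a short symbolic computation. The only points worth a moment of care are: (i) that the handler parameter $\Varid{p}$ supplied to $\Varid{h}^{\Varid{p}}$ is itself already a value, which is implicit in the meaning of the notation $\Varid{h}^{\Varid{p}}$ and is precisely what makes the \textsc{Return} rule applicable; (ii) that the effect-row polymorphism and the immediate/latent split in the handlers' types play no role here, since \cref{lem:handler-return} speaks only about the reduction relation $\longtwoheadmapsto$; and (iii) that the ``lossiness'' of $\Varid{toBag}_{(,)\;\Conid{S}}$ (which discards the returned state) and the $\Conid{Nothing}$ case of $\Varid{toBag}_{\Conid{Maybe}}$ are harmless, because the \textsc{Return} clauses in \cref{fig:alg-eff-handlers} never produce $\Conid{Nothing}$ and $\Varid{toBag}$ is applied here only to the value that the return clause itself produced.
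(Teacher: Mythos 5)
Your proposal is correct and takes essentially the same route as the paper, whose proof is just the one-line observation that the claim follows from the \textsf{return} clauses and the $\mathit{toBag}$ definitions; you have simply spelled out the underlying reduction sequence (\textsc{Enact}, \textsc{Return}, \textsc{Enact}) that the paper leaves implicit. The extra remarks about the parameter being a value and the lossiness of $\mathit{toBag}$ being harmless are accurate but not needed beyond what the paper already asserts.
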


\begin{lemma}[Handler Op Lemmas for \Cref{fig:alg-eff-handlers}]
  Each of the handlers in \cref{fig:alg-eff-handlers} satisfy the handler op lemma (\cref{lem:handler-op}).
  \begin{proof}[Proof (sketch -- see \cref{sec:soc-state-appendix} for a full proof of \ensuremath{\Varid{hState}})]
    The proofs are by inversion on \ensuremath{ \longtwoheadmapsto }, \ensuremath{ \longmapsto }, and \ensuremath{\mathbin{-->}}.
    The goals follow from the ``induction step premise'' of the op handler lemma (see \cref{lem:handler-op}), and/or from the fact that \ensuremath{\Varid{h}} and \ensuremath{\Varid{h}_2} satisfy the handler return lemmas (\cref{lem:handler-return}).
  \end{proof}
\end{lemma}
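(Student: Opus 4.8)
The plan is to verify, for each of \ensuremath{\Varid{hSt}}, \ensuremath{\Varid{hAbort}}, and \ensuremath{\Varid{hRead}} from \cref{fig:alg-eff-handlers}, the hypotheses of \cref{lem:handler-op} with that handler taken as \ensuremath{\Varid{h}_{1}}. So fix one of these handlers as \ensuremath{\Varid{h}_{1}}, an arbitrary second handler \ensuremath{\Varid{h}_{2}} for an effect \ensuremath{\ell_{2}\neq\ell_{1}} satisfying the handler return lemma (\cref{lem:handler-return}), an outer handler context \ensuremath{\Varid{h}}, and an input expression. Because \cref{lem:handler-op} constrains only the clauses of \ensuremath{\Varid{h}_{1}}'s own effect \ensuremath{\ell_{1}}, the operation \ensuremath{\op{op}^{\ell}\;\Varid{f}\;\overline{\Varid{v}}} may be taken to be one of \ensuremath{\op{get}}/\ensuremath{\op{put}} for \ensuremath{\Varid{hSt}}, \ensuremath{\op{abort}} for \ensuremath{\Varid{hAbort}}, or \ensuremath{\op{read}} for \ensuremath{\Varid{hRead}} (the cases where the leftmost-outermost operation is one of \ensuremath{\Varid{h}_{2}}'s operations, or belongs to a handler inside \ensuremath{\Varid{h}}, are discharged directly inside the proof of \cref{thm:soc}). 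In both orderings, \ensuremath{(\Varid{h}^{\Varid{p}}\;(\Varid{h}_{1}^{\Varid{p}_{1}}\;(\Varid{h}_{2}^{\Varid{p}_{2}}\;\Conid{E}[\op{op}^{\ell}\;\Varid{f}\;\overline{\Varid{v}}])))\kw{!}} and \ensuremath{(\Varid{h}^{\Varid{p}}\;(\Varid{h}_{2}^{\Varid{p}_{2}}\;(\Varid{h}_{1}^{\Varid{p}_{1}}\;\Conid{E}[\op{op}^{\ell}\;\Varid{f}\;\overline{\Varid{v}}])))\kw{!}}, the handler lying between \ensuremath{\op{op}^{\ell}\;\Varid{f}} and its enclosing \ensuremath{\Varid{h}_{1}} handles a different effect (or is absent), so after the administrative \ensuremath{\beta}, \textsc{Let}, and \textsc{Enact} steps that unfold the combinator \ensuremath{\Varid{h}_{1}} the next redex in each is the \textsc{Handle} step for the matching clause of \ensuremath{\Varid{h}_{1}}. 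The proof then splits on which clause this is.

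For the resuming clauses --- \ensuremath{\op{get}}, \ensuremath{\op{put}}, \ensuremath{\op{read}} --- the clause body applies the re-installed continuation, in tail position, to the operation's result and to the (possibly updated) handler parameter, re-enacting the captured context around the result. In the \ensuremath{\Varid{h}_{1}}-outer ordering the re-installed continuation also re-wraps the context \ensuremath{\Varid{h}_{2}^{\Varid{p}_{2}}}, whereas in the \ensuremath{\Varid{h}_{2}}-outer ordering it re-wraps only \ensuremath{\Conid{E}}; after \ensuremath{\beta}/\textsc{Enact}-reducing this application, the two orderings reach \ensuremath{(\Varid{h}^{\Varid{p}}\;(\Varid{h}_{1}^{\Varid{p}^{\prime}}\;(\Varid{h}_{2}^{\Varid{p}_{2}}\;e_{0})))\kw{!}} and \ensuremath{(\Varid{h}^{\Varid{p}}\;(\Varid{h}_{2}^{\Varid{p}_{2}}\;(\Varid{h}_{1}^{\Varid{p}^{\prime}}\;e_{0})))\kw{!}} respectively, for one and the same \ensuremath{e_{0}} (the context \ensuremath{\Conid{E}}, filled with the operation's result and re-suspended) --- the only change being to \ensuremath{\Varid{h}_{1}}'s parameter, which for \ensuremath{\op{get}} and \ensuremath{\op{read}} is unchanged and for \ensuremath{\op{put}\;\Varid{s}^{\prime}} becomes \ensuremath{\Varid{s}^{\prime}}. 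This is exactly the canonical shape named in the induction-step premise of \cref{lem:handler-op}, so instantiating that premise with this \ensuremath{e_{0}} --- the premise being precisely the induction hypothesis available inside the proof of \cref{thm:soc}, and understood to range over the handler parameters as well --- gives \ensuremath{\Varid{v}_{\mathrm{21}}\sim\Varid{v}_{\mathrm{12}}}.

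For \ensuremath{\Varid{hAbort}}'s \ensuremath{\op{abort}} clause the continuation is discarded and the handled computation collapses to \ensuremath{\susp{\Conid{Nothing}}}. In the \ensuremath{\Varid{hAbort}}-outer ordering this discards \ensuremath{\Varid{h}_{2}^{\Varid{p}_{2}}} together with \ensuremath{\Conid{E}}, and the remaining \ensuremath{\Varid{h}^{\Varid{p}}} re-wraps \ensuremath{\Conid{Nothing}} through its return clause; by the handler return lemma for \ensuremath{\Varid{h}} the bag extracted from \ensuremath{\Varid{v}_{\mathrm{21}}} is a singleton bag containing \ensuremath{\Conid{Nothing}}, which under the free composite extractor collapses to the empty bag because \ensuremath{\Varid{toBag}_{\Conid{Maybe}}\;\Conid{Nothing}\mathrel{=}\{\mskip1.5mu \mskip1.5mu\}}. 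In the \ensuremath{\Varid{hAbort}}-inner ordering \ensuremath{\op{abort}} discards only \ensuremath{\Conid{E}}, and \ensuremath{\susp{\Conid{Nothing}}} is re-wrapped first by \ensuremath{\Varid{h}_{2}}'s return clause and then by \ensuremath{\Varid{h}}'s; applying the handler return lemmas for \ensuremath{\Varid{h}_{2}} and \ensuremath{\Varid{h}} in turn, together with \ensuremath{\Varid{toBag}_{\Conid{Maybe}}\;\Conid{Nothing}\mathrel{=}\{\mskip1.5mu \mskip1.5mu\}}, shows that the bag extracted from \ensuremath{\Varid{v}_{\mathrm{12}}} is also empty. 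Hence \ensuremath{\Varid{v}_{\mathrm{21}}\sim\Varid{v}_{\mathrm{12}}}. Among the three handlers, \ensuremath{\Varid{hAbort}} is the only one that discards the continuation and \ensuremath{\Varid{hSt}} the only one whose parameter changes, so \ensuremath{\Varid{hRead}} is the easiest case of all.

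I expect the main obstacle to be the bookkeeping forced by the self-referential shape of \cref{lem:handler-op}. Since the lemma is, in effect, the inductive step of the proof of \cref{thm:soc}, the argument must carry out precisely the administrative \ensuremath{\beta}, \textsc{Let}, and \textsc{Enact} reductions that expose \ensuremath{\Varid{hSt}}/\ensuremath{\Varid{hAbort}}/\ensuremath{\Varid{hRead}} as \ensuremath{\kw{handle}} expressions and that re-associate the re-installed continuation, and must then reconstruct the canonical form \ensuremath{(\Varid{h}^{\Varid{p}}\;(\Varid{h}_{1}^{\Varid{p}^{\prime}}\;(\Varid{h}_{2}^{\Varid{p}_{2}}\;e_{0})))\kw{!}} literally enough that the induction-step premise applies; simultaneously one must keep track of how the answer-type modifiers stack up --- \ensuremath{\Conid{G}\circ\Conid{F}_{1}\circ\Conid{F}_{2}} in the \ensuremath{\Varid{h}_{1}}-outer ordering versus \ensuremath{\Conid{G}\circ\Conid{F}_{2}\circ\Conid{F}_{1}} in the other --- so that the two \ensuremath{\Varid{toBag}} computations come out manifestly equal. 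For the resuming clauses this equality is immediate from the induction-step premise, and for \ensuremath{\op{abort}} it reduces to the return lemmas for the outer handlers plus the fact that \ensuremath{\Conid{Nothing}} carries the empty bag. The details for \ensuremath{\Varid{hSt}} are spelled out in \cref{sec:soc-state-appendix}; \ensuremath{\Varid{hRead}} and \ensuremath{\Varid{hAbort}} run analogously.
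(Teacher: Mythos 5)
Your proposal is correct and follows essentially the same route as the paper: invert the \textsc{Handle} step, observe that the resuming clauses (\ensuremath{\op{get}}, \ensuremath{\op{put}}, \ensuremath{\op{read}}) $\beta$-reduce both orderings to the canonical nested form for a common \ensuremath{e_{0}} (with only \ensuremath{\Varid{hSt}}'s parameter updated) so that the induction-step premise of \cref{lem:handler-op} closes the goal, while the continuation-discarding \ensuremath{\op{abort}} clause is discharged via the handler return lemmas and \ensuremath{\Varid{toBag}_{\Conid{Maybe}}\;\Conid{Nothing}\mathrel{=}\{\mskip1.5mu\mskip1.5mu\}}. Your remark that the induction-step premise must be read as quantifying over the handler parameter is apt --- the paper's appendix proof for \ensuremath{\op{put}} uses it in exactly that way.
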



\subsection{Separation of Concerns for Three Common Higher-Order Effects}
\label{sec:soc-ho-eff}

\Cref{fig:ho-eff-handlers} shows three handlers of higher-order effects: a \emph{local} effect (akin to Haskell's \ensuremath{\Conid{Reader}} monad), an \emph{exception catching} effect, and an effect with operations for the lambda abstraction effect discussed in \cref{sec:lambda-effect}.

\begin{figure}
{
\renewcommand{\hscodestyle}{\Small}
\begin{minipage}[t]{0.30\textwidth}
  \begin{hscode}\SaveRestoreHook
\column{B}{@{}>{\hspre}l<{\hspost}@{}}%
\column{3}{@{}>{\hspre}l<{\hspost}@{}}%
\column{5}{@{}>{\hspre}l<{\hspost}@{}}%
\column{7}{@{}>{\hspre}l<{\hspost}@{}}%
\column{9}{@{}>{\hspre}c<{\hspost}@{}}%
\column{9E}{@{}l@{}}%
\column{12}{@{}>{\hspre}l<{\hspost}@{}}%
\column{16}{@{}>{\hspre}l<{\hspost}@{}}%
\column{21}{@{}>{\hspre}l<{\hspost}@{}}%
\column{E}{@{}>{\hspre}l<{\hspost}@{}}%
\>[B]{}\Varid{hLocal}{}\<[9]%
\>[9]{}\mathbin{:}{}\<[9E]%
\>[12]{}\mathbin{∀}\Varid{α}\;\Varid{r}\;\Varid{r}^{\prime}\!.\!\;\Conid{R}\mathbin{→}\ann{\mathbin{⟨}\Conid{Lo},\Varid{r}\mathbin{⟩}\mathbin{*}\Varid{r}^{\prime}}{\susp{\Varid{α}}}{}\<[E]%
\\
\>[9]{}\mathbin{→}{}\<[9E]%
\>[12]{}\ann{\Varid{r}\mathbin{*}\mathbin{⟨}\Conid{Lo},\Varid{r}^{\prime}\mathbin{⟩}}{\susp{\Varid{α}}}{}\<[E]%
\\
\>[B]{}\Varid{hLocal}\;\Varid{r}\;\Varid{prog}\mathrel{=}{}\<[E]%
\\
\>[B]{}\hsindent{3}{}\<[3]%
\>[3]{}\kw{handle}\;\{\mskip1.5mu {}\<[E]%
\\
\>[3]{}\hsindent{2}{}\<[5]%
\>[5]{}\op{ask}\;{}\<[16]%
\>[16]{}\Varid{r}\;\Varid{k}{}\<[21]%
\>[21]{}\mathbin{↦}\Varid{k}\;\Varid{r}\;\susp{\Varid{r}}{}\<[E]%
\\
\>[3]{}\hsindent{2}{}\<[5]%
\>[5]{}\op{local}\;\Varid{f}\;\Varid{m}\;{}\<[16]%
\>[16]{}\Varid{r}\;\Varid{k}{}\<[21]%
\>[21]{}\mathbin{↦}{}\<[E]%
\\
\>[5]{}\hsindent{2}{}\<[7]%
\>[7]{}\Varid{k}\;\Varid{r}\;(\Varid{hLocal}\;(\Varid{f}\;\Varid{r})\;\Varid{m}\kw{!}){}\<[E]%
\\
\>[3]{}\hsindent{2}{}\<[5]%
\>[5]{}\kw{return}\;\Varid{x}\;\anonymous {}\<[21]%
\>[21]{}\mathbin{↦}\susp{\Varid{x}}{}\<[E]%
\\
\>[B]{}\hsindent{3}{}\<[3]%
\>[3]{}\mskip1.5mu\}\;\Varid{r}\;\Varid{prog}\kw{!}{}\<[E]%
\ColumnHook
\end{hscode}\resethooks
\end{minipage}
\begin{minipage}[t]{0.31\textwidth}
  \begin{hscode}\SaveRestoreHook
\column{B}{@{}>{\hspre}l<{\hspost}@{}}%
\column{3}{@{}>{\hspre}l<{\hspost}@{}}%
\column{5}{@{}>{\hspre}l<{\hspost}@{}}%
\column{7}{@{}>{\hspre}l<{\hspost}@{}}%
\column{9}{@{}>{\hspre}c<{\hspost}@{}}%
\column{9E}{@{}l@{}}%
\column{12}{@{}>{\hspre}l<{\hspost}@{}}%
\column{15}{@{}>{\hspre}l<{\hspost}@{}}%
\column{18}{@{}>{\hspre}l<{\hspost}@{}}%
\column{30}{@{}>{\hspre}l<{\hspost}@{}}%
\column{E}{@{}>{\hspre}l<{\hspost}@{}}%
\>[B]{}\Varid{hCatch}{}\<[9]%
\>[9]{}\mathbin{:}{}\<[9E]%
\>[12]{}\mathbin{∀}\Varid{α}\;\Varid{r}\;\Varid{r}^{\prime}\!.\!\;\ann{\mathbin{⟨}\Conid{Ca},\Varid{r}\mathbin{⟩}\mathbin{*}\Varid{r}^{\prime}}{\susp{\Varid{α}}}{}\<[E]%
\\
\>[9]{}\mathbin{→}{}\<[9E]%
\>[12]{}\ann{\Varid{r}\mathbin{*}\mathbin{⟨}\Conid{Ca},\Varid{r}^{\prime}\mathbin{⟩}}{\susp{\Conid{Maybe}\;\Varid{α}}}{}\<[E]%
\\
\>[B]{}\Varid{hCatch}\;\Varid{prog}\mathrel{=}{}\<[E]%
\\
\>[B]{}\hsindent{3}{}\<[3]%
\>[3]{}\kw{handle}\;\{\mskip1.5mu {}\<[E]%
\\
\>[3]{}\hsindent{2}{}\<[5]%
\>[5]{}\op{throw}\;{}\<[18]%
\>[18]{}\Varid{k}\mathbin{↦}\susp{\Conid{Nothing}},{}\<[E]%
\\
\>[3]{}\hsindent{2}{}\<[5]%
\>[5]{}\op{catch}\;\Varid{m}_{1}\;\Varid{m}_{2}\;{}\<[18]%
\>[18]{}\Varid{k}\mathbin{↦}{}\<[E]%
\\
\>[5]{}\hsindent{2}{}\<[7]%
\>[7]{}\Varid{k}\;()\;\{\mskip1.5mu {}\<[15]%
\>[15]{}\kw{match}\;(\Varid{hCatch}\;\Varid{m}_{1})\kw{!}{}\<[E]%
\\
\>[15]{}\mid \Conid{Nothing}\to \Varid{m}_{2}\kw{!}{}\<[E]%
\\
\>[15]{}\mid \Conid{Just}\;\Varid{x}\to \Varid{x}{}\<[30]%
\>[30]{}\mskip1.5mu\},{}\<[E]%
\\
\>[3]{}\hsindent{2}{}\<[5]%
\>[5]{}\kw{return}\;\Varid{x}\mathbin{↦}\susp{\Conid{Just}\;\Varid{x}}{}\<[E]%
\\
\>[B]{}\hsindent{3}{}\<[3]%
\>[3]{}\mskip1.5mu\}\;\Varid{prog}\kw{!}{}\<[E]%
\ColumnHook
\end{hscode}\resethooks
\end{minipage}
\begin{minipage}[t]{0.35\textwidth}
  \begin{hscode}\SaveRestoreHook
\column{B}{@{}>{\hspre}l<{\hspost}@{}}%
\column{3}{@{}>{\hspre}l<{\hspost}@{}}%
\column{5}{@{}>{\hspre}l<{\hspost}@{}}%
\column{6}{@{}>{\hspre}c<{\hspost}@{}}%
\column{6E}{@{}l@{}}%
\column{7}{@{}>{\hspre}l<{\hspost}@{}}%
\column{8}{@{}>{\hspre}c<{\hspost}@{}}%
\column{8E}{@{}l@{}}%
\column{9}{@{}>{\hspre}l<{\hspost}@{}}%
\column{11}{@{}>{\hspre}l<{\hspost}@{}}%
\column{12}{@{}>{\hspre}l<{\hspost}@{}}%
\column{18}{@{}>{\hspre}l<{\hspost}@{}}%
\column{23}{@{}>{\hspre}l<{\hspost}@{}}%
\column{24}{@{}>{\hspre}c<{\hspost}@{}}%
\column{24E}{@{}l@{}}%
\column{26}{@{}>{\hspre}l<{\hspost}@{}}%
\column{E}{@{}>{\hspre}l<{\hspost}@{}}%
\>[B]{}\Varid{hLam}{}\<[8]%
\>[8]{}\mathbin{:}{}\<[8E]%
\>[11]{}\mathbin{∀}\Varid{α}\;\Varid{r}\;\Varid{r}^{\prime}\!.\!\;\Conid{Env}\mathbin{→}\ann{\mathbin{⟨}\Conid{La},\mathbin{⟨}\Conid{Ab},\Varid{r}\mathbin{⟩}\mathbin{⟩}\mathbin{*}\Varid{r}^{\prime}}{\susp{\Varid{α}}}{}\<[E]%
\\
\>[8]{}\mathbin{→}{}\<[8E]%
\>[11]{}\ann{\mathbin{⟨}\Conid{Ab},\Varid{r}\mathbin{⟩}\mathbin{*}\mathbin{⟨}\Conid{La},\Varid{r}^{\prime}\mathbin{⟩}}{\susp{\Varid{α}}}{}\<[E]%
\\
\>[B]{}\Varid{hLam}\;\Varid{r}\;\Varid{prog}\mathrel{=}{}\<[E]%
\\
\>[B]{}\hsindent{3}{}\<[3]%
\>[3]{}\kw{handle}\;\{\mskip1.5mu {}\<[E]%
\\
\>[3]{}\hsindent{2}{}\<[5]%
\>[5]{}\op{abs}\;\Varid{x}\;\Varid{m}\;{}\<[18]%
\>[18]{}\Varid{r}\;\Varid{k}{}\<[23]%
\>[23]{}\mathbin{↦}\Varid{k}\;\Varid{r}\;\susp{\Conid{Clo}\;\Varid{x}\;\Varid{m}\;\Varid{r}},{}\<[E]%
\\
\>[3]{}\hsindent{2}{}\<[5]%
\>[5]{}\op{app}\;\Varid{v}_{1}\;\Varid{v}_{2}\;{}\<[18]%
\>[18]{}\Varid{r}\;\Varid{k}{}\<[23]%
\>[23]{}\mathbin{↦}\kw{match}\;\Varid{v}_{1}{}\<[E]%
\\
\>[5]{}\hsindent{2}{}\<[7]%
\>[7]{}\mid (\Conid{Clo}\;\Varid{x}\;\Varid{m}\;\Varid{r}^{\prime}){}\<[24]%
\>[24]{}\to {}\<[24E]%
\\
\>[7]{}\hsindent{2}{}\<[9]%
\>[9]{}\Varid{k}\;(\Varid{hLam}\;((\Varid{x},\Varid{v}_{2})\mathbin{::}\Varid{r}^{\prime})\;\Varid{m}){}\<[E]%
\\
\>[5]{}\hsindent{2}{}\<[7]%
\>[7]{}\mid \anonymous {}\<[26]%
\>[26]{}\to \op{abort}\kw{!},{}\<[E]%
\\
\>[3]{}\hsindent{2}{}\<[5]%
\>[5]{}\op{var}\;\Varid{x}\;{}\<[18]%
\>[18]{}\Varid{r}\;\Varid{k}{}\<[23]%
\>[23]{}\mathbin{↦}\Varid{k}\;\Varid{r}\;\susp{\Varid{lookup}\;\Varid{x}\;\Varid{r}},{}\<[E]%
\\
\>[3]{}\hsindent{2}{}\<[5]%
\>[5]{}\kw{return}\;\Varid{x}\;\Varid{r}{}\<[23]%
\>[23]{}\mathbin{↦}\susp{\Varid{x}}{}\<[E]%
\\
\>[B]{}\hsindent{3}{}\<[3]%
\>[3]{}\mskip1.5mu\}\;\Varid{r}\;\Varid{prog}\kw{!}{}\<[E]%
\\[\blanklineskip]%
\>[B]{}\mathbf{where}{}\<[E]%
\\[\blanklineskip]%
\>[B]{}\Conid{Clo}{}\<[6]%
\>[6]{}\mathbin{:}{}\<[6E]%
\>[12]{}\Conid{String}\mathbin{→}\ann{\mathbin{⟨}\Conid{La},\mathbin{⟨}\Conid{Ab},\Varid{r}\mathbin{⟩}\mathbin{⟩}\mathbin{*}\Varid{r}^{\prime}}{\susp{\Conid{Val}\;\Varid{r}\;\Varid{r}^{\prime}}}{}\<[E]%
\\
\>[6]{}\mathbin{→}{}\<[6E]%
\>[12]{}\Conid{Env}\mathbin{→}\Conid{Val}\;\Varid{r}\;\Varid{r}^{\prime}{}\<[E]%
\ColumnHook
\end{hscode}\resethooks
\end{minipage}
\renewcommand{\hscodestyle}{}
}
\caption{Handlers for common higher-order effects}
\label{fig:ho-eff-handlers}
\end{figure}

\begin{lemma}[Handler Return Lemmas for \cref{fig:ho-eff-handlers}]
  Each of the handlers in \cref{fig:ho-eff-handlers} satisfy the handler return lemma (\cref{lem:handler-return}), using the $\mathit{toBag}_{\mathit{Id}}$ and $\mathit{toBag}_{\mathit{Maybe}}$ definitions from \cref{lem:handler-ret-alg-eff}.
  \begin{proof}
    Follows from the definition of the \ensuremath{\kw{return}} clauses in \cref{fig:ho-eff-handlers} and the $\mathit{toBag}$ definitions from \cref{lem:handler-ret-alg-eff}.
  \end{proof}
\end{lemma}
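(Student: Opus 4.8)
The plan is to reduce the claim, for each of the three handlers in \cref{fig:ho-eff-handlers}, to a one-line trace of what happens when the handler is run on a suspended value. First I would record the common schema. By the definitions in \cref{fig:ho-eff-handlers}, an invocation $\Varid{h}^{\Varid{p}}$ applied to $\susp{\Varid{v}}$ unfolds --- after the \ensuremath{\Varid{β}} steps that pass the handler's arguments --- into $\kw{handle}^{\ell}\;\{\mskip1.5mu \ldots \mskip1.5mu\}\;\Varid{v}_{\Varid{p}}\;(\susp{\Varid{v}}\kw{!})$, where $\Varid{v}_{\Varid{p}}$ is the value of the handler parameter ($()$ for \ensuremath{\Varid{hCatch}}). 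By \textsc{Enact}, $\susp{\Varid{v}}\kw{!} \longmapsto \Varid{v}$, leaving $\kw{handle}^{\ell}\;\{\mskip1.5mu \ldots \mskip1.5mu\}\;\Varid{v}_{\Varid{p}}\;\Varid{v}$, in which both the parameter and the handled expression are values, so the \textsc{Return} rule fires and instantiates the return clause's value variable with $\Varid{v}$ (its parameter variable does not occur in the right-hand side of any of the three return clauses, so its value is irrelevant).

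Then I would specialise this to each handler and appeal to the relevant $\Varid{toBag}$ definition. For \ensuremath{\Varid{hLocal}} and \ensuremath{\Varid{hLam}}, the return clauses produce $\susp{\Varid{x}}$ with $\Varid{x} \mapsto \Varid{v}$, i.e.\ $\susp{\Varid{v}}$; hence $\Varid{h}^{\Varid{p}}\;\susp{\Varid{v}}$ reduces to the suspension $\susp{\Varid{v}}$ and $(\Varid{h}^{\Varid{p}}\;\susp{\Varid{v}})\kw{!}$ reduces, by one further \textsc{Enact} step, to $\Varid{v}$; since $\Varid{toBag}_{\Conid{Id}}\;\Varid{v} = \{\mskip1.5mu \Varid{v}\mskip1.5mu\}$ by the definition imported from \cref{lem:handler-ret-alg-eff}, \cref{lem:handler-return} holds with $\Conid{F} = \Conid{Id}$. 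For \ensuremath{\Varid{hCatch}}, the return clause produces $\susp{\Conid{Just}\;\Varid{x}}$, i.e.\ $\susp{\Conid{Just}\;\Varid{v}}$, so $(\Varid{h}^{\Varid{p}}\;\susp{\Varid{v}})\kw{!} \longtwoheadmapsto \Conid{Just}\;\Varid{v}$, and $\Varid{toBag}_{\Conid{Maybe}}\;(\Conid{Just}\;\Varid{v}) = \{\mskip1.5mu \Varid{v}\mskip1.5mu\}$, so \cref{lem:handler-return} holds with $\Conid{F} = \Conid{Maybe}$.

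I do not expect a genuine obstacle: the lemma is essentially a computation whose only moving parts are the shape of the return clause and the corresponding $\Varid{toBag}$. The one point worth stating explicitly is that $\Varid{toBag}_{\Conid{F}}$ is applied to the value obtained by fully reducing $(\Varid{h}^{\Varid{p}}\;\susp{\Varid{v}})\kw{!}$, so the argument tacitly uses subject reduction (\cref{lem:subject-reduction}) to keep the $\Conid{F}\;\Varid{α}$ typing along the way and the determinism of $\longmapsto$ so that ``the'' reduct is well defined --- both already available from the earlier development. In the write-up I would state the common schema once and then list the three one-line specialisations, keeping the proof to a handful of lines.
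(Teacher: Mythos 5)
Your proposal is correct and follows the same route as the paper, which simply observes that the claim follows by unfolding the \ensuremath{\kw{return}} clauses and the relevant $\mathit{toBag}$ definitions; you merely make the intermediate \textsc{Enact}/\textsc{Return} reduction steps explicit. No gap.
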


\begin{lemma}[Handler Op Lemmas for \ensuremath{\Varid{hLocal}} and \ensuremath{\Varid{hCatch}}]
  Each of the handlers in \cref{fig:ho-eff-handlers} satisfy the handler op lemma (\cref{lem:handler-op}).
  \begin{proof}[Proof (sketch -- see \cref{app:handler-op-catch} for full proof of \ensuremath{\Varid{hCatch}})]
    The proof is by inversion on \ensuremath{ \longtwoheadmapsto }, \ensuremath{ \longmapsto }, and \ensuremath{\mathbin{-->}}.
    The goals follow from the ``induction step premise'' of the op handler lemma (see \cref{lem:handler-op}), and/or from the fact that \ensuremath{\Varid{h}} and \ensuremath{\Varid{h}_2} satisfy the handler return lemmas (\cref{lem:handler-return}).
  \end{proof}
\end{lemma}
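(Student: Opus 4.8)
The plan is to verify \cref{lem:handler-op} one handler at a time, for each handler \ensuremath{\Varid{h}_{1}} among \ensuremath{\Varid{hLocal}}, \ensuremath{\Varid{hCatch}}, and \ensuremath{\Varid{hLam}}, following the recipe already used for the algebraic handlers of \cref{fig:alg-eff-handlers}. First I would unfold \ensuremath{\Varid{h}_{1}} into the \ensuremath{\kw{handle}}-expression it abbreviates, and then reason by inversion on the single reduction step in the premise, $(\ensuremath{\Varid{h}^{\Varid{p}}}\;(\ensuremath{\Varid{h}_{1}^{\Varid{p}_{1}}}\;(\ensuremath{\Varid{h}_{2}^{\Varid{p}_{2}}}\;\ensuremath{\Conid{E}}[\ensuremath{\op{op}^{\ell}\;\Varid{f}\;\overline{\Varid{v}}}])))\kw{!} \longmapsto \ensuremath{\Varid{e}^{\prime}}$. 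Inversion on $\longmapsto$, and on the underlying one-step contraction relation, pins down where the leftmost--outermost redex sits: a $\delta$/$\beta$/\textsc{Let}/\textsc{Enact} step strictly inside \ensuremath{\Conid{E}}; an operation discharged by a handler occurring inside \ensuremath{\Conid{E}}; or the operation \ensuremath{\op{op}^{\ell}\;\Varid{f}} itself being discharged by whichever of \ensuremath{\Varid{h}}, \ensuremath{\Varid{h}_{1}}, \ensuremath{\Varid{h}_{2}} is its closest installed handler. In each case the goal is either to exhibit a witness \ensuremath{e_{0}} for which \ensuremath{\Varid{e}^{\prime}} $\beta$-reduces to $(\ensuremath{\Varid{h}^{\Varid{p}}}\;(\ensuremath{\Varid{h}_{1}^{\Varid{p}_{1}}}\;(\ensuremath{\Varid{h}_{2}^{\Varid{p}_{2}}}\;\ensuremath{e_{0}})))\kw{!}$ and to check that the swapped configuration reduces correspondingly, so that the ``induction step premise'' of \cref{lem:handler-op} fires; or, for clauses that discard the continuation, to close the goal directly from the handler return lemmas for \ensuremath{\Varid{h}} and \ensuremath{\Varid{h}_{2}} together with the definition of $\sim$.

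I expect all cases that do \emph{not} exercise a recursive invocation of \ensuremath{\Varid{h}_{1}} to go through largely mechanically. For a step internal to \ensuremath{\Conid{E}}, and for an operation handled by \ensuremath{\Varid{h}} or \ensuremath{\Varid{h}_{2}} through a clause that merely feeds a value and a computation to its continuation, the \textsc{Handle}/\textsc{Enact} step plus a few $\beta$-steps re-associates \ensuremath{\Varid{e}^{\prime}} into the shape $(\ensuremath{\Varid{h}^{\Varid{p}}}\;(\ensuremath{\Varid{h}_{1}^{\Varid{p}_{1}}}\;(\ensuremath{\Varid{h}_{2}^{\Varid{p}_{2}}}\;\ensuremath{e_{0}})))\kw{!}$ --- with \ensuremath{e_{0}} the re-decomposed body (and possibly an updated handler parameter threaded through) --- and, because each expression has at most one $\longmapsto$-reduct, the swapped reduction passes through the matching state, so the induction step premise closes the goal. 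The delicate points here are threading updated handler parameters and keeping effect annotations aligned, for which I would appeal to \cref{lem:reannotation} and \cref{lem:subject-reduction}. The simple clauses of \ensuremath{\Varid{h}_{1}} itself are of the same flavour: the \ensuremath{\op{throw}} clause of \ensuremath{\Varid{hCatch}} (and the \ensuremath{\op{abort}} fallback in \ensuremath{\Varid{hLam}}'s \ensuremath{\op{app}} clause) discards \ensuremath{\Varid{k}} and returns a suspended constant, so both orderings collapse to comparable values via the return lemmas; and the \ensuremath{\op{ask}}, \ensuremath{\op{abs}}, and \ensuremath{\op{var}} clauses of \ensuremath{\Varid{hLocal}} and \ensuremath{\Varid{hLam}} just call \ensuremath{\Varid{k}} on a suspended value and reduce into the induction-step-matching shape.

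The substantive work --- and the step I expect to be the main obstacle --- is the handful of clauses that \emph{recursively invoke} \ensuremath{\Varid{h}_{1}} on a sub-computation before feeding the result through \ensuremath{\Varid{k}}: the \ensuremath{\op{local}} clause of \ensuremath{\Varid{hLocal}} (which reruns \ensuremath{\Varid{hLocal}} on \ensuremath{\Varid{m}} under an updated environment), the \ensuremath{\op{catch}} clause of \ensuremath{\Varid{hCatch}} (which reruns \ensuremath{\Varid{hCatch}} on \ensuremath{\Varid{m}_{1}} and inspects the resulting \ensuremath{\Conid{Maybe}}), and the \ensuremath{\op{app}} clause of \ensuremath{\Varid{hLam}} (which reruns \ensuremath{\Varid{hLam}} on the closure body in an extended environment). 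For these I would observe that the recursive call is itself an instance of the configuration covered by the induction step premise --- it is \ensuremath{\Varid{h}_{1}} applied to a strictly smaller suspension, so the recursively handled sub-computation behaves the same under either order of \ensuremath{\Varid{h}_{1}} and \ensuremath{\Varid{h}_{2}} --- and then track how that sub-computation's result, together with the control that subsequently flows through \ensuremath{\Varid{k}} (which in the swapped configuration carries \ensuremath{\Varid{h}_{1}} reinstalled around the remaining context), interleaves with \ensuremath{\Varid{h}_{2}}; the handler return lemmas for \ensuremath{\Varid{h}} and \ensuremath{\Varid{h}_{2}} are what let us conclude that the two orders produce equivalent bags. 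This is precisely the bookkeeping carried out in full for \ensuremath{\Varid{hCatch}} in \cref{app:handler-op-catch}; \ensuremath{\Varid{hLocal}} is analogous but lighter (no \ensuremath{\Conid{Maybe}}-matching), and \ensuremath{\Varid{hLam}} adds the wrinkle that it is not polymorphic in its row of effects --- the \ensuremath{\Conid{Ab}} label is fixed --- so for \ensuremath{\Varid{hLam}} we establish the restricted form of separate concerns discussed after \cref{thm:soc}, handling the closure-environment threading and the \ensuremath{\op{abort}} fallback as above.
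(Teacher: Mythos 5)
Your overall strategy matches the paper's: invert the reduction step, and close each case either via the ``induction step premise'' of \cref{lem:handler-op} (by exhibiting a witness $e_0$ to which both orderings $\beta$-reduce) or, for continuation-discarding clauses such as \ensuremath{\op{throw}}, via the handler return lemmas for \ensuremath{\Varid{h}} and \ensuremath{\Varid{h}_2}. Your observation that \ensuremath{\Varid{hLam}} is not row-polymorphic and therefore only enjoys the restricted form of separation of concerns discussed after \cref{thm:soc} is correct and worth making explicit, since the lemma body quantifies over all three handlers in \cref{fig:ho-eff-handlers}.

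The one place your plan goes astray is the treatment of the clauses that recursively invoke the handler (\ensuremath{\op{catch}}, \ensuremath{\op{local}}, \ensuremath{\op{app}}). You propose to argue that the recursive call ``is itself an instance of the configuration covered by the induction step premise --- it is $h_1$ applied to a strictly smaller suspension, so the recursively handled sub-computation behaves the same under either order.'' No such appeal is available: the outer induction in \cref{thm:soc} is rule induction on the length of the reduction sequence, not structural induction on terms, and the induction step premise quantifies only over the $e_0$ reached from $e'$ by $\beta$-steps, not over nested sub-configurations. It is also not needed. The paper's proof for \ensuremath{\op{catch}} in \cref{app:handler-op-catch} never opens up the recursive call at all: the entire suspension containing \ensuremath{\kw{match}\;(\Varid{hCatch}^{()}\;\Varid{m}_{1})} is passed opaquely to the continuation, both orderings $\beta$-reduce to \ensuremath{\Varid{h}^{\Varid{p}}\;(\Varid{h}_1^{\Varid{p}_{1}}\;(\Varid{h}_2^{\Varid{p}_{2}}\;e_{0}))} and \ensuremath{\Varid{h}^{\Varid{p}}\;(\Varid{h}_2^{\Varid{p}_{2}}\;(\Varid{h}_1^{\Varid{p}_{1}}\;e_{0}))} for the \emph{same} $e_0$ (namely $E$ filled with the enacted suspension), and the induction step premise fires immediately; the recursive invocation is only examined by later iterations of the outer induction, once that suspension is actually enacted. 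Recasting your recursive cases this way removes the ill-founded ``strictly smaller'' step and makes them no harder than the non-recursive ones. (A minor further point: the sub-cases where the redex lies strictly inside $E$, or is a $\beta$/\textsc{Let}/\textsc{Enact} step, are already discharged by the proof of \cref{thm:soc} before \cref{lem:handler-op} is invoked, so you need not re-cover them here.)
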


\subsection{Handlers with Non-Separable Concerns: Non-Determinism, Revisited}
\label{sec:soc-non-separable}

As argued in \cref{sec:nondet-nonsep}, the non-determinism handler \ensuremath{\Varid{hND}} does not treat the \ensuremath{\op{flip}} operation in a way that it is a separate concern.
Here, we make that argument formal.
To do so, we need to provide a baggable functor instance for \ensuremath{\Conid{List}}s.
A sensible choice is to translate a list into a bag, as follows:
\begin{align*}
  \ensuremath{\Varid{toBag}_{\Conid{List}}\ [\mskip1.5mu \mskip1.5mu]} &= \ensuremath{\{\mskip1.5mu \mskip1.5mu\}} \\
  \ensuremath{\Varid{toBag}_{\Conid{List}}\ (\Varid{x}\mathbin{::}\Varid{xs})} &= \ensuremath{\{\mskip1.5mu \Varid{x}\mskip1.5mu\}\cup\Varid{toBag}_{\Conid{List}}\ \Varid{xs}}
\end{align*}
It is also possible to define $\mathit{toBag}_{\mathit{List}}$ in a way chooses the first or an arbitrary element of a list, and returns a singleton bag with that element.
It is, however, not possible to define $\mathit{toBag}_{\mathit{List}}$ in a way that returns the empty bag, since the handler return lemma would be violated for the \ensuremath{\Varid{hND}} handler.
From the $\mathit{toBag}_{\mathit{List}}$ definition above, the examples we considered in \cref{sec:nondet-nonsep} are counter-examples of the separate concerns property, since:
\begin{equation*}
  \ensuremath{\Varid{toBag}_{\Conid{List}\mathbin{∘}\Conid{Maybe}}\ [\mskip1.5mu \Conid{Nothing},\Conid{Just}\;\mathrm{1}\mskip1.5mu]\mathrel{=}\{\mskip1.5mu \mathrm{1}\mskip1.5mu\}\neq\{\mskip1.5mu \mskip1.5mu\}\mathrel{=}\Varid{toBag}_{\Conid{Maybe}\mathbin{∘}\Conid{List}}\ \Conid{Nothing}}
\end{equation*}


\subsection{Conjectured Criterion for Separation of Concerns}
\label{sec:soc-conjecture}

In this section we observed that many common handlers of higher-order effects have separate concerns, and also observed that others do not.
From the observed examples, we can see that the handlers that call their continuation argument \ensuremath{\Varid{k}} \emph{at most} once---such as \ensuremath{\Varid{hSt}}, \ensuremath{\Varid{hAbort}}, \ensuremath{\Varid{hRead}}, \ensuremath{\Varid{hLocal}}, \ensuremath{\Varid{hCatch}}, and \ensuremath{\Varid{hLam}}---have separate concerns.
We can also see that handlers that may call their continuation argument more than once---such as \ensuremath{\Varid{hND}}, or the ``scoped'' encodings \ensuremath{\Varid{hCatch}_{2}} and \ensuremath{\Varid{hLam}_{1}}---do not have separate concerns.

Based on these observations, we conjecture that handlers that make \emph{affine use} of their continuation argument have separate concerns.
By affine use we mean that the continuation is either called once (like in the \ensuremath{\op{ask}} clause in \cref{fig:ho-eff-handlers}) or not at all (like in the \ensuremath{\op{throw}} clause in \cref{fig:ho-eff-handlers}).

%
%
%

\section{Related Work}
\label{sec:related}

\paragraph{Models of effectful computations} \emph{Monads}, originally by
\citet{DBLP:journals/iandc/Moggi91} and later popularized in the functional
programming community by \citet{DBLP:conf/popl/Wadler92}, have long been the
dominant approach to modelling programs with side effects. They are, however,
famously hard to compose, thus \emph{monad
  transformers}~\cite{DBLP:conf/popl/LiangHJ95} were proposed as a means to
build monads from individual definitions of effects. \emph{Algebraic
  effects}~\cite{DBLP:journals/acs/PlotkinP03} provide a more structured
approach, by specifying effects in terms of \emph{operations} whose behaviour is
characterized by a set of equations that specify when it is well-behaved. Later,
\citet{DBLP:conf/esop/PlotkinP09} extended the approach with \emph{handlers},
which define interpretations of effectful operations by defining a homomorphism
from a \emph{free model} that trivially inhabits the equational theory (i.e.,
syntax) to a programmer-defined domain, making the approach attractive for
implementing effects as well. Perhaps the most well-known implementation of
algebraic effects and handlers is the \emph{free
  monad}~\cite{DBLP:conf/icfp/KammarLO13}, and this implementation is often
taken as the semantic foundation of languages with support for effect
handlers. \citet{DBLP:conf/haskell/SchrijversPWJ19} showed that algebraic
effects implemented using the free monad correspond to a sub-class of
monad-transformers.

It is future work to give a denotational model of \ensuremath{\Varid{λ}^{\Varid{hop}}} in terms of the free monad.
The obvious approach is to interpret suspension types as the free monad of a
signature functor corresponding to its annotation. However, since continuations
receive a computation argument as well, the usual inductive definition would
necessarily have a contravariant recursive occurence, meaning it lacks a
functorial semantics.  While it is possible to define such a type in, e.g.,
Haskell, it would be wrong to market it as a denotational model for \ensuremath{\Varid{λ}^{\Varid{hop}}}.


\paragraph{Models of effects with higher-order operations}

A crucial difference between our approach and plain algebraic effects is the
built-in support for \emph{higher-order operations}. Though it is possible to
encode such operations with algebraic effects, this usually introduces at least
some mixing of operations and handlers. For example,
\citet{DBLP:conf/esop/PlotkinP09} implement a \ensuremath{\op{catch}} operation in terms of
the exception handler. \emph{Scoped Effects}~\cite{DBLP:conf/haskell/WuSH14}
were proposed as an alternative flavor of algebraic effects that syntactically
supports higher-operations, recovering the separation between syntax
semantics. In subsequent work, \citet{DBLP:conf/lics/PirogSWJ18} adapted the
categorical formulation of algebraic effects to Scoped Effects, giving them the
same formal underpinning.

Higher-order operations in scoped effects are not unlike operations with
suspension parameters in \ensuremath{\Varid{λ}^{\Varid{hop}}}, although sub-computations must have the exact
same effects as their parent operation. In \ensuremath{\Varid{λ}^{\Varid{hop}}}, suspension parameters may be
annotated with any effect. In scoped effects one can recover multiple different
\emph{interactions} for effects by changing the order in which they are
handled. In \ensuremath{\Varid{λ}^{\Varid{hop}}}, on the other hand, strives for \emph{separation of concern}:
effects should interact as little as possible to make their semantics more
predictable. Consequently, interaction semantics have to be explicitly
defined. Another key difference between scoped effects and our approach is the
treatment of effects in sub-computations. In \ensuremath{\Varid{λ}^{\Varid{hop}}}, the syntax of
suspended computations is left intact until they are enacted. In scoped effects,
handlers are eagerly applied to sub-computations, replacing operations with
their semantics. As a result, handlers for higher-order operations necessarily
have to mediate the answer type modifications of previously-applied handlers
when enacting a sub-computation. This works well for polymorphic operations like
\ensuremath{\op{catch}}, where the semantic traces of earlier handlers can be hidden in the
existentially quantified type of its branches, but operations with
sub-computations that return a known type---such as \ensuremath{\op{abs}}---are awkward to
capture.

\emph{Latent effects} were developed by \citet{DBLP:conf/aplas/BergSPW21} as a
refinement of scoped effects that solves this issue by explicitly tracking such
semantic residues using a \emph{latent effect functor}, allowing for a more
fine-grained specification of the types of sub-computations. With their
approach, it is possible to capture function abstraction as a higher-order
operation. The latent effect functor serves an equivalent purpose as our latent
effect annotations, though latent effects remembers a semantic imprint of the
already-handled effects while we remember their syntax.
\paragraph{Implementations of Algebraic Effects and Handlers}

There is an abundance of languages with support for algebraic effects and
handlers. Perhaps closest to our work is Koka~\cite{DBLP:conf/popl/Leijen17},
whose core calculus features a similar Hindley/Milner-style row-polymorphic type
system as \ensuremath{\Varid{λ}^{\Varid{hop}}}. While we maintain a
CBPV-inspired~\cite{DBLP:books/sp/Levy2004} distinction between computations and
values, Koka is purely call-by-value, equating computations to
functions. Frank~\cite{DBLP:journals/jfp/ConventLMM20}, on the other hand, does
maintain this distinction. Its type system, however, is quite different from
ours, relying on an \emph{ambient ability} and implicit row polymorphism to type
effectful operations. Handlers are not first-class constructs in Frank. Instead,
functions may adjust the ambient ability of their arguments by specifying the
behaviour of operations. This provides some additional flexibility over
conventional handers, permitting for example \emph{multihandlers} that handle
multiple effects at once.  Both Koka and Frank lack native support for higher
order effects, thus encoding higher-order operations requires a similar kind of
inlining of handlers as discussed in \cref{sec:immediate-latent}.

Eff~\cite{DBLP:journals/jlp/BauerP15} is a functional language with support for
algebraic effects and handlers, with the possibility to dynamically generate new
operations and effects. In later work, \citet{DBLP:journals/corr/BauerP13}
developed a type-and-effect system for Eff, together with an inference
algorithm~\cite{DBLP:journals/corr/Pretnar13}. The language
Links~\cite{DBLP:conf/tldi/LindleyC12} employs row-typed algebraic effects in
the context of database programming. Their system is based on System-F extended
with effect rows and row polymorphism, and limits effectful computations to
functions similar to Koka. Importantly, their system tracks effects using
R\'emy-style rows~\cite{DBLP:conf/popl/Remy89}, maintaining so-called
\emph{presence types} that can additionally express an effect's absence from a
computation. \citet{DBLP:journals/pacmpl/BrachthauserSO20} presented
\emph{Effekt} as a more practical implementation of effects and handlers, using
\emph{capability based} type system where effect types express a set of
capabilities that a computation requires from its context.

\paragraph{Efficiency}

We have proposed \ensuremath{\Varid{λ}^{\Varid{hop}}} as an approach to handling higher-order effects.
The approach builds on what Lindley\footnote{\url{https://effect-handlers.org/}} and \citet{hillerstrom2021unix} calls \emph{the effect-handler oriented programming paradigm}.
An important aspect of this paradigm (and any programming paradigm) is being able to run programs efficiently.
The question of how to execute effect handlers efficiently is an active area of research.
We summarize some of the main recent results in this area below.
In future work, we intend to explore how to efficiently implement \ensuremath{\Varid{λ}^{\Varid{hop}}} in practice, and the literature we summarize below will be valuable to that end.

Effect handlers evidently~\cite{DBLP:journals/pacmpl/XieBHSL20} shows obtain efficient run times of algebraic effect handlers, by translating effect handlers to \emph{multi-prompt delimited continuations}~\cite{DBLP:conf/popl/Felleisen88,DBLP:conf/fpca/GunterRR95}.
The translation of \citet{DBLP:journals/pacmpl/XieBHSL20} restricts effect handlers to use \emph{scoped resumptions}; i.e., continuation values should not escape their handlers.
They argue that the restriction is mild, and that it mainly rules out ill-behaved programs.

\citet{DBLP:journals/pacmpl/KarachaliasKPS21} show how to compile Eff code into OCaml code in a way that aggressively reduce handler applications.
Their approach is different, but orthogonal, to the capability-passing compilation strategy due to \citet{DBLP:journals/pacmpl/SchusterBO20}.
The compilation approaches of \citet{DBLP:journals/pacmpl/KarachaliasKPS21} and \citet{DBLP:journals/pacmpl/SchusterBO20} both demonstrate speedups that bring the run time efficiency effect handler oriented code very close to the run time of plain, direct style code.
OCaml has also recently been retrofitted~\cite{DBLP:conf/pldi/Sivaramakrishnan21} with built-in support effect handlers, with impressive performance numbers.

\citet{DBLP:conf/mpc/WuS15} demonstrate how to significantly improve the run time efficiency of free monad based algebraic effect handlers in Haskell by \emph{fusing} handlers.
This efficient encoding of effect handler has been adopted in the \texttt{fused-effects} library with support for scope effects in Haskell.\footnote{\url{https://hackage.haskell.org/package/fused-effects}}

\section{Conclusion}

We have presented \ensuremath{\Varid{λ}^{\Varid{hop}}}, a calculus with effect handlers for higher-order effects.
We have shown that it can be used to implement all algebraic effects, common higher-order effects such as exception catching, and even effects with advanced control-flow such as lambda abstraction.
Additionally, \ensuremath{\Varid{λ}^{\Varid{hop}}} enables a definition of handlers for effects with separation of concerns, meaning the order of applying effect handlers does not change the essential results of the computation.
We have formalized this property and proved it for a selection of handlers for common algebraic and higher-order effects.

\bibliography{references}


\begin{thebibliography}{29}


\ifx \showCODEN    \undefined \def \showCODEN     #1{\unskip}     \fi
\ifx \showDOI      \undefined \def \showDOI       #1{#1}\fi
\ifx \showISBNx    \undefined \def \showISBNx     #1{\unskip}     \fi
\ifx \showISBNxiii \undefined \def \showISBNxiii  #1{\unskip}     \fi
\ifx \showISSN     \undefined \def \showISSN      #1{\unskip}     \fi
\ifx \showLCCN     \undefined \def \showLCCN      #1{\unskip}     \fi
\ifx \shownote     \undefined \def \shownote      #1{#1}          \fi
\ifx \showarticletitle \undefined \def \showarticletitle #1{#1}   \fi
\ifx \showURL      \undefined \def \showURL       {\relax}        \fi
\providecommand\bibfield[2]{#2}
\providecommand\bibinfo[2]{#2}
\providecommand\natexlab[1]{#1}
\providecommand\showeprint[2][]{arXiv:#2}

\bibitem[\protect\citeauthoryear{Bauer and Pretnar}{Bauer and Pretnar}{2014}]%
        {DBLP:journals/corr/BauerP13}
\bibfield{author}{\bibinfo{person}{Andrej Bauer} {and} \bibinfo{person}{Matija
  Pretnar}.} \bibinfo{year}{2014}\natexlab{}.
\newblock \showarticletitle{An Effect System for Algebraic Effects and
  Handlers}.
\newblock \bibinfo{journal}{\emph{Log. Methods Comput. Sci.}}
  \bibinfo{volume}{10}, \bibinfo{number}{4} (\bibinfo{year}{2014}).
\newblock
\urldef\tempurl%
\url{https://doi.org/10.2168/LMCS-10(4:9)2014}
\showDOI{\tempurl}


\bibitem[\protect\citeauthoryear{Bauer and Pretnar}{Bauer and Pretnar}{2015}]%
        {DBLP:journals/jlp/BauerP15}
\bibfield{author}{\bibinfo{person}{Andrej Bauer} {and} \bibinfo{person}{Matija
  Pretnar}.} \bibinfo{year}{2015}\natexlab{}.
\newblock \showarticletitle{Programming with algebraic effects and handlers}.
\newblock \bibinfo{journal}{\emph{J. Log. Algebraic Methods Program.}}
  \bibinfo{volume}{84}, \bibinfo{number}{1} (\bibinfo{year}{2015}),
  \bibinfo{pages}{108--123}.
\newblock
\urldef\tempurl%
\url{https://doi.org/10.1016/j.jlamp.2014.02.001}
\showDOI{\tempurl}


\bibitem[\protect\citeauthoryear{Brachth{\"{a}}user, Schuster, and
  Ostermann}{Brachth{\"{a}}user et~al\mbox{.}}{2020}]%
        {DBLP:journals/pacmpl/BrachthauserSO20}
\bibfield{author}{\bibinfo{person}{Jonathan~Immanuel Brachth{\"{a}}user},
  \bibinfo{person}{Philipp Schuster}, {and} \bibinfo{person}{Klaus Ostermann}.}
  \bibinfo{year}{2020}\natexlab{}.
\newblock \showarticletitle{Effects as capabilities: effect handlers and
  lightweight effect polymorphism}.
\newblock \bibinfo{journal}{\emph{Proc. {ACM} Program. Lang.}}
  \bibinfo{volume}{4}, \bibinfo{number}{{OOPSLA}} (\bibinfo{year}{2020}),
  \bibinfo{pages}{126:1--126:30}.
\newblock
\urldef\tempurl%
\url{https://doi.org/10.1145/3428194}
\showDOI{\tempurl}


\bibitem[\protect\citeauthoryear{Convent, Lindley, McBride, and
  McLaughlin}{Convent et~al\mbox{.}}{2020}]%
        {DBLP:journals/jfp/ConventLMM20}
\bibfield{author}{\bibinfo{person}{Lukas Convent}, \bibinfo{person}{Sam
  Lindley}, \bibinfo{person}{Conor McBride}, {and} \bibinfo{person}{Craig
  McLaughlin}.} \bibinfo{year}{2020}\natexlab{}.
\newblock \showarticletitle{Doo bee doo bee doo}.
\newblock \bibinfo{journal}{\emph{J. Funct. Program.}}  \bibinfo{volume}{30}
  (\bibinfo{year}{2020}), \bibinfo{pages}{e9}.
\newblock
\urldef\tempurl%
\url{https://doi.org/10.1017/S0956796820000039}
\showDOI{\tempurl}


\bibitem[\protect\citeauthoryear{Felleisen}{Felleisen}{1988}]%
        {DBLP:conf/popl/Felleisen88}
\bibfield{author}{\bibinfo{person}{Matthias Felleisen}.}
  \bibinfo{year}{1988}\natexlab{}.
\newblock \showarticletitle{The Theory and Practice of First-Class Prompts}. In
  \bibinfo{booktitle}{\emph{Conference Record of the Fifteenth Annual {ACM}
  Symposium on Principles of Programming Languages, San Diego, California, USA,
  January 10-13, 1988}}, \bibfield{editor}{\bibinfo{person}{Jeanne Ferrante}
  {and} \bibinfo{person}{Peter Mager}} (Eds.). \bibinfo{publisher}{{ACM}
  Press}, \bibinfo{pages}{180--190}.
\newblock
\urldef\tempurl%
\url{https://doi.org/10.1145/73560.73576}
\showDOI{\tempurl}


\bibitem[\protect\citeauthoryear{Gunter, R{\'{e}}my, and Riecke}{Gunter
  et~al\mbox{.}}{1995}]%
        {DBLP:conf/fpca/GunterRR95}
\bibfield{author}{\bibinfo{person}{Carl~A. Gunter}, \bibinfo{person}{Didier
  R{\'{e}}my}, {and} \bibinfo{person}{Jon~G. Riecke}.}
  \bibinfo{year}{1995}\natexlab{}.
\newblock \showarticletitle{A Generalization of Exceptions and Control in
  ML-like Languages}. In \bibinfo{booktitle}{\emph{Proceedings of the seventh
  international conference on Functional programming languages and computer
  architecture, {FPCA} 1995, La Jolla, California, USA, June 25-28, 1995}},
  \bibfield{editor}{\bibinfo{person}{John Williams}} (Ed.).
  \bibinfo{publisher}{{ACM}}, \bibinfo{pages}{12--23}.
\newblock
\showISBNx{0-89791-719-7}
\urldef\tempurl%
\url{https://doi.org/10.1145/224164.224173}
\showDOI{\tempurl}


\bibitem[\protect\citeauthoryear{Hillerstr\"{o}m}{Hillerstr\"{o}m}{2021}]%
        {hillerstrom2021unix}
\bibfield{author}{\bibinfo{person}{Daniel Hillerstr\"{o}m}.}
  \bibinfo{year}{2021}\natexlab{}.
\newblock \bibinfo{title}{Composing UNIX with Effect Handlers: A Case Study in
  Effect Handler Oriented Programming}.
\newblock
\newblock
\urldef\tempurl%
\url{https://dhil.net/research/papers/unix-ml2021.pdf}
\showURL{%
\tempurl}
\newblock
\shownote{Extended abstract, presented at ML Family Workshop 2021.}


\bibitem[\protect\citeauthoryear{Kammar, Lindley, and Oury}{Kammar
  et~al\mbox{.}}{2013}]%
        {DBLP:conf/icfp/KammarLO13}
\bibfield{author}{\bibinfo{person}{Ohad Kammar}, \bibinfo{person}{Sam Lindley},
  {and} \bibinfo{person}{Nicolas Oury}.} \bibinfo{year}{2013}\natexlab{}.
\newblock \showarticletitle{Handlers in action}. In
  \bibinfo{booktitle}{\emph{{ACM} {SIGPLAN} International Conference on
  Functional Programming, ICFP'13, Boston, MA, {USA} - September 25 - 27,
  2013}}, \bibfield{editor}{\bibinfo{person}{Greg Morrisett} {and}
  \bibinfo{person}{Tarmo Uustalu}} (Eds.). \bibinfo{publisher}{{ACM}},
  \bibinfo{pages}{145--158}.
\newblock
\urldef\tempurl%
\url{https://doi.org/10.1145/2500365.2500590}
\showDOI{\tempurl}


\bibitem[\protect\citeauthoryear{Karachalias, Koprivec, Pretnar, and
  Schrijvers}{Karachalias et~al\mbox{.}}{2021}]%
        {DBLP:journals/pacmpl/KarachaliasKPS21}
\bibfield{author}{\bibinfo{person}{Georgios Karachalias},
  \bibinfo{person}{Filip Koprivec}, \bibinfo{person}{Matija Pretnar}, {and}
  \bibinfo{person}{Tom Schrijvers}.} \bibinfo{year}{2021}\natexlab{}.
\newblock \showarticletitle{Efficient compilation of algebraic effect
  handlers}.
\newblock \bibinfo{journal}{\emph{Proc. {ACM} Program. Lang.}}
  \bibinfo{volume}{5}, \bibinfo{number}{{OOPSLA}} (\bibinfo{year}{2021}),
  \bibinfo{pages}{1--28}.
\newblock
\urldef\tempurl%
\url{https://doi.org/10.1145/3485479}
\showDOI{\tempurl}


\bibitem[\protect\citeauthoryear{Leijen}{Leijen}{2017}]%
        {DBLP:conf/popl/Leijen17}
\bibfield{author}{\bibinfo{person}{Daan Leijen}.}
  \bibinfo{year}{2017}\natexlab{}.
\newblock \showarticletitle{Type directed compilation of row-typed algebraic
  effects}. In \bibinfo{booktitle}{\emph{Proceedings of the 44th {ACM}
  {SIGPLAN} Symposium on Principles of Programming Languages, {POPL} 2017,
  Paris, France, January 18-20, 2017}},
  \bibfield{editor}{\bibinfo{person}{Giuseppe Castagna} {and}
  \bibinfo{person}{Andrew~D. Gordon}} (Eds.). \bibinfo{publisher}{{ACM}},
  \bibinfo{pages}{486--499}.
\newblock
\urldef\tempurl%
\url{https://doi.org/10.1145/3009837.3009872}
\showDOI{\tempurl}


\bibitem[\protect\citeauthoryear{Levy}{Levy}{2004}]%
        {DBLP:books/sp/Levy2004}
\bibfield{author}{\bibinfo{person}{Paul~Blain Levy}.}
  \bibinfo{year}{2004}\natexlab{}.
\newblock \bibinfo{booktitle}{\emph{Call-By-Push-Value: {A}
  Functional/Imperative Synthesis}}. \bibinfo{series}{Semantics Structures in
  Computation}, Vol.~\bibinfo{volume}{2}.
\newblock \bibinfo{publisher}{Springer}.
\newblock
\showISBNx{1-4020-1730-8}


\bibitem[\protect\citeauthoryear{Liang, Hudak, and Jones}{Liang
  et~al\mbox{.}}{1995}]%
        {DBLP:conf/popl/LiangHJ95}
\bibfield{author}{\bibinfo{person}{Sheng Liang}, \bibinfo{person}{Paul Hudak},
  {and} \bibinfo{person}{Mark~P. Jones}.} \bibinfo{year}{1995}\natexlab{}.
\newblock \showarticletitle{Monad Transformers and Modular Interpreters}. In
  \bibinfo{booktitle}{\emph{Conference Record of POPL'95: 22nd {ACM}
  {SIGPLAN-SIGACT} Symposium on Principles of Programming Languages, San
  Francisco, California, USA, January 23-25, 1995}},
  \bibfield{editor}{\bibinfo{person}{Ron~K. Cytron} {and}
  \bibinfo{person}{Peter Lee}} (Eds.). \bibinfo{publisher}{{ACM} Press},
  \bibinfo{pages}{333--343}.
\newblock
\urldef\tempurl%
\url{https://doi.org/10.1145/199448.199528}
\showDOI{\tempurl}


\bibitem[\protect\citeauthoryear{Lindley and Cheney}{Lindley and
  Cheney}{2012}]%
        {DBLP:conf/tldi/LindleyC12}
\bibfield{author}{\bibinfo{person}{Sam Lindley} {and} \bibinfo{person}{James
  Cheney}.} \bibinfo{year}{2012}\natexlab{}.
\newblock \showarticletitle{Row-based effect types for database integration}.
  In \bibinfo{booktitle}{\emph{Proceedings of {TLDI} 2012: The Seventh {ACM}
  {SIGPLAN} Workshop on Types in Languages Design and Implementation,
  Philadelphia, PA, USA, Saturday, January 28, 2012}},
  \bibfield{editor}{\bibinfo{person}{Benjamin~C. Pierce}} (Ed.).
  \bibinfo{publisher}{{ACM}}, \bibinfo{pages}{91--102}.
\newblock
\urldef\tempurl%
\url{https://doi.org/10.1145/2103786.2103798}
\showDOI{\tempurl}


\bibitem[\protect\citeauthoryear{Milner}{Milner}{1978}]%
        {DBLP:journals/jcss/Milner78}
\bibfield{author}{\bibinfo{person}{Robin Milner}.}
  \bibinfo{year}{1978}\natexlab{}.
\newblock \showarticletitle{A Theory of Type Polymorphism in Programming}.
\newblock \bibinfo{journal}{\emph{J. Comput. Syst. Sci.}} \bibinfo{volume}{17},
  \bibinfo{number}{3} (\bibinfo{year}{1978}), \bibinfo{pages}{348--375}.
\newblock
\urldef\tempurl%
\url{https://doi.org/10.1016/0022-0000(78)90014-4}
\showDOI{\tempurl}


\bibitem[\protect\citeauthoryear{Moggi}{Moggi}{1991}]%
        {DBLP:journals/iandc/Moggi91}
\bibfield{author}{\bibinfo{person}{Eugenio Moggi}.}
  \bibinfo{year}{1991}\natexlab{}.
\newblock \showarticletitle{Notions of Computation and Monads}.
\newblock \bibinfo{journal}{\emph{Inf. Comput.}} \bibinfo{volume}{93},
  \bibinfo{number}{1} (\bibinfo{year}{1991}), \bibinfo{pages}{55--92}.
\newblock
\urldef\tempurl%
\url{https://doi.org/10.1016/0890-5401(91)90052-4}
\showDOI{\tempurl}


\bibitem[\protect\citeauthoryear{Pir{\'{o}}g, Schrijvers, Wu, and
  Jaskelioff}{Pir{\'{o}}g et~al\mbox{.}}{2018}]%
        {DBLP:conf/lics/PirogSWJ18}
\bibfield{author}{\bibinfo{person}{Maciej Pir{\'{o}}g}, \bibinfo{person}{Tom
  Schrijvers}, \bibinfo{person}{Nicolas Wu}, {and} \bibinfo{person}{Mauro
  Jaskelioff}.} \bibinfo{year}{2018}\natexlab{}.
\newblock \showarticletitle{Syntax and Semantics for Operations with Scopes}.
  In \bibinfo{booktitle}{\emph{Proceedings of the 33rd Annual {ACM/IEEE}
  Symposium on Logic in Computer Science, {LICS} 2018, Oxford, UK, July 09-12,
  2018}}, \bibfield{editor}{\bibinfo{person}{Anuj Dawar} {and}
  \bibinfo{person}{Erich Gr{\"{a}}del}} (Eds.). \bibinfo{publisher}{{ACM}},
  \bibinfo{pages}{809--818}.
\newblock
\urldef\tempurl%
\url{https://doi.org/10.1145/3209108.3209166}
\showDOI{\tempurl}


\bibitem[\protect\citeauthoryear{Plotkin and Power}{Plotkin and Power}{2003}]%
        {DBLP:journals/acs/PlotkinP03}
\bibfield{author}{\bibinfo{person}{Gordon~D. Plotkin} {and}
  \bibinfo{person}{John Power}.} \bibinfo{year}{2003}\natexlab{}.
\newblock \showarticletitle{Algebraic Operations and Generic Effects}.
\newblock \bibinfo{journal}{\emph{Appl. Categorical Struct.}}
  \bibinfo{volume}{11}, \bibinfo{number}{1} (\bibinfo{year}{2003}),
  \bibinfo{pages}{69--94}.
\newblock
\urldef\tempurl%
\url{https://doi.org/10.1023/A:1023064908962}
\showDOI{\tempurl}


\bibitem[\protect\citeauthoryear{Plotkin and Pretnar}{Plotkin and
  Pretnar}{2009}]%
        {DBLP:conf/esop/PlotkinP09}
\bibfield{author}{\bibinfo{person}{Gordon~D. Plotkin} {and}
  \bibinfo{person}{Matija Pretnar}.} \bibinfo{year}{2009}\natexlab{}.
\newblock \showarticletitle{Handlers of Algebraic Effects}. In
  \bibinfo{booktitle}{\emph{Programming Languages and Systems, 18th European
  Symposium on Programming, {ESOP} 2009, Held as Part of the Joint European
  Conferences on Theory and Practice of Software, {ETAPS} 2009, York, UK, March
  22-29, 2009. Proceedings}} \emph{(\bibinfo{series}{Lecture Notes in Computer
  Science})}, \bibfield{editor}{\bibinfo{person}{Giuseppe Castagna}} (Ed.),
  Vol.~\bibinfo{volume}{5502}. \bibinfo{publisher}{Springer},
  \bibinfo{pages}{80--94}.
\newblock
\urldef\tempurl%
\url{https://doi.org/10.1007/978-3-642-00590-9\_7}
\showDOI{\tempurl}


\bibitem[\protect\citeauthoryear{Pretnar}{Pretnar}{2014}]%
        {DBLP:journals/corr/Pretnar13}
\bibfield{author}{\bibinfo{person}{Matija Pretnar}.}
  \bibinfo{year}{2014}\natexlab{}.
\newblock \showarticletitle{Inferring Algebraic Effects}.
\newblock \bibinfo{journal}{\emph{Log. Methods Comput. Sci.}}
  \bibinfo{volume}{10}, \bibinfo{number}{3} (\bibinfo{year}{2014}).
\newblock
\urldef\tempurl%
\url{https://doi.org/10.2168/LMCS-10(3:21)2014}
\showDOI{\tempurl}


\bibitem[\protect\citeauthoryear{R{\'{e}}my}{R{\'{e}}my}{1989}]%
        {DBLP:conf/popl/Remy89}
\bibfield{author}{\bibinfo{person}{Didier R{\'{e}}my}.}
  \bibinfo{year}{1989}\natexlab{}.
\newblock \showarticletitle{Typechecking Records and Variants in a Natural
  Extension of {ML}}. In \bibinfo{booktitle}{\emph{Conference Record of the
  Sixteenth Annual {ACM} Symposium on Principles of Programming Languages,
  Austin, Texas, USA, January 11-13, 1989}}. \bibinfo{publisher}{{ACM} Press},
  \bibinfo{pages}{77--88}.
\newblock
\urldef\tempurl%
\url{https://doi.org/10.1145/75277.75284}
\showDOI{\tempurl}


\bibitem[\protect\citeauthoryear{Schrijvers, Pir{\'{o}}g, Wu, and
  Jaskelioff}{Schrijvers et~al\mbox{.}}{2019}]%
        {DBLP:conf/haskell/SchrijversPWJ19}
\bibfield{author}{\bibinfo{person}{Tom Schrijvers}, \bibinfo{person}{Maciej
  Pir{\'{o}}g}, \bibinfo{person}{Nicolas Wu}, {and} \bibinfo{person}{Mauro
  Jaskelioff}.} \bibinfo{year}{2019}\natexlab{}.
\newblock \showarticletitle{Monad transformers and modular algebraic effects:
  what binds them together}. In \bibinfo{booktitle}{\emph{Proceedings of the
  12th {ACM} {SIGPLAN} International Symposium on Haskell, Haskell@ICFP 2019,
  Berlin, Germany, August 18-23, 2019}},
  \bibfield{editor}{\bibinfo{person}{Richard~A. Eisenberg}} (Ed.).
  \bibinfo{publisher}{{ACM}}, \bibinfo{pages}{98--113}.
\newblock
\urldef\tempurl%
\url{https://doi.org/10.1145/3331545.3342595}
\showDOI{\tempurl}


\bibitem[\protect\citeauthoryear{Schuster, Brachth{\"{a}}user, and
  Ostermann}{Schuster et~al\mbox{.}}{2020}]%
        {DBLP:journals/pacmpl/SchusterBO20}
\bibfield{author}{\bibinfo{person}{Philipp Schuster},
  \bibinfo{person}{Jonathan~Immanuel Brachth{\"{a}}user}, {and}
  \bibinfo{person}{Klaus Ostermann}.} \bibinfo{year}{2020}\natexlab{}.
\newblock \showarticletitle{Compiling effect handlers in capability-passing
  style}.
\newblock \bibinfo{journal}{\emph{Proc. {ACM} Program. Lang.}}
  \bibinfo{volume}{4}, \bibinfo{number}{{ICFP}} (\bibinfo{year}{2020}),
  \bibinfo{pages}{93:1--93:28}.
\newblock
\urldef\tempurl%
\url{https://doi.org/10.1145/3408975}
\showDOI{\tempurl}


\bibitem[\protect\citeauthoryear{Sivaramakrishnan, Dolan, White, Kelly, Jaffer,
  and Madhavapeddy}{Sivaramakrishnan et~al\mbox{.}}{2021}]%
        {DBLP:conf/pldi/Sivaramakrishnan21}
\bibfield{author}{\bibinfo{person}{K.~C. Sivaramakrishnan},
  \bibinfo{person}{Stephen Dolan}, \bibinfo{person}{Leo White},
  \bibinfo{person}{Tom Kelly}, \bibinfo{person}{Sadiq Jaffer}, {and}
  \bibinfo{person}{Anil Madhavapeddy}.} \bibinfo{year}{2021}\natexlab{}.
\newblock \showarticletitle{Retrofitting effect handlers onto OCaml}. In
  \bibinfo{booktitle}{\emph{{PLDI} '21: 42nd {ACM} {SIGPLAN} International
  Conference on Programming Language Design and Implementation, Virtual Event,
  Canada, June 20-25, 2021}}, \bibfield{editor}{\bibinfo{person}{Stephen~N.
  Freund} {and} \bibinfo{person}{Eran Yahav}} (Eds.).
  \bibinfo{publisher}{{ACM}}, \bibinfo{pages}{206--221}.
\newblock
\showISBNx{978-1-4503-8391-2}
\urldef\tempurl%
\url{https://doi.org/10.1145/3453483.3454039}
\showDOI{\tempurl}


\bibitem[\protect\citeauthoryear{van~den Berg, Schrijvers, Poulsen, and
  Wu}{van~den Berg et~al\mbox{.}}{2021}]%
        {DBLP:conf/aplas/BergSPW21}
\bibfield{author}{\bibinfo{person}{Birthe van~den Berg}, \bibinfo{person}{Tom
  Schrijvers}, \bibinfo{person}{Casper~Bach Poulsen}, {and}
  \bibinfo{person}{Nicolas Wu}.} \bibinfo{year}{2021}\natexlab{}.
\newblock \showarticletitle{Latent Effects for Reusable Language Components}.
  In \bibinfo{booktitle}{\emph{Programming Languages and Systems - 19th Asian
  Symposium, {APLAS} 2021, Chicago, IL, USA, October 17-18, 2021, Proceedings}}
  \emph{(\bibinfo{series}{Lecture Notes in Computer Science})},
  \bibfield{editor}{\bibinfo{person}{Hakjoo Oh}} (Ed.),
  Vol.~\bibinfo{volume}{13008}. \bibinfo{publisher}{Springer},
  \bibinfo{pages}{182--201}.
\newblock
\urldef\tempurl%
\url{https://doi.org/10.1007/978-3-030-89051-3\_11}
\showDOI{\tempurl}


\bibitem[\protect\citeauthoryear{Wadler}{Wadler}{1992}]%
        {DBLP:conf/popl/Wadler92}
\bibfield{author}{\bibinfo{person}{Philip Wadler}.}
  \bibinfo{year}{1992}\natexlab{}.
\newblock \showarticletitle{The Essence of Functional Programming}. In
  \bibinfo{booktitle}{\emph{Conference Record of the Nineteenth Annual {ACM}
  {SIGPLAN-SIGACT} Symposium on Principles of Programming Languages,
  Albuquerque, New Mexico, USA, January 19-22, 1992}},
  \bibfield{editor}{\bibinfo{person}{Ravi Sethi}} (Ed.).
  \bibinfo{publisher}{{ACM} Press}, \bibinfo{pages}{1--14}.
\newblock
\urldef\tempurl%
\url{https://doi.org/10.1145/143165.143169}
\showDOI{\tempurl}


\bibitem[\protect\citeauthoryear{Wright and Felleisen}{Wright and
  Felleisen}{1994}]%
        {DBLP:journals/iandc/WrightF94}
\bibfield{author}{\bibinfo{person}{Andrew~K. Wright} {and}
  \bibinfo{person}{Matthias Felleisen}.} \bibinfo{year}{1994}\natexlab{}.
\newblock \showarticletitle{A Syntactic Approach to Type Soundness}.
\newblock \bibinfo{journal}{\emph{Inf. Comput.}} \bibinfo{volume}{115},
  \bibinfo{number}{1} (\bibinfo{year}{1994}), \bibinfo{pages}{38--94}.
\newblock
\urldef\tempurl%
\url{https://doi.org/10.1006/inco.1994.1093}
\showDOI{\tempurl}


\bibitem[\protect\citeauthoryear{Wu and Schrijvers}{Wu and Schrijvers}{2015}]%
        {DBLP:conf/mpc/WuS15}
\bibfield{author}{\bibinfo{person}{Nicolas Wu} {and} \bibinfo{person}{Tom
  Schrijvers}.} \bibinfo{year}{2015}\natexlab{}.
\newblock \showarticletitle{Fusion for Free - Efficient Algebraic Effect
  Handlers}. In \bibinfo{booktitle}{\emph{Mathematics of Program Construction -
  12th International Conference, {MPC} 2015, K{\"{o}}nigswinter, Germany, June
  29 - July 1, 2015. Proceedings}} \emph{(\bibinfo{series}{Lecture Notes in
  Computer Science})}, \bibfield{editor}{\bibinfo{person}{Ralf Hinze} {and}
  \bibinfo{person}{Janis Voigtl{\"{a}}nder}} (Eds.),
  Vol.~\bibinfo{volume}{9129}. \bibinfo{publisher}{Springer},
  \bibinfo{pages}{302--322}.
\newblock
\urldef\tempurl%
\url{https://doi.org/10.1007/978-3-319-19797-5\_15}
\showDOI{\tempurl}


\bibitem[\protect\citeauthoryear{Wu, Schrijvers, and Hinze}{Wu
  et~al\mbox{.}}{2014}]%
        {DBLP:conf/haskell/WuSH14}
\bibfield{author}{\bibinfo{person}{Nicolas Wu}, \bibinfo{person}{Tom
  Schrijvers}, {and} \bibinfo{person}{Ralf Hinze}.}
  \bibinfo{year}{2014}\natexlab{}.
\newblock \showarticletitle{Effect handlers in scope}. In
  \bibinfo{booktitle}{\emph{Proceedings of the 2014 {ACM} {SIGPLAN} symposium
  on Haskell, Gothenburg, Sweden, September 4-5, 2014}},
  \bibfield{editor}{\bibinfo{person}{Wouter Swierstra}} (Ed.).
  \bibinfo{publisher}{{ACM}}, \bibinfo{pages}{1--12}.
\newblock
\urldef\tempurl%
\url{https://doi.org/10.1145/2633357.2633358}
\showDOI{\tempurl}


\bibitem[\protect\citeauthoryear{Xie, Brachth{\"{a}}user, Hillerstr{\"{o}}m,
  Schuster, and Leijen}{Xie et~al\mbox{.}}{2020}]%
        {DBLP:journals/pacmpl/XieBHSL20}
\bibfield{author}{\bibinfo{person}{Ningning Xie},
  \bibinfo{person}{Jonathan~Immanuel Brachth{\"{a}}user},
  \bibinfo{person}{Daniel Hillerstr{\"{o}}m}, \bibinfo{person}{Philipp
  Schuster}, {and} \bibinfo{person}{Daan Leijen}.}
  \bibinfo{year}{2020}\natexlab{}.
\newblock \showarticletitle{Effect handlers, evidently}.
\newblock \bibinfo{journal}{\emph{Proc. {ACM} Program. Lang.}}
  \bibinfo{volume}{4}, \bibinfo{number}{{ICFP}} (\bibinfo{year}{2020}),
  \bibinfo{pages}{99:1--99:29}.
\newblock
\urldef\tempurl%
\url{https://doi.org/10.1145/3408981}
\showDOI{\tempurl}


\end{thebibliography}

\appendix

\section{Proof Sketch of Subject Reduction and Progress Lemmas}
\label{sec:proof-appendix}

\subsection{Subject Reduction}

\begin{lemma}[Subject Reduction/Contraction]
  For all expressions \ensuremath{e_{1}} and \ensuremath{e_{2}}, if \ensuremath{\Conid{Γ}\mathbin{⊢}e_{1}\mathbin{:}\Varid{τ}\mathbin{∣}\Varid{ε}\mathbin{*}\epsilon_{l}} and \ensuremath{e_{1}\mathbin{-->}e_{2}},
  it follows that \ensuremath{\Conid{Γ}\mathbin{⊢}e_{2}\mathbin{:}\Varid{τ}\mathbin{∣}\Varid{ε}\mathbin{*}\epsilon_{l}}. 
\end{lemma}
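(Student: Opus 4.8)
The plan is to proceed by rule induction on the derivation of $\Gamma \vdash e_1 : \tau \mid \epsilon * \epsilon_l$, as announced, so the real work is a case analysis on which contraction rule produces $e_1 \longrightarrow e_2$. The rules \textsc{T-Var}, \textsc{T-Abs}, and \textsc{T-Suspend} cannot conclude the derivation of a redex, since variables, $\lambda$-abstractions, and suspensions are not contractible. The non-syntax-directed rules \textsc{T-Gen}, \textsc{T-Inst}, and \textsc{T-Resurface} leave the expression untouched; their cases are discharged uniformly by applying the induction hypothesis to the single premise and re-applying the same rule, so I can concentrate on the computational contractions.

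The easy cases only use the three auxiliary lemmas. For the $\delta$ rule I assume $\delta$ is type-respecting, i.e.\ $\delta(c,v)$ has the type of $c$ applied to $v$ whenever it is defined. For the $\beta$ rule, inverting \textsc{T-App} and \textsc{T-Abs} gives $\Gamma, x{:}\tau_1 \vdash e : \tau_2 \mid \langle\rangle * \langle\rangle$ and $\Gamma \vdash v : \tau_1 \mid \epsilon * \epsilon_l$; I then re-annotate both judgements to the ambient effects using \cref{lem:reannotation} and finish with \cref{lem:substitution}. The \textsc{Let} case is identical with \textsc{T-Let} replacing \textsc{T-App}/\textsc{T-Abs}; the \textsc{Enact} case follows from inverting \textsc{T-Enact} and then \textsc{T-Suspend}, which yields exactly the wanted judgement; and the \textsc{Return} case follows from inverting \textsc{T-Handle} (which types the return-clause body $e_r$ at the result type under $\Gamma, x{:}\tau_1, p{:}\tau_p$), re-annotating the two value arguments, and applying \cref{lem:substitution} twice.

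The case that carries the weight is \textsc{Handle}:
\[
  \kw{handle}^{\ell}\;\{\overline{C}\}\;E^{\ell}[(\op{op}^{\ell}\;f\;\overline{v})\kw{!}]\;v_p \ \longrightarrow\ e\,[\,\overline{x/v},\; p/v_p,\; k/(\lambda\,y\,q.\,\kw{handle}^{\ell}\;\{\overline{C}\}\;q\;E^{\ell}[y\kw{!}])\,].
\]
Here I would invert \textsc{T-Handle} to extract: a typing of the clause body $e$ at the result type $\ann{\epsilon * \langle\ell,\epsilon_l\rangle}{\susp{\tau_2}}$ under the context $\Gamma'$ that binds the operation arguments $\overline{x_i}$, the parameter $p{:}\tau_p$, and the continuation $k$ of type $\tau_p \to \ann{\langle\ell,\epsilon\oplus\epsilon_l\rangle * \langle\rangle}{\susp{\tau_1}} \to \ann{\epsilon * \langle\ell,\epsilon_l\rangle}{\susp{\tau_2}}$; a typing of the decomposed argument $E^{\ell}[(\op{op}^{\ell}\;f\;\overline{v})\kw{!}]$ at $\tau_1$ with effects $\langle\ell,\epsilon\rangle * \epsilon_l$; and the polymorphic type of $\op{op}^{\ell}\;f$. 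Then, using the replacement lemma (\cref{lem:replacement}) on the argument derivation together with the operation-typing discipline of \cref{sec:type-system}---which forces the latent row of $\op{op}^{\ell}\;f$ to be empty, and hence forces $E^{\ell}$ to install handlers for every effect of $\epsilon_l$---I would retype the hole of $E^{\ell}$ with the computation-argument type $\ann{\langle\ell,\epsilon\oplus\epsilon_l\rangle * \langle\rangle}{\susp{\tau_1}}$, appealing to \textsc{T-Resurface} for the up-cast that moves the freed latent effects back to immediate. This makes $E^{\ell}[y\kw{!}]$ well-typed for a fresh $y$ of that type, and re-wrapping with \textsc{T-Handle} gives the reconstructed continuation exactly the declared type of $k$; the operation arguments $\overline{v}$ are typed by inverting the application nodes inside $E^{\ell}$. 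Iterating \cref{lem:substitution} over $\overline{x}$, $p$, and $k$ then closes the case.

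I expect this last case to be the main obstacle: the bookkeeping around the context $E^{\ell}$, the shift of the label $\ell$ between the immediate and latent annotations, and the resurfacing needed to retype the reconstructed continuation are precisely where \cref{lem:replacement} and the operation-typing restriction have to be combined carefully, whereas every other case is routine inversion plus \cref{lem:substitution} and \cref{lem:reannotation}.
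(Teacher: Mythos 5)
Your proposal follows essentially the same route as the paper's proof: rule induction on the typing derivation, discharging the non-contractible and non-syntax-directed cases trivially, handling the $\beta$, \textsc{Let}, \textsc{Enact}, and \textsc{Return} contractions by inversion plus the substitution and re-annotation lemmas, and in the \textsc{Handle} case reconstructing the continuation's typing via the replacement lemma together with \textsc{T-Abs}/\textsc{T-Handle} before iterating substitution over $\overline{x}$, $p$, and $k$. The only cosmetic differences are that you spell out the $\delta$ case and the role of \textsc{T-Resurface} in retyping $E^{\ell}[y!]$, which the paper leaves implicit.
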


\begin{proof}
  The proof proceeds by rule induction over the derivation \ensuremath{\Conid{Γ}\mathbin{⊢}e_{1}\mathbin{:}\Varid{τ}\mathbin{∣}\Varid{ε}\mathbin{*}\epsilon_{l}}.
  \begin{case*}[T-Var,T-Abs,T-Suspend]
    Impossible, \ensuremath{e_{1}\mathbin{-->}e_{2}} is empty. 
  \end{case*}
  \begin{case*}[T-App,T-Let] 
    $\ensuremath{e_{1}\mathbin{-->}e_{2}\mathrel{=}(\Varid{λ}\;\Varid{x}}.\ensuremath{\Varid{e})\;\Varid{v}}$. From the sub-derivations of \textsc{T-App},
    from which ends in \textsc{T-Abs}, we derive $\ensuremath{\Conid{Γ},\Varid{x}\mathbin{:}\Varid{τ}^{\prime}\mathbin{⊢}\Varid{e}\mathbin{:}\Varid{τ}\mathbin{∣}\Varid{ε}\mathbin{*}\epsilon_{l}}$
    and $\ensuremath{\Conid{Γ}\mathbin{⊢}\Varid{v}\mathbin{:}\Varid{τ}^{\prime}\mathbin{∣}\Varid{ε}\mathbin{*}\epsilon_{l}}$. Consequently, from \cref{lem:substitution} it
    follows that \ensuremath{\Conid{Γ}\mathbin{⊢}\Varid{e}}$[x/\ensuremath{\Varid{v}}]\ $\ensuremath{\mathbin{:}\Varid{τ}\mathbin{∣}\Varid{ε}\mathbin{*}\epsilon_{l}}. The case for \textsc{Let}
    uses similar reasoning. 
  \end{case*}
  \begin{case*}[T-Inst,T-Gen,T-Resurface] 
    Follow immediately from the induction hypothesis. 
  \end{case*}
  \begin{case*}[T-Enact]
    $\ensuremath{e_{1}\mathbin{-->}e_{2}\mathrel{=}}$\ \susp{e}$\ensuremath{\kw{!}\mathbin{-->}\Varid{e}}$, hence the sub-derivation of
    \textsc{T-Enact} ends with \textsc{T-Suspend}, giving us
    $\ensuremath{\Conid{Γ}\mathbin{⊢}\Varid{e}\mathbin{:}\Varid{τ}\mathbin{∣}\Varid{ε}\mathbin{*}\epsilon_{l}}$. 
  \end{case*}
  \begin{case*}[T-Handle]
    Case analysis on \ensuremath{e_{1}\mathbin{-->}e_{2}} gives us the following two sub-cases:
    \begin{enumerate}

    \item \ensuremath{e_{1}\mathbin{-->}e_{2}\mathrel{=}\kw{handle}^{\ell}\;\{\mskip1.5mu \overline{\Conid{C}}\mskip1.5mu\}\;\Varid{v}\;\Varid{v}_{\Varid{p}}\mathbin{-->}\Varid{e}}$[x/\ensuremath{\Varid{v}},p/\ensuremath{\Varid{v}_{\Varid{p}}}]$
      (\textsc{Return}). From the sub-derivations of \textsc{T-Handle}, we learn
      that $\ensuremath{\Conid{Γ},\Varid{x}\mathbin{:}\Varid{τ}^{\prime},\Varid{p}\mathbin{:}\tau_{p}\mathbin{⊢}\Varid{e}\mathbin{:}}\ann{\ensuremath{\Varid{ε}\mathbin{*}\mathbin{⟨}\Varid{ℓ},\epsilon_{l}\mathbin{⟩}}}{\susp{\ensuremath{\tau_{2}}}}\ensuremath{\mathbin{∣}\Varid{ε}^{\prime}\mathbin{*}\epsilon_{l}^{\prime}}$
      \textbf{(1)}, $\ensuremath{\Conid{Γ}\mathbin{⊢}\Varid{v}\mathbin{:}\tau_{1}\mathbin{∣}\mathbin{⟨}\Varid{ℓ},\Varid{ε}\mathbin{⟩}\mathbin{*}\epsilon_{l}}$ \textbf{(2)}, and
      $\ensuremath{\Conid{Γ}\mathbin{⊢}\Varid{v}_{\Varid{p}}\mathbin{:}\tau_{p}\mathbin{∣}\Varid{ε}^{\prime}\mathbin{*}\epsilon_{l}^{\prime}}$ \textbf{(3)} hold. From \cref{lem:substitution},
      \textbf{(1)}, and \textbf{(3)} we then have
      \ensuremath{\Conid{Γ},\Varid{x}\mathbin{:}\Varid{τ}^{\prime}\mathbin{⊢}\Varid{e}}$[p/\ensuremath{\Varid{v}_{\Varid{p}}}]$\ \ensuremath{\mathbin{:}}\ann{\ensuremath{\Varid{ε}\mathbin{*}\mathbin{⟨}\Varid{ℓ},\epsilon_{l}\mathbin{⟩}}}{\susp{\ensuremath{\tau_{2}}}}\ensuremath{\mathbin{∣}\Varid{ε}^{\prime}\mathbin{*}\epsilon_{l}^{\prime}} \textbf{(4)}. Then, from \cref{lem:substitution},
      \cref{lem:reannotation}, \textbf{(2)}, and \textbf{(4)} we derive
      $\ensuremath{\Conid{Γ}\mathbin{⊢}\Varid{e}}[x/\ensuremath{\Varid{v}},p/\ensuremath{\Varid{v}_{\Varid{p}}}]\ \ensuremath{\mathbin{:}}\ann{\ensuremath{\Varid{ε}\mathbin{*}\mathbin{⟨}\Varid{ℓ},\epsilon_{l}\mathbin{⟩}}}{\susp{\ensuremath{\tau_{2}}}}\ensuremath{\mathbin{∣}\Varid{ε}^{\prime}\mathbin{*}\epsilon_{l}^{\prime}}$.

    \item
      $\ensuremath{e_{1}\mathbin{-->}e_{2}\mathrel{=}\kw{handle}^{\ell}\;\{\mskip1.5mu \overline{\Conid{C}}\mskip1.5mu\}\;\Conid{E}^{\Varid{ℓ}}}[\ensuremath{(\op{op}^{\ell}\;\Varid{f}\;\overline{\Varid{v}})\kw{!}}]\ \ensuremath{\Varid{v}_{\Varid{p}}\mathbin{-->}\Varid{e}}[\bsubst{x}{\ensuremath{\Varid{v}}},p/\ensuremath{\Varid{v}_{\Varid{p}}},k/\ensuremath{(\Varid{λ}\;\Varid{y}\;\Varid{q}}.\ensuremath{\kw{handle}^{\ell}\;\{\mskip1.5mu \overline{\Conid{C}}\mskip1.5mu\}}\ \\ \inE{\ensuremath{\Varid{y}\kw{!}}}\ \ensuremath{\Varid{q})}]$ (\textsc{Handle}). From the sub-derivations of
      \textsc{T-Handle}, we learn that
      $\ensuremath{\Conid{Γ},\overline{x_{i}},\Varid{p},\Varid{k}\mathbin{⊢}\Varid{e}\mathbin{:}}\ann{\ensuremath{\Varid{ε}\mathbin{*}\mathbin{⟨}\Varid{ℓ},\epsilon_{l}\mathbin{⟩}}}{\susp{\ensuremath{\tau_{2}}}}\ensuremath{\mathbin{∣}\Varid{ε}^{\prime}\mathbin{*}\epsilon_{l}^{\prime}}$ \textbf{(1)}, \ensuremath{\Conid{Γ}\mathbin{⊢}\Conid{E}^{\Varid{ℓ}}}$[$\ensuremath{(\op{op}^{\ell}\;\Varid{f}\;\overline{\Varid{v}})\kw{!}}$]\ $\ensuremath{\tau_{1}\mathbin{∣}\mathbin{⟨}\Varid{ℓ},\Varid{ε}\mathbin{⟩}\mathbin{*}\epsilon_{l}^{\prime}} \textbf{(2)},
      $\ensuremath{\Conid{Γ}\mathbin{⊢}\Varid{v}_{\Varid{p}}\mathbin{:}\tau_{p}\mathbin{∣}\Varid{ε}^{\prime}\mathbin{*}\epsilon_{l}^{\prime}}$ \textbf{(3)}, and
      $\ensuremath{\Conid{Γ}\mathbin{⊢}\op{op}^{ℓ}_{i}\mathbin{:}\mathbin{∀}\Varid{r}\;r_{l}\;\!.\!\;\overline{\tau_{i}}}[\ensuremath{\Varid{r}}/\ensuremath{\Varid{ε}},\ensuremath{r_{l}},\ensuremath{\epsilon_{l}}]\ensuremath{\to }\ann{\ensuremath{\Varid{ε}\mathbin{⊕}\epsilon_{l}\mathbin{⊕}\mathbin{⟨}\Varid{ℓ}\mathbin{⟩}\mathbin{*}\mathbin{⟨⟩}}}{\susp{\ensuremath{\tau_{i}}}}\
      \ensuremath{\mathbin{∣}\Varid{ε}^{\prime}\mathbin{*}\epsilon_{l}^{\prime}}$ \textbf{(5)}.

      We discharge \ensuremath{\overline{x_{i}}}, \ensuremath{\Varid{p}}, and \ensuremath{\Varid{k}} from the context in \textbf{(1)} using
      repeated applications of \cref{lem:substitution}, concluding
      $\ensuremath{\Conid{Γ}\mathbin{⊢}e_{2}\mathbin{:}\Varid{τ}\mathbin{∣}\Varid{ε}^{\prime}\mathbin{*}\epsilon_{l}^{\prime}}$.  We obtain the required derivations for the
      substituted values as follows:
      \begin{itemize}
      \item For \ensuremath{\Varid{k}}, we use the \textsc{T-Abs} and \textsc{T-Handle} rules,
        together with \cref{lem:replacement} and \textbf{(2)} to construct a
        derivation that the expression $\ensuremath{(\Varid{λ}\;\Varid{y}\;\Varid{q}}.\ensuremath{\kw{handle}^{\ell}\;\{\mskip1.5mu \overline{\Conid{C}}\mskip1.5mu\}}\
        \inE{\ensuremath{\Varid{y}\kw{!}}}\ \ensuremath{\Varid{q})}$ has the required type. 
      \item For \ensuremath{\Varid{p}}, we use \textbf{(3)}.
      \item For \ensuremath{\overline{\Varid{x}}}, we use the sub-derivations we get from \textbf{(2)} by decomposing the
        operation call and its context, removing the trailing occurrences of \textsc{T-App}. 
      \end{itemize}
    \end{enumerate}
  \end{case*} 
\end{proof}

\subsection{Progress}

\begin{lemma}[Progress]
  For every expression \ensuremath{\Varid{e}}, if \ensuremath{\Conid{Γ}_{\mathrm{0}}\mathbin{⊢}\Varid{e}\mathbin{:}\Varid{τ}\mathbin{∣}\Varid{ε}\mathbin{*}\epsilon_{l}}, then either there exists
  some \ensuremath{\Varid{e}^{\prime}} such that \ensuremath{\Varid{e} \longmapsto \Varid{e}^{\prime}}, or \ensuremath{\Varid{e}} is a value, where \ensuremath{\Conid{Γ}_{\mathrm{0}}} is an initial
  context that contains bindings for the operations of all known effects in \ensuremath{\Conid{Σ}}.
\end{lemma}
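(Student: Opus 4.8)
The plan is to prove \cref{lem:progress} by rule induction on the typing derivation $\Gamma_0 \vdash e : \tau \mid \epsilon * \epsilon_l$, following the syntactic recipe of \citet{DBLP:journals/iandc/WrightF94}. Before starting the induction I would record two auxiliary facts. First, a \emph{canonical forms} lemma: a value of function type $\tau_1 \to \tau_2$ is a $\lambda$-abstraction, a constant, or an operation variable $\op{op}^{\ell}\ f$; and a value of suspension type $\ann{\epsilon * \epsilon_l}{\susp{\tau}}$ is a suspension literal $\susp{e}$. Second, $\delta$-typability: whenever $c\ v$ is well-typed, $\delta(c,v)$ is defined, so that the $\delta$ contraction rule can always fire on a fully reduced constant application.

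Most of the induction is routine bookkeeping over evaluation contexts. The value-introduction rules \textsc{T-Var}, \textsc{T-Abs}, and \textsc{T-Suspend} conclude with a value, so the second disjunct holds immediately. The rules \textsc{T-Gen}, \textsc{T-Inst}, and \textsc{T-Resurface} leave the subject unchanged, so the result is the induction hypothesis verbatim. For the congruence-shaped rules --- \textsc{T-App}, \textsc{T-Let}, \textsc{T-Enact}, and the parameter and subject positions of \textsc{T-Handle} --- I would apply the induction hypothesis to the leftmost immediate subexpression: if it reduces, the step lifts through the matching evaluation-context former ($E\ e$, $v\ E$, $\mathbf{let}\ x = E\ \mathbf{in}\ e$, $E\kw{!}$, $\kw{handle}^{\ell}\ \{\overline{C}\}\ E\ e$, or $\kw{handle}^{\ell}\ \{\overline{C}\}\ v\ E$); if the relevant subexpressions are all values, canonical forms pin down the redex and I fire the appropriate contraction --- $\beta$ or $\delta$ for application, \textsc{Let} for a let-binding, \textsc{Enact} for enactment of a suspension, and \textsc{Return} for a handle whose parameter and subject are both values. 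The evaluation-context grammar is designed precisely so that this decomposition is unambiguous.

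The case I expect to be the main obstacle --- and the reason I would \emph{strengthen} the statement before doing the induction --- concerns operation calls sitting in evaluation position. The difficulty already shows up in \textsc{T-Handle}: if the subject reduces to $E^{\ell}[(\op{op}^{\ell}\ f\ \overline{v})\kw{!}]$, then by itself this subject is neither a value nor reducible, yet the whole handle expression \emph{does} step, via the \textsc{Handle} rule. A bare induction hypothesis on the subject therefore does not suffice. I would instead prove the generalization that every well-typed $e$ is a value, reduces, or decomposes as $E^{\ell}[(\op{op}^{\ell}\ f\ \overline{v})\kw{!}]$ (or, more generally, has an operation variable applied to values in its hole) with $E^{\ell}$ installing no handler for $\ell$ --- in which case the operation typing discipline of \cref{sec:type-system} (fully saturated arguments, empty latent row) forces the label $\ell$ to appear in the immediate effect row $\epsilon$. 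Under this invariant \textsc{T-Handle} becomes a clean case split: a blocked call for the handled label $\ell$ triggers \textsc{Handle}; a blocked call for another label $\ell' \ne \ell$ is re-exposed through the context former $\kw{handle}^{\ell}\ \{\overline{C}\}\ v\ E^{\ell'}$; and \textsc{T-App}, \textsc{T-Let}, and \textsc{T-Enact} propagate the blocked-call alternative through their contexts in the same way. The lemma as stated then follows as the special case where $\epsilon$ is closed, so the blocked alternative is vacuous. Most of the care goes into checking that this strengthened invariant is preserved by every typing rule; the rest is the routine context manipulation already sketched above.
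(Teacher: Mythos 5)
Your proposal follows the same overall strategy as the paper's proof---rule induction on the typing derivation, with the value rules, the subject-preserving rules, and the congruence rules handled exactly as the paper does---but you make explicit a strengthening that the paper's own sketch only uses implicitly. In its \textsc{T-Handle} case the paper appeals to a three-way disjunct for the handled subject (``we can either step $e$, $e$ enacts an operation of the effect $\ell$, or $e$ is a value''), even though the lemma as stated, and hence the literal induction hypothesis, only supplies two alternatives; the same issue surfaces in \textsc{T-App}, where a value of function type may be an operation variable rather than a $\lambda$-abstraction or constant, so $\beta$/$\delta$ alone do not cover all canonical forms. Your generalized invariant---every well-typed term is a value, reduces, or decomposes as $E^{\ell}[(\op{op}^{\ell}\,f\;\overline{v})\kw{!}]$ with no handler for $\ell$ in $E^{\ell}$ and with $\ell$ forced into the immediate effect row by the operation typing discipline---is precisely what is needed to make the \textsc{Handle} step fire in the $\ell$ case, to re-expose a foreign blocked call through the context former $\kw{handle}^{\ell}\,\{\overline{C}\}\;v\;E^{\ell'}$, and to recover the stated lemma when the effect row is closed. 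So your route is the same induction but with the key lemma stated honestly; what it buys is a proof that actually goes through as written, at the cost of having to check that the strengthened disjunction is preserved by every rule (including \textsc{T-Resurface}, where the blocked label migrates between the latent and immediate rows). The only caveat is that the lemma as printed does not restrict $\epsilon$ to be closed, so strictly speaking your strengthened statement is the one that is true in general, and the two-disjunct version should be read as its corollary for programs whose residual immediate effects are empty.
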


\begin{proof}
  We proceed by rule induction over the derivation for  \ensuremath{\Conid{Γ}_{\mathrm{0}}\mathbin{⊢}\Varid{e}\mathbin{:}\Varid{τ}\mathbin{∣}\Varid{ε}\mathbin{*}\epsilon_{l}}. 

  \begin{case*}[T-Var, T-Abs]
    \ensuremath{\Varid{e}} is a value. 
  \end{case*}

  \begin{case*}[T-App]
    \ensuremath{\Varid{e}\mathrel{=}e_{1}\;e_{2}}, thus from the induction hypothesis for \ensuremath{e_{1}} we know that we can
    either step \ensuremath{e_{1}}, or \ensuremath{e_{1}} is a value, in which case it must be a
    \ensuremath{\Varid{λ}}-abstraction and we can step using the \ensuremath{\Varid{β}}-rule.
  \end{case*}

  \begin{case*}[T-Let]
    \ensuremath{\Varid{e}\mathrel{=}\mathbf{let}\;\Varid{x}\mathrel{=}e_{1}\;\mathbf{in}\;e_{2}}. The induction hypothesis for e₁ tells us that \ensuremath{e_{1}}
    is either a value, or it can be reduced. In the former case, we step using
    the \textsc{Let} rule. Otherwise, we reduce \ensuremath{e_{1}}.
  \end{case*}

  \begin{case*}[T-Inst, T-Gen, T-Resurface]
    \ensuremath{\Varid{e}\mathrel{=}\Varid{e}}, follows from the induction hypothesis. 
  \end{case*}
  
  \begin{case*}[T-Suspend] 
    \ensuremath{\Varid{e}\mathrel{=}}\ \susp{e}, thus \ensuremath{\Varid{e}} is a value. 
  \end{case*}

  \begin{case*}[T-Enact] 
    \ensuremath{\Varid{e}\mathrel{=}\Varid{e}\kw{!}}, thus from the induction hypothesis for \ensuremath{\Varid{e}} we know that either we
    can step \ensuremath{\Varid{e}}, or \ensuremath{\Varid{e}} is value. In the latter case \ensuremath{\Varid{e}} must of the form
    $\susp{e}$ and we step with the \textsc{Enact} rule.
  \end{case*}

  \begin{case*}[T-Handle] 
    \ensuremath{\Varid{e}\mathrel{=}\kw{handle}^{\ell}\;\{\mskip1.5mu \overline{\Conid{C}}\mskip1.5mu\}\;e_{p}\;\Varid{e}}. The induction hypothesis for \ensuremath{e_{p}} tells us we
    can either step \ensuremath{e_{p}}, or \ensuremath{e_{p}} is a value. In the latter case, by the
    induction hypothesis for \ensuremath{\Varid{e}} we can either step \ensuremath{\Varid{e}}, \ensuremath{\Varid{e}} enacts an operation
    of the effect \ensuremath{\Varid{ℓ}}, or \ensuremath{\Varid{e}} is a value, in which case we step either
    with \textsc{Handle} or \textsc{Return}.
  \end{case*}
\end{proof}

\section{Proving Separation of Concerns}

This appendix contains the proof that \emph{handling order is irrelevant for nested handlers with separate concerns}; i.e.,

\begin{equation*}
  \inferrule*
  {  \ensuremath{\Varid{h}\mathbin{:}\mathbin{∀}\Varid{α}}.\
       \ensuremath{\ann{\Varid{ε}\mathbin{*}\mathbin{⟨}\ell_{1},\mathbin{⟨}\ell_{2},\Varid{ε'}\mathbin{⟩}\mathbin{⟩}}{\susp{\Varid{α}}}\mathbin{→}\Conid{P}\mathbin{→}\ann{\mathbin{⟨⟩}\mathbin{*}(\Varid{ε}\mathbin{⊕}\mathbin{⟨}\ell_{1},\mathbin{⟨}\ell_{2},\Varid{ε'}\mathbin{⟩⟩})}{\susp{\Conid{G}\;\Varid{α}}}}
  \\ \ensuremath{\Varid{h}_1\mathbin{:}\mathbin{∀}\Varid{α}\;\Varid{r}\;\Varid{r'}}.\
       \ensuremath{\ann{\mathbin{⟨}\ell_{1},\Varid{r}\mathbin{⟩}\mathbin{*}\Varid{r'}}{\susp{\Varid{α}}}\mathbin{→}\Conid{P}_1\mathbin{→}\ann{\Varid{r}\mathbin{*}\mathbin{⟨}\ell_{1},\Varid{r'}\mathbin{⟩}}{\susp{\Conid{F}_1\;\Varid{α}}}}
  \\ \ensuremath{\Varid{h}_2\mathbin{:}\mathbin{∀}\Varid{α}\;\Varid{r}\;\Varid{r'}}.\
       \ensuremath{\ann{\mathbin{⟨}\ell_{2},\Varid{r}\mathbin{⟩}\mathbin{*}\Varid{r'}}{\susp{\Varid{α}}}\mathbin{→}\Conid{P}_2\mathbin{→}\ann{\Varid{r}\mathbin{*}\mathbin{⟨}\ell_{1},\Varid{r'}\mathbin{⟩}}{\susp{\Conid{F}_2\;\Varid{α}}}}
  \\ \ensuremath{\Varid{e}\mathbin{:}\ann{\mathbin{⟨}\ell_{1},\mathbin{⟨}\ell_{2},\Varid{ε}\mathbin{⟩}\mathbin{⟩}\mathbin{*}\Varid{ε'}}{\susp{\Conid{X}}}}
  \\ \ensuremath{\Conid{H0}\mathbin{:}\Varid{h}^{\Varid{p}}\;(\Varid{h}_1^{\Varid{p}_{1}}\;(\Varid{h}_2^{\Varid{p}_{2}}\;\Varid{e})) \longtwoheadmapsto \Varid{v}_{\mathrm{12}}}
  \and \ensuremath{\Conid{H1}\mathbin{:}\Varid{h}^{\Varid{p}}\;(\Varid{h}_2^{\Varid{p}_{2}}\;(\Varid{h}_1^{\Varid{p}_{1}}\;\Varid{ex})) \longtwoheadmapsto \Varid{v}_{\mathrm{21}}}
  }
  { \ensuremath{\Varid{v}_{\mathrm{12}}\sim\Varid{v}_{\mathrm{21}}} }.
\end{equation*}

To get a more useful induction hypothesis, we generalize \ensuremath{\Conid{H0}} as follows, where \ensuremath{ \longtwoheadmapsto _{\beta}^{*}} represents a sequence of zero or more \ensuremath{\Varid{β}} reductions:

\begin{equation*}
  \inferrule*
  {  \ensuremath{\Varid{h}\mathbin{:}\mathbin{∀}\Varid{α}}.\
       \ensuremath{\ann{\Varid{ε}\mathbin{*}\mathbin{⟨}\ell_{1},\mathbin{⟨}\ell_{2},\Varid{ε'}\mathbin{⟩}\mathbin{⟩}}{\susp{\Varid{α}}}\mathbin{→}\Conid{P}\mathbin{→}\ann{\mathbin{⟨⟩}\mathbin{*}(\Varid{ε}\mathbin{⊕}\mathbin{⟨}\ell_{1},\mathbin{⟨}\ell_{2},\Varid{ε'}\mathbin{⟩⟩})}{\susp{\Conid{G}\;\Varid{α}}}}
  \\ \ensuremath{\Varid{h}_1\mathbin{:}\mathbin{∀}\Varid{α}\;\Varid{r}\;\Varid{r'}}.\
       \ensuremath{\ann{\mathbin{⟨}\ell_{1},\Varid{r}\mathbin{⟩}\mathbin{*}\Varid{r'}}{\susp{\Varid{α}}}\mathbin{→}\Conid{P}_1\mathbin{→}\ann{\Varid{r}\mathbin{*}\mathbin{⟨}\ell_{1},\Varid{r'}\mathbin{⟩}}{\susp{\Conid{F}_1\;\Varid{α}}}}
  \\ \ensuremath{\Varid{h}_2\mathbin{:}\mathbin{∀}\Varid{α}\;\Varid{r}\;\Varid{r'}}.\
       \ensuremath{\ann{\mathbin{⟨}\ell_{2},\Varid{r}\mathbin{⟩}\mathbin{*}\Varid{r'}}{\susp{\Varid{α}}}\mathbin{→}\Conid{P}_2\mathbin{→}\ann{\Varid{r}\mathbin{*}\mathbin{⟨}\ell_{1},\Varid{r'}\mathbin{⟩}}{\susp{\Conid{F}_2\;\Varid{α}}}}
  \\ \ensuremath{\Varid{e}\mathbin{:}\ann{\mathbin{⟨}\ell_{1},\mathbin{⟨}\ell_{2},\Varid{ε}\mathbin{⟩}\mathbin{⟩}\mathbin{*}\Varid{ε'}}{\susp{\Conid{X}}}}
  \\ \ensuremath{\Conid{H0}\mathbin{:}e_{0} \longtwoheadmapsto \Varid{v}_{\mathrm{12}}}
  \and \ensuremath{\Conid{H0}^{\prime}\mathbin{:}e_{0} \longtwoheadmapsto _{\beta}^{*}\Varid{h}^{\Varid{p}}\;(\Varid{h}_1^{\Varid{p}_{1}}\;(\Varid{h}_2^{\Varid{p}_{2}}\;\Varid{e}))}
  \and \ensuremath{\Conid{H1}\mathbin{:}\Varid{h}^{\Varid{p}}\;(\Varid{h}_2^{\Varid{p}_{2}}\;(\Varid{h}_1^{\Varid{p}_{1}}\;\Varid{ex})) \longtwoheadmapsto \Varid{v}_{\mathrm{21}}}
  }
  { \ensuremath{\Varid{v}_{\mathrm{12}}\sim\Varid{v}_{\mathrm{21}}} }.
\end{equation*}

The proof assumes that the \emph{handler op lemma} holds for each handler.
The rule below summarizes the handler op lemma proposition, which assumes that \ensuremath{\Varid{h}} and \ensuremath{\Varid{h}_2} satisfy the handler return lemma:

\begin{quote}
  Each operation \ensuremath{\op{op}^{\ell}\;\Varid{f}} of handler \ensuremath{\Varid{h}_1\mathbin{:}\mathbin{∀}\Varid{α}\;\Varid{r}\;\Varid{r'}} .\ \ensuremath{\ann{\mathbin{⟨}\Varid{ℓ},\Varid{r}\mathbin{⟩}\mathbin{*}\Varid{r'}}{\susp{\Varid{α}}}\mathbin{→}\ann{\Varid{r}\mathbin{*}\mathbin{⟨}\Varid{ℓ},\Varid{r'}\mathbin{⟩}}{\susp{\Conid{F}\;\Varid{α}}}} provides separation of concerns.
  That is,
  \begin{equation*}
    \inferrule*
    {  \ensuremath{\Varid{h}\mathbin{:}\mathbin{∀}\Varid{α}}.\
         \ensuremath{\ann{\Varid{ε}\mathbin{*}\mathbin{⟨}\ell_{1},\mathbin{⟨}\ell_{2},\Varid{ε'}\mathbin{⟩}\mathbin{⟩}}{\susp{\Varid{α}}}\mathbin{→}\Conid{P}\mathbin{→}\ann{\mathbin{⟨⟩}\mathbin{*}(\Varid{ε}\mathbin{⊕}\mathbin{⟨}\ell_{1},\mathbin{⟨}\ell_{2},\Varid{ε'}\mathbin{⟩⟩})}{\susp{\Conid{G}\;\Varid{α}}}}
    \\ \ensuremath{\Varid{h}_1\mathbin{:}\mathbin{∀}\Varid{α}\;\Varid{r}\;\Varid{r'}}.\
         \ensuremath{\ann{\mathbin{⟨}\ell_{1},\Varid{r}\mathbin{⟩}\mathbin{*}\Varid{r'}}{\susp{\Varid{α}}}\mathbin{→}\Conid{P}_1\mathbin{→}\ann{\Varid{r}\mathbin{*}\mathbin{⟨}\ell_{1},\Varid{r'}\mathbin{⟩}}{\susp{\Conid{F}_1\;\Varid{α}}}}
    \\ \ensuremath{\Varid{h}_2\mathbin{:}\mathbin{∀}\Varid{α}\;\Varid{r}\;\Varid{r'}}.\
         \ensuremath{\ann{\mathbin{⟨}\ell_{2},\Varid{r}\mathbin{⟩}\mathbin{*}\Varid{r'}}{\susp{\Varid{α}}}\mathbin{→}\Conid{P}_2\mathbin{→}\ann{\Varid{r}\mathbin{*}\mathbin{⟨}\ell_{1},\Varid{r'}\mathbin{⟩}}{\susp{\Conid{F}_2\;\Varid{α}}}}
    \\ \ensuremath{\Conid{H0}\mathbin{:}\Varid{h}^{\Varid{p}}\;(\Varid{h}_1^{\Varid{p}_{1}}\;(\Varid{h}_2^{\Varid{p}_{2}}\;\Conid{E}[\op{op}^{\ell}\;\Varid{f}\;\overline{\Varid{v}_{\Varid{v}}}\;\overline{\susp{e_{\Varid{m}}}}])) \longmapsto \Varid{e}^{\prime}}
    \and \ensuremath{\Conid{H1}\mathbin{:}\Varid{e}^{\prime} \longtwoheadmapsto \Varid{v}_{\mathrm{12}}}
    \\ \ensuremath{\Conid{H2}\mathbin{:}\Varid{h}^{\Varid{p}}\;(\Varid{h}_2^{\Varid{p}_{2}}\;(\Varid{h}_1^{\Varid{p}_{1}}\;\Conid{E}[\op{op}^{\ell}\;\Varid{f}\;\overline{\Varid{v}_{\Varid{v}}}\;\overline{\{\mskip1.5mu e_{\Varid{m}}\mskip1.5mu\}}])) \longtwoheadmapsto \Varid{v}_{\mathrm{21}}}
    \\ \ensuremath{\Conid{IH}\mathbin{:}} {\left(\begin{array}{l}
       \ensuremath{\mathbin{∀}e_{0}}.\
         \ensuremath{\Varid{e}^{\prime} \longtwoheadmapsto _{\beta}^{*}\Varid{h}^{\Varid{p}}\;(\Varid{h}_1^{\Varid{p}_{1}}\;(\Varid{h}_2^{\Varid{p}_{2}}\;e_{0})) \Longrightarrow }
         \ensuremath{\Varid{h}^{\Varid{p}}\;(\Varid{h}_2^{\Varid{p}_{2}}\;(\Varid{h}_1^{\Varid{p}_{1}}\;e_{0})) \longtwoheadmapsto \Varid{v}_{\mathrm{21}} \Longrightarrow }
         \ensuremath{\Varid{v}_{\mathrm{12}}\sim\Varid{v}_{\mathrm{21}}}
       \end{array}\right)}
    }
    { \ensuremath{\Varid{v}_{\mathrm{12}}\sim\Varid{v}_{\mathrm{21}}} }
  \end{equation*}
\end{quote}

Furthermore, we assume that \emph{handler return lemma} holds for each handler; i.e.,

\begin{quote}
\begin{equation*}
  \inferrule*
  {  \ensuremath{\Varid{h}\mathbin{:}\mathbin{∀}\Varid{α}\;\Varid{r}\;\Varid{r}^{\prime}}.\
       \ensuremath{\ann{\mathbin{⟨}\ell_{1},\Varid{r}\mathbin{⟩}\mathbin{*}\Varid{r'}}{\susp{\Varid{α}}}\mathbin{→}\Conid{P}\mathbin{→}\ann{\Varid{r}\mathbin{*}\mathbin{⟨}\Varid{ℓ},\Varid{r}^{\prime}\mathbin{⟩}}{\susp{\Conid{F}\;\Varid{α}}}}
  \\ \ensuremath{\Varid{p}\mathbin{:}\Conid{P}}
  \\ \ensuremath{\Varid{v}\mathbin{:}\Conid{X}}
  }
  { \ensuremath{\Varid{toBag}_{\Conid{F}}\;(\Varid{h}^{\Varid{p}}\;\susp{\Varid{v}})\mathrel{=}\{\mskip1.5mu \Varid{v}\mskip1.5mu\}} }
\end{equation*}
\end{quote}

Lastly, we assume the following definition of \ensuremath{ \longtwoheadmapsto }.

\begin{mathpar}
\inferrule[Trans]
{  \ensuremath{e_{0} \longmapsto e_{1}}
\\ \ensuremath{e_{1} \longtwoheadmapsto e_{2}} }
{  \ensuremath{e_{0} \longtwoheadmapsto e_{2}} }
\and
\inferrule[Refl]
{}
{  \ensuremath{e_{0} \longtwoheadmapsto e_{0}} }
\end{mathpar}

\subsection{Proof that Handling Order is Irrelevant for Nested Handlers with Separate Concerns}
\label{sec:soc-appendix}

\begin{proof}
  Our goal is to prove

  \begin{equation*}
  \inferrule*
  {  \ensuremath{\Varid{h}\mathbin{:}\mathbin{∀}\Varid{α}}.\
       \ensuremath{\ann{\Varid{ε}\mathbin{*}\mathbin{⟨}\ell_{1},\mathbin{⟨}\ell_{2},\Varid{ε'}\mathbin{⟩}\mathbin{⟩}}{\susp{\Varid{α}}}\mathbin{→}\Conid{P}\mathbin{→}\ann{\mathbin{⟨⟩}\mathbin{*}(\Varid{ε}\mathbin{⊕}\mathbin{⟨}\ell_{1},\mathbin{⟨}\ell_{2},\Varid{ε'}\mathbin{⟩⟩})}{\susp{\Conid{G}\;\Varid{α}}}}
  \\ \ensuremath{\Varid{h}_1\mathbin{:}\mathbin{∀}\Varid{α}\;\Varid{r}\;\Varid{r'}}.\
       \ensuremath{\ann{\mathbin{⟨}\ell_{1},\Varid{r}\mathbin{⟩}\mathbin{*}\Varid{r'}}{\susp{\Varid{α}}}\mathbin{→}\Conid{P}_1\mathbin{→}\ann{\Varid{r}\mathbin{*}\mathbin{⟨}\ell_{1},\Varid{r'}\mathbin{⟩}}{\susp{\Conid{F}_1\;\Varid{α}}}}
  \\ \ensuremath{\Varid{h}_2\mathbin{:}\mathbin{∀}\Varid{α}\;\Varid{r}\;\Varid{r'}}.\
       \ensuremath{\ann{\mathbin{⟨}\ell_{2},\Varid{r}\mathbin{⟩}\mathbin{*}\Varid{r'}}{\susp{\Varid{α}}}\mathbin{→}\Conid{P}_2\mathbin{→}\ann{\Varid{r}\mathbin{*}\mathbin{⟨}\ell_{1},\Varid{r'}\mathbin{⟩}}{\susp{\Conid{F}_2\;\Varid{α}}}}
  \\ \ensuremath{\Varid{e}\mathbin{:}\ann{\mathbin{⟨}\ell_{1},\mathbin{⟨}\ell_{2},\Varid{ε}\mathbin{⟩}\mathbin{⟩}\mathbin{*}\Varid{ε'}}{\susp{\Conid{X}}}}
  \\ \ensuremath{\Conid{H0}\mathbin{:}e_{0} \longtwoheadmapsto \Varid{v}_{\mathrm{12}}}
  \and \ensuremath{\Conid{H0}^{\prime}\mathbin{:}e_{0} \longtwoheadmapsto _{\beta}^{*}\Varid{h}^{\Varid{p}}\;(\Varid{h}_1^{\Varid{p}_{1}}\;(\Varid{h}_2^{\Varid{p}_{2}}\;\Varid{e}))}
  \and \ensuremath{\Conid{H1}\mathbin{:}\Varid{h}^{\Varid{p}}\;(\Varid{h}_2^{\Varid{p}_{2}}\;(\Varid{h}_1^{\Varid{p}_{1}}\;\Varid{e})) \longtwoheadmapsto \Varid{v}_{\mathrm{21}}}
  }
  { \ensuremath{\Varid{v}_{\mathrm{12}}\sim\Varid{v}_{\mathrm{21}}} }.
  \end{equation*}

  The proof is by rule induction on \ensuremath{\Conid{H0}}.

  The base case of the induction proof holds vacuously, because a value \ensuremath{\Varid{v}_{\mathrm{12}}} cannot \ensuremath{\Varid{β}} reduce to \ensuremath{\Varid{h}^{\Varid{p}}\;(\Varid{h}_1^{\Varid{p}_{1}}\;(\Varid{h}_2^{\Varid{p}_{2}}\;\Varid{e}))}.

  The proof obligation for the inductive case is

  \begin{equation*}
    \inferrule*
    {  \ensuremath{\Varid{h}\mathbin{:}\mathbin{∀}\Varid{α}}.\
         \ensuremath{\ann{\Varid{ε}\mathbin{*}\mathbin{⟨}\ell_{1},\mathbin{⟨}\ell_{2},\Varid{ε'}\mathbin{⟩}\mathbin{⟩}}{\susp{\Varid{α}}}\mathbin{→}\Conid{P}\mathbin{→}\ann{\mathbin{⟨⟩}\mathbin{*}(\Varid{ε}\mathbin{⊕}\mathbin{⟨}\ell_{1},\mathbin{⟨}\ell_{2},\Varid{ε'}\mathbin{⟩⟩})}{\susp{\Conid{G}\;\Varid{α}}}}
    \\ \ensuremath{\Varid{h}_1\mathbin{:}\mathbin{∀}\Varid{α}\;\Varid{r}\;\Varid{r'}}.\
         \ensuremath{\ann{\mathbin{⟨}\ell_{1},\Varid{r}\mathbin{⟩}\mathbin{*}\Varid{r'}}{\susp{\Varid{α}}}\mathbin{→}\Conid{P}_1\mathbin{→}\ann{\Varid{r}\mathbin{*}\mathbin{⟨}\ell_{1},\Varid{r'}\mathbin{⟩}}{\susp{\Conid{F}_1\;\Varid{α}}}}
    \\ \ensuremath{\Varid{h}_2\mathbin{:}\mathbin{∀}\Varid{α}\;\Varid{r}\;\Varid{r'}}.\
         \ensuremath{\ann{\mathbin{⟨}\ell_{2},\Varid{r}\mathbin{⟩}\mathbin{*}\Varid{r'}}{\susp{\Varid{α}}}\mathbin{→}\Conid{P}_2\mathbin{→}\ann{\Varid{r}\mathbin{*}\mathbin{⟨}\ell_{1},\Varid{r'}\mathbin{⟩}}{\susp{\Conid{F}_2\;\Varid{α}}}}
    \\ \ensuremath{\Varid{e}\mathbin{:}\ann{\mathbin{⟨}\ell_{1},\mathbin{⟨}\ell_{2},\Varid{ε}\mathbin{⟩}\mathbin{⟩}\mathbin{*}\Varid{ε'}}{\susp{\Conid{X}}}}
    \\ \ensuremath{\Conid{H0}\mathbin{:}e_{0} \longmapsto \Varid{e}^{\prime}}
    \\ \ensuremath{\Conid{H0}^{\prime}\mathbin{:}e_{0} \longtwoheadmapsto _{\beta}^{*}\Varid{h}^{\Varid{p}}\;(\Varid{h}_1^{\Varid{p}_{1}}\;(\Varid{h}_2^{\Varid{p}_{2}}\;\Varid{e}))}
    \\ \ensuremath{\Conid{H1}\mathbin{:}\Varid{e}^{\prime} \longtwoheadmapsto \Varid{v}_{\mathrm{12}}}
    \\ \ensuremath{\Conid{H2}\mathbin{:}\Varid{h}^{\Varid{p}}\;(\Varid{h}_2^{\Varid{p}_{2}}\;(\Varid{h}_1^{\Varid{p}_{1}}\;\Varid{e})) \longtwoheadmapsto \Varid{v}_{\mathrm{21}}}
    \\ \ensuremath{\Conid{IH}\mathbin{:}} {\left(\begin{array}{l}
       \ensuremath{\mathbin{∀}e_{0}}.\
         \ensuremath{\Varid{e}^{\prime} \longtwoheadmapsto _{\beta}^{*}\Varid{h}^{\Varid{p}}\;(\Varid{h}_1^{\Varid{p}_{1}}\;(\Varid{h}_2^{\Varid{p}_{2}}\;e_{0})) \Longrightarrow }
         \ensuremath{\Varid{h}^{\Varid{p}}\;(\Varid{h}_2^{\Varid{p}_{2}}\;(\Varid{h}_1^{\Varid{p}_{1}}\;e_{0})) \longtwoheadmapsto \Varid{v}_{\mathrm{21}} \Longrightarrow }
         \ensuremath{\Varid{v}_{\mathrm{12}}\sim\Varid{v}_{\mathrm{21}}}
       \end{array}\right)}
    }
    { \ensuremath{\Varid{v}_{\mathrm{12}}\sim\Varid{v}_{\mathrm{21}}} }
  \end{equation*}

  Inverting \ensuremath{\Conid{H0}^{\prime}}, there are two cases to consider.

  \begin{case*}[Refl]

  The goal is

  \begin{equation*}
    \inferrule*
    {  \ensuremath{\Varid{h}\mathbin{:}\mathbin{∀}\Varid{α}}.\
         \ensuremath{\ann{\Varid{ε}\mathbin{*}\mathbin{⟨}\ell_{1},\mathbin{⟨}\ell_{2},\Varid{ε'}\mathbin{⟩}\mathbin{⟩}}{\susp{\Varid{α}}}\mathbin{→}\Conid{P}\mathbin{→}\ann{\mathbin{⟨⟩}\mathbin{*}(\Varid{ε}\mathbin{⊕}\mathbin{⟨}\ell_{1},\mathbin{⟨}\ell_{2},\Varid{ε'}\mathbin{⟩⟩})}{\susp{\Conid{G}\;\Varid{α}}}}
    \\ \ensuremath{\Varid{h}_1\mathbin{:}\mathbin{∀}\Varid{α}\;\Varid{r}\;\Varid{r'}}.\
         \ensuremath{\ann{\mathbin{⟨}\ell_{1},\Varid{r}\mathbin{⟩}\mathbin{*}\Varid{r'}}{\susp{\Varid{α}}}\mathbin{→}\Conid{P}_1\mathbin{→}\ann{\Varid{r}\mathbin{*}\mathbin{⟨}\ell_{1},\Varid{r'}\mathbin{⟩}}{\susp{\Conid{F}_1\;\Varid{α}}}}
    \\ \ensuremath{\Varid{h}_2\mathbin{:}\mathbin{∀}\Varid{α}\;\Varid{r}\;\Varid{r'}}.\
         \ensuremath{\ann{\mathbin{⟨}\ell_{2},\Varid{r}\mathbin{⟩}\mathbin{*}\Varid{r'}}{\susp{\Varid{α}}}\mathbin{→}\Conid{P}_2\mathbin{→}\ann{\Varid{r}\mathbin{*}\mathbin{⟨}\ell_{1},\Varid{r'}\mathbin{⟩}}{\susp{\Conid{F}_2\;\Varid{α}}}}
    \\ \ensuremath{\Varid{e}\mathbin{:}\ann{\mathbin{⟨}\ell_{1},\mathbin{⟨}\ell_{2},\Varid{ε}\mathbin{⟩}\mathbin{⟩}\mathbin{*}\Varid{ε'}}{\susp{\Conid{X}}}}
    \\ \ensuremath{\Conid{H0}\mathbin{:}\Varid{h}^{\Varid{p}}\;(\Varid{h}_1^{\Varid{p}_{1}}\;(\Varid{h}_2^{\Varid{p}_{2}}\;\Varid{e})) \longmapsto \Varid{e}^{\prime}}
    \\ \ensuremath{\Conid{H1}\mathbin{:}\Varid{e}^{\prime} \longtwoheadmapsto \Varid{v}_{\mathrm{12}}}
    \\ \ensuremath{\Conid{H2}\mathbin{:}\Varid{h}^{\Varid{p}}\;(\Varid{h}_2^{\Varid{p}_{2}}\;(\Varid{h}_1^{\Varid{p}_{1}}\;\Varid{e})) \longtwoheadmapsto \Varid{v}_{\mathrm{21}}}
    \\ \ensuremath{\Conid{IH}\mathbin{:}} {\left(\begin{array}{l}
       \ensuremath{\mathbin{∀}e_{0}}.\
         \ensuremath{\Varid{e}^{\prime} \longtwoheadmapsto _{\beta}^{*}\Varid{h}^{\Varid{p}}\;(\Varid{h}_1^{\Varid{p}_{1}}\;(\Varid{h}_2^{\Varid{p}_{2}}\;e_{0})) \Longrightarrow }
         \ensuremath{\Varid{h}^{\Varid{p}}\;(\Varid{h}_2^{\Varid{p}_{2}}\;(\Varid{h}_1^{\Varid{p}_{1}}\;e_{0})) \longtwoheadmapsto \Varid{v}_{\mathrm{21}} \Longrightarrow }
         \ensuremath{\Varid{v}_{\mathrm{12}}\sim\Varid{v}_{\mathrm{21}}}
       \end{array}\right)}
    }
    { \ensuremath{\Varid{v}_{\mathrm{12}}\sim\Varid{v}_{\mathrm{21}}} }
  \end{equation*}

  Inverting \ensuremath{\Conid{H0}} using the definition of \ensuremath{ \longmapsto } and \ensuremath{\mathbin{-->}}, the cases for \ensuremath{\Varid{β}} and \ensuremath{\mathbf{let}} follow trivially from the induction hypothesis.
  The \textsc{Return} case follows from the handler return lemmas of \ensuremath{\Varid{h}_1} and \ensuremath{\Varid{h}_2}, and the \textsc{Handle} case follows from the handler op lemmas of \ensuremath{\Varid{h}_1} and \ensuremath{\Varid{h}_2}.

  \end{case*}

  \begin{case*}[Trans]

  The goal is

  \begin{equation*}
  \inferrule*
  {  \ensuremath{\Varid{h}\mathbin{:}\mathbin{∀}\Varid{α}}.\
       \ensuremath{\ann{\Varid{ε}\mathbin{*}\mathbin{⟨}\ell_{1},\mathbin{⟨}\ell_{2},\Varid{ε'}\mathbin{⟩}\mathbin{⟩}}{\susp{\Varid{α}}}\mathbin{→}\Conid{P}\mathbin{→}\ann{\mathbin{⟨⟩}\mathbin{*}(\Varid{ε}\mathbin{⊕}\mathbin{⟨}\ell_{1},\mathbin{⟨}\ell_{2},\Varid{ε'}\mathbin{⟩⟩})}{\susp{\Conid{G}\;\Varid{α}}}}
  \\ \ensuremath{\Varid{h}_1\mathbin{:}\mathbin{∀}\Varid{α}\;\Varid{r}\;\Varid{r'}}.\
       \ensuremath{\ann{\mathbin{⟨}\ell_{1},\Varid{r}\mathbin{⟩}\mathbin{*}\Varid{r'}}{\susp{\Varid{α}}}\mathbin{→}\Conid{P}_1\mathbin{→}\ann{\Varid{r}\mathbin{*}\mathbin{⟨}\ell_{1},\Varid{r'}\mathbin{⟩}}{\susp{\Conid{F}_1\;\Varid{α}}}}
  \\ \ensuremath{\Varid{h}_2\mathbin{:}\mathbin{∀}\Varid{α}\;\Varid{r}\;\Varid{r'}}.\
       \ensuremath{\ann{\mathbin{⟨}\ell_{2},\Varid{r}\mathbin{⟩}\mathbin{*}\Varid{r'}}{\susp{\Varid{α}}}\mathbin{→}\Conid{P}_2\mathbin{→}\ann{\Varid{r}\mathbin{*}\mathbin{⟨}\ell_{1},\Varid{r'}\mathbin{⟩}}{\susp{\Conid{F}_2\;\Varid{α}}}}
  \\ \ensuremath{\Varid{e}\mathbin{:}\ann{\mathbin{⟨}\ell_{1},\mathbin{⟨}\ell_{2},\Varid{ε}\mathbin{⟩}\mathbin{⟩}\mathbin{*}\Varid{ε'}}{\susp{\Conid{X}}}}
  \\ \ensuremath{\Conid{H0}\mathbin{:}e_{0} \longmapsto \Varid{e}^{\prime}}
  \\ \ensuremath{\Conid{H0}^{\prime}\mathbin{:}e_{0} \longmapsto _{\beta}e_{0}^{\prime}}
  \\ \ensuremath{\Conid{H0}^{\prime\prime}\mathbin{:}e_{0}^{\prime} \longtwoheadmapsto _{\beta}^{*}\Varid{h}^{\Varid{p}}\;(\Varid{h}_1^{\Varid{p}_{1}}\;(\Varid{h}_2^{\Varid{p}_{2}}\;\Varid{e}))}
  \\ \ensuremath{\Conid{H1}\mathbin{:}\Varid{e}^{\prime} \longtwoheadmapsto \Varid{v}_{\mathrm{12}}}
  \\ \ensuremath{\Conid{H2}\mathbin{:}\Varid{h}^{\Varid{p}}\;(\Varid{h}_2^{\Varid{p}_{2}}\;(\Varid{h}_1^{\Varid{p}_{1}}\;\Varid{e})) \longtwoheadmapsto \Varid{v}_{\mathrm{21}}}
  \\ \ensuremath{\Conid{IH}\mathbin{:}} {\left(\begin{array}{l}
     \ensuremath{\mathbin{∀}e_{0}}.\
       \ensuremath{\Varid{e}^{\prime} \longtwoheadmapsto _{\beta}^{*}\Varid{h}^{\Varid{p}}\;(\Varid{h}_1^{\Varid{p}_{1}}\;(\Varid{h}_2^{\Varid{p}_{2}}\;e_{0})) \Longrightarrow }
       \ensuremath{\Varid{h}^{\Varid{p}}\;(\Varid{h}_2^{\Varid{p}_{2}}\;(\Varid{h}_1^{\Varid{p}_{1}}\;e_{0})) \longtwoheadmapsto \Varid{v}_{\mathrm{21}} \Longrightarrow }
       \ensuremath{\Varid{v}_{\mathrm{12}}\sim\Varid{v}_{\mathrm{21}}}
     \end{array}\right)}
  }
  { \ensuremath{\Varid{v}_{\mathrm{12}}\sim\Varid{v}_{\mathrm{21}}} }.
  \end{equation*}

  Because \ensuremath{ \longmapsto } is a deterministic relation, we get that \ensuremath{e_{0}^{\prime}\mathrel{=}\Varid{e}^{\prime}}, and that the goal follows from the induction hypothesis.

  \end{case*}

\end{proof}

\subsection{S.o.C. Lemmas for the State Handler}
\label{sec:soc-state-appendix}

\subsubsection{Handler Return Lemma for the State Handler}

\begin{proof}

The goal is to prove that

\begin{equation*}
  \inferrule*
  {  \ensuremath{\Varid{h}_{\Conid{St}}\mathbin{:}\mathbin{∀}\Varid{α}\;\Varid{r}\;\Varid{r}^{\prime}}.\
       \ensuremath{\ann{\mathbin{⟨}\Conid{St},\Varid{r}\mathbin{⟩}\mathbin{*}\Varid{r'}}{\susp{\Varid{α}}}\mathbin{→}\Conid{S}\mathbin{→}\ann{\Varid{r}\mathbin{*}\mathbin{⟨}\Conid{St},\Varid{r}^{\prime}\mathbin{⟩}}{\susp{(\Varid{α},\Conid{S})}}}
  \\ \ensuremath{\Varid{s}\mathbin{:}\Conid{S}}
  \\ \ensuremath{\Varid{v}\mathbin{:}\Conid{X}}
  }
  { \ensuremath{\Varid{toBag}_{(,)\;\Conid{S}}\ (\Varid{h}_{\Conid{St}}^{\Varid{s}}\;\susp{\Varid{v}})\mathrel{=}\{\mskip1.5mu \Varid{v}\mskip1.5mu\}} }
\end{equation*}

where \ensuremath{\Varid{toBag}_{(,)\;\Conid{S}}\ (\Varid{x},\Varid{s})\mathrel{=}\{\mskip1.5mu \Varid{x}\mskip1.5mu\}}

The goal follows trivially from the definition of the return case of the state handler.

\end{proof}

\subsubsection{Handler Op Lemma for the State Handler}

\begin{proof}

\begin{case*}[\ensuremath{\op{op}^{\Conid{St}}\mathrel{=}\op{put}\;\Varid{s}^{\prime}}]

\begin{equation*}
  \inferrule*
  {  \ensuremath{\Varid{h}\mathbin{:}\mathbin{∀}\Varid{α}}.\
       \ensuremath{\ann{\Varid{ε}\mathbin{*}\mathbin{⟨}\Conid{St},\mathbin{⟨}\ell_{2},\Varid{ε'}\mathbin{⟩}\mathbin{⟩}}{\susp{\Varid{α}}}\mathbin{→}\Conid{P}\mathbin{→}\ann{\mathbin{⟨⟩}\mathbin{*}(\Varid{ε}\mathbin{⊕}\mathbin{⟨}\Conid{St},\mathbin{⟨}\ell_{2},\Varid{ε'}\mathbin{⟩⟩})}{\susp{\Conid{G}\;\Varid{α}}}}
  \\ \ensuremath{\Varid{h}_{\Conid{St}}\mathbin{:}\mathbin{∀}\Varid{α}\;\Varid{r}\;\Varid{r'}}.\
     \ensuremath{\ann{\mathbin{⟨}\Conid{St},\Varid{r}\mathbin{⟩}\mathbin{*}\Varid{r'}}{\susp{\Varid{α}}}\mathbin{→}\Conid{S}\mathbin{→}\ann{\Varid{r}\mathbin{*}\mathbin{⟨}\Conid{St},\Varid{r'}\mathbin{⟩}}{\susp{(\Varid{α},\Varid{s})}}}
  \\ \ensuremath{\Varid{h}_2\mathbin{:}\mathbin{∀}\Varid{α}\;\Varid{r}\;\Varid{r'}}.\
       \ensuremath{\ann{\mathbin{⟨}\ell_{2},\Varid{r}\mathbin{⟩}\mathbin{*}\Varid{r'}}{\susp{\Varid{α}}}\mathbin{→}\Conid{P}_2\mathbin{→}\ann{\Varid{r}\mathbin{*}\mathbin{⟨}\Conid{St},\Varid{r'}\mathbin{⟩}}{\susp{\Conid{F}_2\;\Varid{α}}}}
  \\ \ensuremath{\Conid{H0}\mathbin{:}\Varid{h}^{\Varid{p}}\;(\Varid{h}_{\Conid{St}}^{\Varid{s}}\;(\Varid{h}_2^{\Varid{p}_{2}}\;\Conid{E}[\op{put}\;\Varid{s}^{\prime}])) \longmapsto \Varid{e}^{\prime}}
  \and \ensuremath{\Conid{H1}\mathbin{:}\Varid{e}^{\prime} \longtwoheadmapsto \Varid{v}_{\mathrm{12}}}
  \\ \ensuremath{\Conid{H2}\mathbin{:}\Varid{h}^{\Varid{p}}\;(\Varid{h}_2^{\Varid{p}_{2}}\;(\Varid{h}_{\Conid{St}}^{\Varid{s}}\;\Conid{E}[\op{put}\;\Varid{s}^{\prime}])) \longtwoheadmapsto \Varid{v}_{\mathrm{21}}}
  \\ \ensuremath{\Conid{IH}\mathbin{:}} {\left(\begin{array}{l}
     \ensuremath{\mathbin{∀}e_{0}}.\
       \ensuremath{\Varid{e}^{\prime} \longtwoheadmapsto _{\beta}^{*}\Varid{h}^{\Varid{p}}\;(\Varid{h}_{\Conid{St}}^{\Varid{s}}\;(\Varid{h}_2^{\Varid{p}_{2}}\;e_{0})) \Longrightarrow }
       \ensuremath{\Varid{h}^{\Varid{p}}\;(\Varid{h}_2^{\Varid{p}_{2}}\;(\Varid{h}_{\Conid{St}}^{\Varid{s}}\;e_{0})) \longtwoheadmapsto \Varid{v}_{\mathrm{21}} \Longrightarrow }
       \ensuremath{\Varid{v}_{\mathrm{12}}\sim\Varid{v}_{\mathrm{21}}}
     \end{array}\right)}
  }
  { \ensuremath{\Varid{v}_{\mathrm{12}}\sim\Varid{v}_{\mathrm{21}}} }
\end{equation*}

Inversion on \ensuremath{\Conid{H0}}.

\begin{equation*}
  \inferrule*
  {  \ensuremath{\Varid{h}\mathbin{:}\mathbin{∀}\Varid{α}}.\
       \ensuremath{\ann{\Varid{ε}\mathbin{*}\mathbin{⟨}\Conid{St},\mathbin{⟨}\ell_{2},\Varid{ε'}\mathbin{⟩}\mathbin{⟩}}{\susp{\Varid{α}}}\mathbin{→}\Conid{P}\mathbin{→}\ann{\mathbin{⟨⟩}\mathbin{*}(\Varid{ε}\mathbin{⊕}\mathbin{⟨}\Conid{St},\mathbin{⟨}\ell_{2},\Varid{ε'}\mathbin{⟩⟩})}{\susp{\Conid{G}\;\Varid{α}}}}
  \\ \ensuremath{\Varid{h}_{\Conid{St}}\mathbin{:}\mathbin{∀}\Varid{α}\;\Varid{r}\;\Varid{r'}}.\
     \ensuremath{\ann{\mathbin{⟨}\Conid{St},\Varid{r}\mathbin{⟩}\mathbin{*}\Varid{r'}}{\susp{\Varid{α}}}\mathbin{→}\Conid{S}\mathbin{→}\ann{\Varid{r}\mathbin{*}\mathbin{⟨}\Conid{St},\Varid{r'}\mathbin{⟩}}{\susp{(\Varid{α},\Varid{s})}}}
  \\ \ensuremath{\Varid{h}_2\mathbin{:}\mathbin{∀}\Varid{α}\;\Varid{r}\;\Varid{r'}}.\
       \ensuremath{\ann{\mathbin{⟨}\ell_{2},\Varid{r}\mathbin{⟩}\mathbin{*}\Varid{r'}}{\susp{\Varid{α}}}\mathbin{→}\Conid{P}_2\mathbin{→}\ann{\Varid{r}\mathbin{*}\mathbin{⟨}\Conid{St},\Varid{r'}\mathbin{⟩}}{\susp{\Conid{F}_2\;\Varid{α}}}}
  \\ \ensuremath{\Conid{H1}\mathbin{:}\Varid{e}^{\prime}\mathbin{≡}\Varid{h}\;(((\Varid{λ}\;()\;\Varid{s}^{\prime\prime}\!.\!\;\Varid{h}_{\Conid{St}}^{\Varid{s}^{\prime\prime}}\;(\Varid{h}_2^{\Varid{p}_{2}}\;\Conid{E}[()]))\;()\;\Varid{s}^{\prime})) \longtwoheadmapsto \Varid{v}_{\mathrm{12}}}
  \\ \ensuremath{\Conid{H2}\mathbin{:}\Varid{h}^{\Varid{p}}\;(\Varid{h}_2^{\Varid{p}_{2}}\;(\Varid{h}_{\Conid{St}}^{\Varid{s}}\;\Conid{E}[\op{put}\;\Varid{s}^{\prime}])) \longtwoheadmapsto \Varid{v}_{\mathrm{21}}}
  \\ \ensuremath{\Conid{IH}\mathbin{:}} {\left(\begin{array}{l}
     \ensuremath{\mathbin{∀}e_{0}}.\
       \ensuremath{\Varid{e}^{\prime} \longtwoheadmapsto _{\beta}^{*}\Varid{h}^{\Varid{p}}\;(\Varid{h}_{\Conid{St}}^{\Varid{s}}\;(\Varid{h}_2^{\Varid{p}_{2}}\;e_{0})) \Longrightarrow }
       \ensuremath{\Varid{h}^{\Varid{p}}\;(\Varid{h}_2^{\Varid{p}_{2}}\;(\Varid{h}_{\Conid{St}}^{\Varid{s}}\;e_{0})) \longtwoheadmapsto \Varid{v}_{\mathrm{21}} \Longrightarrow }
       \ensuremath{\Varid{v}_{\mathrm{12}}\sim\Varid{v}_{\mathrm{21}}}
     \end{array}\right)}
  }
  { \ensuremath{\Varid{v}_{\mathrm{12}}\sim\Varid{v}_{\mathrm{21}}} }
\end{equation*}

Inversion on \ensuremath{\Conid{H1}} and \ensuremath{\Conid{H2}}; specialize \ensuremath{\Conid{IH}}, using \ensuremath{\Varid{h}\;(((\Varid{λ}\;()\;\Varid{s}^{\prime\prime}\!.\!\;\Varid{h}_{\Conid{St}}^{\Varid{s}^{\prime\prime}}\;(\Varid{h}_2^{\Varid{p}_{2}}\;\Conid{E}[()]))\;()\;\Varid{s}^{\prime})) \longtwoheadmapsto _{\beta}^{*}\Varid{h}\;(\Varid{h}_{\Conid{St}}^{\Varid{s'}}\;(\Varid{h}_2^{\Varid{p}_{2}}\;\Conid{E}[()]))}.

\begin{equation*}
  \inferrule*
  {  \ensuremath{\Varid{h}\mathbin{:}\mathbin{∀}\Varid{α}}.\
       \ensuremath{\ann{\Varid{ε}\mathbin{*}\mathbin{⟨}\Conid{St},\mathbin{⟨}\ell_{2},\Varid{ε'}\mathbin{⟩}\mathbin{⟩}}{\susp{\Varid{α}}}\mathbin{→}\Conid{P}\mathbin{→}\ann{\mathbin{⟨⟩}\mathbin{*}(\Varid{ε}\mathbin{⊕}\mathbin{⟨}\Conid{St},\mathbin{⟨}\ell_{2},\Varid{ε'}\mathbin{⟩⟩})}{\susp{\Conid{G}\;\Varid{α}}}}
  \\ \ensuremath{\Varid{h}_{\Conid{St}}\mathbin{:}\mathbin{∀}\Varid{α}\;\Varid{r}\;\Varid{r'}}.\
     \ensuremath{\ann{\mathbin{⟨}\Conid{St},\Varid{r}\mathbin{⟩}\mathbin{*}\Varid{r'}}{\susp{\Varid{α}}}\mathbin{→}\Conid{S}\mathbin{→}\ann{\Varid{r}\mathbin{*}\mathbin{⟨}\Conid{St},\Varid{r'}\mathbin{⟩}}{\susp{(\Varid{α},\Varid{s})}}}
  \\ \ensuremath{\Varid{h}_2\mathbin{:}\mathbin{∀}\Varid{α}\;\Varid{r}\;\Varid{r'}}.\
       \ensuremath{\ann{\mathbin{⟨}\ell_{2},\Varid{r}\mathbin{⟩}\mathbin{*}\Varid{r'}}{\susp{\Varid{α}}}\mathbin{→}\Conid{P}_2\mathbin{→}\ann{\Varid{r}\mathbin{*}\mathbin{⟨}\Conid{St},\Varid{r'}\mathbin{⟩}}{\susp{\Conid{F}_2\;\Varid{α}}}}
  \\ \ensuremath{\Conid{H1}\mathbin{:}\Varid{h}\;(\Varid{h}_{\Conid{St}}^{\Varid{s'}}\;(\Varid{h}_2^{\Varid{p}_{2}}\;\Conid{E}[()])) \longtwoheadmapsto \Varid{v}_{\mathrm{12}}}
  \\ \ensuremath{\Conid{H2}\mathbin{:}\Varid{h}^{\Varid{p}}\;(\Varid{h}_2^{\Varid{p}_{2}}\;(\Varid{h}_{\Conid{St}}^{\Varid{s}}\;\Conid{E}[()])) \longtwoheadmapsto \Varid{v}_{\mathrm{21}}}
  \\ \ensuremath{\Conid{IH}\mathbin{:}}
       \ensuremath{\Varid{h}^{\Varid{p}}\;(\Varid{h}_2^{\Varid{p}_{2}}\;(\Varid{h}_{\Conid{St}}^{\Varid{s}}\;\Conid{E}[()])) \longtwoheadmapsto \Varid{v}_{\mathrm{21}} \Longrightarrow }
       \ensuremath{\Varid{v}_{\mathrm{12}}\sim\Varid{v}_{\mathrm{21}}}
  }
  { \ensuremath{\Varid{v}_{\mathrm{12}}\sim\Varid{v}_{\mathrm{21}}} }
\end{equation*}

Goal follows from \ensuremath{\Conid{IH}} and \ensuremath{\Conid{H2}}.

\end{case*}

\begin{case*}[\ensuremath{\op{op}^{\Conid{St}}\mathrel{=}\op{get}}]

Analogous to the case for \ensuremath{\op{put}}.

\end{case*}

\end{proof}

\subsection{S.o.C. Lemmas for the Catch Handler}

\subsubsection{Handler Return Lemma for the Catch Handler}

\begin{proof}
The goal is to show that
\begin{equation*}
\inferrule
{ \ensuremath{\Varid{hCatch}\mathbin{:}\mathbin{∀}\Varid{α}\;\Varid{r}\;\Varid{r'}}.\ \ensuremath{\ann{\mathbin{⟨}\Conid{Catch},\Varid{r}\mathbin{⟩}\mathbin{*}\Varid{r'}}{\susp{\Varid{α}}}\mathbin{→}()\mathbin{→}\ann{\Varid{r}\mathbin{*}\mathbin{⟨}\Conid{Catch},\Varid{r'}\mathbin{⟩}}{\susp{(\Conid{Maybe}\;\Varid{α})}}}
\\
  \ensuremath{\Varid{v}\mathbin{:}\Conid{X}} }
{ \ensuremath{\Varid{toBag}_{\Conid{Maybe}}\ (\Varid{hCatch}^{()}\;\susp{\Varid{v}})}
}
\end{equation*}

where

\begin{align*}
  \ensuremath{\Varid{toBag}_{\Conid{Maybe}}\ (\Conid{Just}\;\Varid{x})} &= \ensuremath{\{\mskip1.5mu \Varid{x}\mskip1.5mu\}}
\\
  \ensuremath{\Varid{toBag}_{\Conid{Maybe}}\ \Conid{Nothing}} &= \ensuremath{\{\mskip1.5mu \mskip1.5mu\}}
\end{align*}

The goal follows trivially from the definition of the return case of the catch handler.

\end{proof}

\subsubsection{Handler Op Lemma for the Catch Handler}
\label{app:handler-op-catch}

Assumes that \ensuremath{\Varid{h}} and \ensuremath{\Varid{h}_2} satisfy the handler return lemma.

\begin{proof}

\begin{case*}[\ensuremath{\op{op}^{\Conid{Catch}}\mathrel{=}\op{throw}}]

\begin{equation*}
  \inferrule*
  {  \ensuremath{\Varid{h}\mathbin{:}\mathbin{∀}\Varid{α}}.\
       \ensuremath{\ann{\Varid{ε}\mathbin{*}\mathbin{⟨}\Conid{Catch},\mathbin{⟨}\ell_{2},\Varid{ε'}\mathbin{⟩}\mathbin{⟩}}{\susp{\Varid{α}}}\mathbin{→}\Conid{P}\mathbin{→}\ann{\mathbin{⟨⟩}\mathbin{*}(\Varid{ε}\mathbin{⊕}\mathbin{⟨}\Conid{Catch},\mathbin{⟨}\ell_{2},\Varid{ε'}\mathbin{⟩⟩})}{\susp{\Conid{G}\;\Varid{α}}}}
  \\ \ensuremath{\Varid{hCatch}\mathbin{:}\mathbin{∀}\Varid{α}\;\Varid{r}\;\Varid{r'}}.\
     \ensuremath{\ann{\mathbin{⟨}\Conid{Catch},\Varid{r}\mathbin{⟩}\mathbin{*}\Varid{r'}}{\susp{\Varid{α}}}\mathbin{→}()\mathbin{→}\ann{\Varid{r}\mathbin{*}\mathbin{⟨}\Conid{Catch},\Varid{r'}\mathbin{⟩}}{\susp{(\Conid{Maybe}\;\Varid{α})}}}
  \\ \ensuremath{\Varid{h}_2\mathbin{:}\mathbin{∀}\Varid{α}\;\Varid{r}\;\Varid{r'}}.\
       \ensuremath{\ann{\mathbin{⟨}\ell_{2},\Varid{r}\mathbin{⟩}\mathbin{*}\Varid{r'}}{\susp{\Varid{α}}}\mathbin{→}\Conid{P}_2\mathbin{→}\ann{\Varid{r}\mathbin{*}\mathbin{⟨}\Conid{Catch},\Varid{r'}\mathbin{⟩}}{\susp{\Conid{F}_2\;\Varid{α}}}}
  \\ \ensuremath{\Conid{H0}\mathbin{:}\Varid{h}^{\Varid{p}}\;(\Varid{hCatch}^{()}\;(\Varid{h}_2^{\Varid{p}_{2}}\;\Conid{E}[\op{throw}])) \longmapsto \Varid{e}^{\prime}}
  \and \ensuremath{\Conid{H1}\mathbin{:}\Varid{e}^{\prime} \longtwoheadmapsto \Varid{v}_{\mathrm{12}}}
  \\ \ensuremath{\Conid{H2}\mathbin{:}\Varid{h}^{\Varid{p}}\;(\Varid{h}_2^{\Varid{p}_{2}}\;(\Varid{hCatch}^{()}\;\Conid{E}[\op{throw}])) \longtwoheadmapsto \Varid{v}_{\mathrm{21}}}
  \\ \ensuremath{\Conid{IH}\mathbin{:}} {\left(\begin{array}{l}
     \ensuremath{\mathbin{∀}e_{0}}.\
       \ensuremath{\Varid{e}^{\prime} \longtwoheadmapsto _{\beta}^{*}\Varid{h}^{\Varid{p}}\;(\Varid{hCatch}^{()}\;(\Varid{h}_2^{\Varid{p}_{2}}\;e_{0})) \Longrightarrow }
       \ensuremath{\Varid{h}^{\Varid{p}}\;(\Varid{h}_2^{\Varid{p}_{2}}\;(\Varid{hCatch}^{()}\;e_{0})) \longtwoheadmapsto \Varid{v}_{\mathrm{21}} \Longrightarrow }
       \ensuremath{\Varid{v}_{\mathrm{12}}\sim\Varid{v}_{\mathrm{21}}}
     \end{array}\right)}
  }
  { \ensuremath{\Varid{v}_{\mathrm{12}}\sim\Varid{v}_{\mathrm{21}}} }
\end{equation*}

Inversion on \ensuremath{\Conid{H0}}.

\begin{equation*}
  \inferrule*
  {  \ensuremath{\Varid{h}\mathbin{:}\mathbin{∀}\Varid{α}}.\
       \ensuremath{\ann{\Varid{ε}\mathbin{*}\mathbin{⟨}\Conid{Catch},\mathbin{⟨}\ell_{2},\Varid{ε'}\mathbin{⟩}\mathbin{⟩}}{\susp{\Varid{α}}}\mathbin{→}\Conid{P}\mathbin{→}\ann{\mathbin{⟨⟩}\mathbin{*}(\Varid{ε}\mathbin{⊕}\mathbin{⟨}\Conid{Catch},\mathbin{⟨}\ell_{2},\Varid{ε'}\mathbin{⟩⟩})}{\susp{\Conid{G}\;\Varid{α}}}}
  \\ \ensuremath{\Varid{hCatch}\mathbin{:}\mathbin{∀}\Varid{α}\;\Varid{r}\;\Varid{r'}}.\
     \ensuremath{\ann{\mathbin{⟨}\Conid{Catch},\Varid{r}\mathbin{⟩}\mathbin{*}\Varid{r'}}{\susp{\Varid{α}}}\mathbin{→}()\mathbin{→}\ann{\Varid{r}\mathbin{*}\mathbin{⟨}\Conid{Catch},\Varid{r'}\mathbin{⟩}}{\susp{(\Conid{Maybe}\;\Varid{α})}}}
  \\ \ensuremath{\Varid{h}_2\mathbin{:}\mathbin{∀}\Varid{α}\;\Varid{r}\;\Varid{r'}}.\
       \ensuremath{\ann{\mathbin{⟨}\ell_{2},\Varid{r}\mathbin{⟩}\mathbin{*}\Varid{r'}}{\susp{\Varid{α}}}\mathbin{→}\Conid{P}_2\mathbin{→}\ann{\Varid{r}\mathbin{*}\mathbin{⟨}\Conid{Catch},\Varid{r'}\mathbin{⟩}}{\susp{\Conid{F}_2\;\Varid{α}}}}
  \\ \ensuremath{\Conid{H1}\mathbin{:}\Varid{h}^{\Varid{p}}\;\susp{\Conid{Nothing}} \longtwoheadmapsto \Varid{v}_{\mathrm{12}}}
  \\ \ensuremath{\Conid{H2}\mathbin{:}\Varid{h}^{\Varid{p}}\;(\Varid{h}_2^{\Varid{p}_{2}}\;(\Varid{hCatch}^{()}\;\Conid{E}[\op{throw}])) \longtwoheadmapsto \Varid{v}_{\mathrm{21}}}
  \\ \ensuremath{\Conid{IH}\mathbin{:}} {\left(\begin{array}{l}
     \ensuremath{\mathbin{∀}e_{0}}.\
       \ensuremath{\Varid{e}^{\prime} \longtwoheadmapsto _{\beta}^{*}\Varid{h}^{\Varid{p}}\;(\Varid{hCatch}^{()}\;(\Varid{h}_2^{\Varid{p}_{2}}\;e_{0})) \Longrightarrow }
       \ensuremath{\Varid{h}^{\Varid{p}}\;(\Varid{h}_2^{\Varid{p}_{2}}\;(\Varid{hCatch}^{()}\;e_{0})) \longtwoheadmapsto \Varid{v}_{\mathrm{21}} \Longrightarrow }
       \ensuremath{\Varid{v}_{\mathrm{12}}\sim\Varid{v}_{\mathrm{21}}}
     \end{array}\right)}
  }
  { \ensuremath{\Varid{v}_{\mathrm{12}}\sim\Varid{v}_{\mathrm{21}}} }
\end{equation*}

Inversion on \ensuremath{\Conid{H2}} using definition of \ensuremath{ \longmapsto } and \ensuremath{\mathbin{-->}}.

\begin{equation*}
  \inferrule*
  {  \ensuremath{\Varid{h}\mathbin{:}\mathbin{∀}\Varid{α}}.\
       \ensuremath{\ann{\Varid{ε}\mathbin{*}\mathbin{⟨}\Conid{Catch},\mathbin{⟨}\ell_{2},\Varid{ε'}\mathbin{⟩}\mathbin{⟩}}{\susp{\Varid{α}}}\mathbin{→}\Conid{P}\mathbin{→}\ann{\mathbin{⟨⟩}\mathbin{*}(\Varid{ε}\mathbin{⊕}\mathbin{⟨}\Conid{Catch},\mathbin{⟨}\ell_{2},\Varid{ε'}\mathbin{⟩⟩})}{\susp{\Conid{G}\;\Varid{α}}}}
  \\ \ensuremath{\Varid{hCatch}\mathbin{:}\mathbin{∀}\Varid{α}\;\Varid{r}\;\Varid{r'}}.\
     \ensuremath{\ann{\mathbin{⟨}\Conid{Catch},\Varid{r}\mathbin{⟩}\mathbin{*}\Varid{r'}}{\susp{\Varid{α}}}\mathbin{→}()\mathbin{→}\ann{\Varid{r}\mathbin{*}\mathbin{⟨}\Conid{Catch},\Varid{r'}\mathbin{⟩}}{\susp{(\Conid{Maybe}\;\Varid{α})}}}
  \\ \ensuremath{\Varid{h}_2\mathbin{:}\mathbin{∀}\Varid{α}\;\Varid{r}\;\Varid{r'}}.\
       \ensuremath{\ann{\mathbin{⟨}\ell_{2},\Varid{r}\mathbin{⟩}\mathbin{*}\Varid{r'}}{\susp{\Varid{α}}}\mathbin{→}\Conid{P}_2\mathbin{→}\ann{\Varid{r}\mathbin{*}\mathbin{⟨}\Conid{Catch},\Varid{r'}\mathbin{⟩}}{\susp{\Conid{F}_2\;\Varid{α}}}}
  \\ \ensuremath{\Conid{H1}\mathbin{:}\Varid{h}^{\Varid{p}}\;\susp{\Conid{Nothing}} \longtwoheadmapsto \Varid{v}_{\mathrm{12}}}
  \\ \ensuremath{\Conid{H2}\mathbin{:}\Varid{h}^{\Varid{p}}\;(\Varid{h}_2^{\Varid{p}_{2}}\;\Conid{Nothing}) \longtwoheadmapsto \Varid{v}_{\mathrm{21}}}
  \\ \ensuremath{\Conid{IH}\mathbin{:}} {\left(\begin{array}{l}
     \ensuremath{\mathbin{∀}e_{0}}.\
       \ensuremath{\Varid{e}^{\prime} \longtwoheadmapsto _{\beta}^{*}\Varid{h}^{\Varid{p}}\;(\Varid{hCatch}^{()}\;(\Varid{h}_2^{\Varid{p}_{2}}\;e_{0})) \Longrightarrow }
       \ensuremath{\Varid{h}^{\Varid{p}}\;(\Varid{h}_2^{\Varid{p}_{2}}\;(\Varid{hCatch}^{()}\;e_{0})) \longtwoheadmapsto \Varid{v}_{\mathrm{21}} \Longrightarrow }
       \ensuremath{\Varid{v}_{\mathrm{12}}\sim\Varid{v}_{\mathrm{21}}}
     \end{array}\right)}
  }
  { \ensuremath{\Varid{v}_{\mathrm{12}}\sim\Varid{v}_{\mathrm{21}}} }
\end{equation*}

The goal follows from the fact that \ensuremath{\Varid{h}} and \ensuremath{\Varid{h}_2} satisfy the handler return lemma.

\end{case*}

\begin{case*}[\ensuremath{\op{op}^{\Conid{Catch}}\mathrel{=}\Varid{try\char95 catch}\;\Varid{m}_{1}\;\Varid{m}_{2}}]

The goal is

\begin{equation*}
  \inferrule*
  {  \ensuremath{\Varid{h}\mathbin{:}\mathbin{∀}\Varid{α}}.\
       \ensuremath{\ann{\Varid{ε}\mathbin{*}\mathbin{⟨}\Conid{Catch},\mathbin{⟨}\ell_{2},\Varid{ε'}\mathbin{⟩}\mathbin{⟩}}{\susp{\Varid{α}}}\mathbin{→}\Conid{P}\mathbin{→}\ann{\mathbin{⟨⟩}\mathbin{*}(\Varid{ε}\mathbin{⊕}\mathbin{⟨}\Conid{Catch},\mathbin{⟨}\ell_{2},\Varid{ε'}\mathbin{⟩⟩})}{\susp{\Conid{G}\;\Varid{α}}}}
  \\ \ensuremath{\Varid{hCatch}\mathbin{:}\mathbin{∀}\Varid{α}\;\Varid{r}\;\Varid{r'}}.\
     \ensuremath{\ann{\mathbin{⟨}\Conid{Catch},\Varid{r}\mathbin{⟩}\mathbin{*}\Varid{r'}}{\susp{\Varid{α}}}\mathbin{→}()\mathbin{→}\ann{\Varid{r}\mathbin{*}\mathbin{⟨}\Conid{Catch},\Varid{r'}\mathbin{⟩}}{\susp{(\Conid{Maybe}\;\Varid{α})}}}
  \\ \ensuremath{\Varid{h}_2\mathbin{:}\mathbin{∀}\Varid{α}\;\Varid{r}\;\Varid{r'}}.\
       \ensuremath{\ann{\mathbin{⟨}\ell_{2},\Varid{r}\mathbin{⟩}\mathbin{*}\Varid{r'}}{\susp{\Varid{α}}}\mathbin{→}\Conid{P}_2\mathbin{→}\ann{\Varid{r}\mathbin{*}\mathbin{⟨}\Conid{Catch},\Varid{r'}\mathbin{⟩}}{\susp{\Conid{F}_2\;\Varid{α}}}}
  \\ \ensuremath{\Conid{H0}\mathbin{:}\Varid{h}^{\Varid{p}}\;(\Varid{hCatch}^{()}\;(\Varid{h}_2^{\Varid{p}_{2}}\;\Conid{E}[(\op{catch}\;\Varid{m}_{1}\;\Varid{m}_{2})])) \longmapsto \Varid{e}^{\prime}}
  \and \ensuremath{\Conid{H1}\mathbin{:}\Varid{e}^{\prime} \longtwoheadmapsto \Varid{v}_{\mathrm{12}}}
  \\ \ensuremath{\Conid{H2}\mathbin{:}\Varid{h}^{\Varid{p}}\;(\Varid{h}_2^{\Varid{p}_{2}}\;(\Varid{hCatch}^{()}\;\Conid{E}[(\op{catch}\;\Varid{m}_{1}\;\Varid{m}_{2})])) \longtwoheadmapsto \Varid{v}_{\mathrm{21}}}
  \\ \ensuremath{\Conid{IH}\mathbin{:}} {\left(\begin{array}{l}
     \ensuremath{\mathbin{∀}e_{0}}.\
       \ensuremath{\Varid{e}^{\prime} \longtwoheadmapsto _{\beta}^{*}\Varid{h}^{\Varid{p}}\;(\Varid{hCatch}^{()}\;(\Varid{h}_2^{\Varid{p}_{2}}\;e_{0})) \Longrightarrow }
       \ensuremath{\Varid{h}^{\Varid{p}}\;(\Varid{h}_2^{\Varid{p}_{2}}\;(\Varid{hCatch}^{()}\;e_{0})) \longtwoheadmapsto \Varid{v}_{\mathrm{21}} \Longrightarrow }
       \ensuremath{\Varid{v}_{\mathrm{12}}\sim\Varid{v}_{\mathrm{21}}}
     \end{array}\right)}
  }
  { \ensuremath{\Varid{v}_{\mathrm{12}}\sim\Varid{v}_{\mathrm{21}}} }
\end{equation*}

Inversion on \ensuremath{\Conid{H0}} and \ensuremath{\Conid{H2}}.

\begin{equation*}
  \inferrule*
  {  \ensuremath{\Varid{h}\mathbin{:}\mathbin{∀}\Varid{α}}.\
       \ensuremath{\ann{\Varid{ε}\mathbin{*}\mathbin{⟨}\Conid{Catch},\mathbin{⟨}\ell_{2},\Varid{ε'}\mathbin{⟩}\mathbin{⟩}}{\susp{\Varid{α}}}\mathbin{→}\Conid{P}\mathbin{→}\ann{\mathbin{⟨⟩}\mathbin{*}(\Varid{ε}\mathbin{⊕}\mathbin{⟨}\Conid{Catch},\mathbin{⟨}\ell_{2},\Varid{ε'}\mathbin{⟩⟩})}{\susp{\Conid{G}\;\Varid{α}}}}
  \\ \ensuremath{\Varid{hCatch}\mathbin{:}\mathbin{∀}\Varid{α}\;\Varid{r}\;\Varid{r'}}.\
     \ensuremath{\ann{\mathbin{⟨}\Conid{Catch},\Varid{r}\mathbin{⟩}\mathbin{*}\Varid{r'}}{\susp{\Varid{α}}}\mathbin{→}()\mathbin{→}\ann{\Varid{r}\mathbin{*}\mathbin{⟨}\Conid{Catch},\Varid{r'}\mathbin{⟩}}{\susp{(\Conid{Maybe}\;\Varid{α})}}}
  \\ \ensuremath{\Varid{h}_2\mathbin{:}\mathbin{∀}\Varid{α}\;\Varid{r}\;\Varid{r'}}.\
       \ensuremath{\ann{\mathbin{⟨}\ell_{2},\Varid{r}\mathbin{⟩}\mathbin{*}\Varid{r'}}{\susp{\Varid{α}}}\mathbin{→}\Conid{P}_2\mathbin{→}\ann{\Varid{r}\mathbin{*}\mathbin{⟨}\Conid{Catch},\Varid{r'}\mathbin{⟩}}{\susp{\Conid{F}_2\;\Varid{α}}}}
  \\ \ensuremath{\Conid{H1}\mathbin{:}\Varid{e}^{\prime}\mathbin{≡}\Varid{h}^{\Varid{p}}\;((\Varid{λ}\;\Varid{x}\;\Varid{q}\;\!.\!\;\Varid{hCatch}^{\Varid{q}}\;(\Varid{h}_2^{\Varid{p}_{2}}\;\Conid{E}[\Varid{x}\mathbin{!}]))\;\susp{\begin{array}{l} \kw{match}(\Varid{hCatch}^{()}\;\Varid{m}_{1})\\\;\mid\Conid{Nothing}\mathbin{→}\Varid{m}_{2}\mathbin{!}\\\;\mid(\Conid{Just}\;\Varid{x})\mathbin{→}\Varid{x}\end{array}}\;()) \longtwoheadmapsto \Varid{v}_{\mathrm{12}}}
  \\ \ensuremath{\Conid{H2}\mathbin{:}\Varid{h}^{\Varid{p}}\;(\Varid{h}_2^{\Varid{p}_{2}}\;((\Varid{λ}\;\Varid{x}\;\Varid{q}\;\!.\!\;(\Varid{hCatch}^{\Varid{q}}\;\Conid{E}[\Varid{x}\mathbin{!}]))\;\susp{\begin{array}{l} \kw{match}(\Varid{hCatch}^{()}\;\Varid{m}_{1})\\\;\mid\Conid{Nothing}\mathbin{→}\Varid{m}_{2}\mathbin{!}\\\;\mid(\Conid{Just}\;\Varid{x})\mathbin{→}\Varid{x}\end{array}})\;()) \longtwoheadmapsto \Varid{v}_{\mathrm{21}}}
  \\ \ensuremath{\Conid{IH}\mathbin{:}} {\left(\begin{array}{l}
     \ensuremath{\mathbin{∀}e_{0}}.\
       \ensuremath{\Varid{e}^{\prime} \longtwoheadmapsto _{\beta}^{*}\Varid{h}^{\Varid{p}}\;(\Varid{hCatch}^{()}\;(\Varid{h}_2^{\Varid{p}_{2}}\;e_{0})) \Longrightarrow }
       \ensuremath{\Varid{h}^{\Varid{p}}\;(\Varid{h}_2^{\Varid{p}_{2}}\;(\Varid{hCatch}^{()}\;e_{0})) \longtwoheadmapsto \Varid{v}_{\mathrm{21}} \Longrightarrow }
       \ensuremath{\Varid{v}_{\mathrm{12}}\sim\Varid{v}_{\mathrm{21}}}
     \end{array}\right)}
  }
  { \ensuremath{\Varid{v}_{\mathrm{12}}\sim\Varid{v}_{\mathrm{21}}} }
\end{equation*}

Since

\begin{align*}
  &\ensuremath{\Varid{h}^{\Varid{p}}\;((\Varid{λ}\;\Varid{x}\;\Varid{q}\;\!.\!\;\Varid{hCatch}^{\Varid{q}}\;\susp{\Varid{h}_2^{\Varid{p}_{2}}\ \Conid{E}[\Varid{x}\mathbin{!}]})\;\susp{\begin{array}{l} \kw{match}(\Varid{hCatch}^{()}\;\Varid{m}_{1})\\\;\mid\Conid{Nothing}\mathbin{→}\Varid{m}_{2}\mathbin{!}\\\;\mid(\Conid{Just}\;\Varid{x})\mathbin{→}\Varid{x}\end{array}}\;())}
\\
  &
  \ensuremath{ \longtwoheadmapsto _{\beta}^{*}\Varid{h}^{\Varid{p}}\;(\Varid{hCatch}^{()}\;(\Varid{h}_2^{\Varid{p}_{2}}\;\Conid{E}[\susp{\begin{array}{l} \kw{match}(\Varid{hCatch}^{()}\;\Varid{m}_{1})\\\;\mid\Conid{Nothing}\mathbin{→}\Varid{m}_{2}\mathbin{!}\\\;\mid(\Conid{Just}\;\Varid{x})\mathbin{→}\Varid{x}\end{array}}\mathbin{!}]))}
\end{align*}

and

\begin{align*}
  &\ensuremath{\Varid{h}^{\Varid{p}}\;(\Varid{h}_2^{\Varid{p}_{2}}\;((\Varid{λ}\;\Varid{x}\;\Varid{q}\;\!.\!\;\Varid{hCatch}^{\Varid{q}}\;\Conid{E}[\Varid{x}\mathbin{!}])\;\susp{\begin{array}{l} \kw{match}(\Varid{hCatch}^{()}\;\Varid{m}_{1})\\\;\mid\Conid{Nothing}\mathbin{→}\Varid{m}_{2}\mathbin{!}\\\;\mid(\Conid{Just}\;\Varid{x})\mathbin{→}\Varid{x}\end{array}}\;()))}
\\
  &
  \ensuremath{ \longtwoheadmapsto _{\beta}^{*}\Varid{h}^{\Varid{p}}\;(\Varid{h}_2^{\Varid{p}_{2}}\;(\Varid{hCatch}^{()}\;\Conid{E}[\susp{\begin{array}{l} \kw{match}(\Varid{hCatch}^{()}\;\Varid{m}_{1})\\\;\mid\Conid{Nothing}\mathbin{→}\Varid{m}_{2}\mathbin{!}\\\;\mid(\Conid{Just}\;\Varid{x})\mathbin{→}\Varid{x}\end{array}}\mathbin{!}]))}
\end{align*}

the IH applies, and the goal follows.

\end{case*}

\end{proof}

%
%

\end{document}